\def\beq{\begin{equation}}
\def\eeq{\end{equation}}
\def\bea{\begin{eqnarray}}
\def\eea{\end{eqnarray}}
\def\beann{\begin{eqnarray*}}
\def\eeann{\end{eqnarray*}}
\let\a=\alpha \let\be=\beta \let\g=\gamma \let\de=\delta
\let\e=\varepsilon \let\z=\zeta  \let\th=\theta
\let\dh=\vartheta \let\k=\kappa \let\la=\lambda \let\m=\mu
\let\n=\nu \let\x=\xi \let\p=\pi \let\r=\rho \let\s=\sigma
\let\om=\omega \let\ps=\psi
\let\ph=\varphi \let\Ph=\phi \let\PH=\Phi \let\Ps=\Psi
 \let\Si=\Sigma 
\let\La=\Lambda \let\G=\Gamma \let\D=\Delta
\let\qd=\quad \let\qqd=\qquad 
\def\epp{\, .}
\def\epc{\, ,}
\def\tst#1{{\textstyle #1}}
\def\dst#1{{\displaystyle #1}}
\theoremstyle{plain}
\newtheorem{theorem}{Theorem}
\newtheorem*{theorem*}{Theorem}
\newtheorem{lemma}{Lemma}
\newtheorem*{lemma*}{Lemma}
\newtheorem*{proposition*}{Proposition}
\newtheorem{corollary}{Corollary}
\newtheorem*{corollary*}{Corollary}
\newtheorem{conjecture}{Conjecture}
\newtheorem*{conjecture*}{Conjecture}
\theoremstyle{definition}
\newtheorem*{remark}{Remark}
\newtheorem*{question*}{Question}
\def\2{\frac{1}{2}} \def\4{\frac{1}{4}}
\def\6{\partial}
\def\+{\dagger}
\def\<{\langle} \def\>{\rangle}
\def\llb{\llbracket} \def\rrb{\rrbracket}
\let\auf=\uparrow \let\ab=\downarrow
\def\i{{\rm i}}
\def\rd{{\rm d}}
\def\ctg{\, {\rm ctg}\,}
\DeclareMathOperator{\re}{e}
\DeclareMathOperator{\sh}{sh}
\DeclareMathOperator{\tr}{tr}
\DeclareMathOperator{\Int}{Int}
\DeclareMathOperator{\Ext}{Ext}
\DeclareMathOperator{\End}{End}
\DeclareMathOperator{\id}{id}
\DeclareMathOperator{\ad}{ad}
\DeclareMathOperator{\res}{res}
\DeclareMathOperator{\card}{card}
\renewcommand{\Re}{\operatorname{Re}}
\renewcommand{\Im}{\operatorname{Im}}
\def\bv{\mathbf{b}}
\def\cv{\mathbf{c}}
\def\nuv{{\boldsymbol{\nu}}}
\def\siv{\boldsymbol{\sigma}}
\def\xiv{\boldsymbol{\xi}}
\def\fa{\mathfrak{a}}
\def\fe{\mathfrak{e}}
\DeclareMathOperator{\fz}{\mathfrak{z}}
\def\ks{h_R}
\renewcommand{\ks}{\k}
\newcommand{\nex}{\ell}
\begin{document}

\thispagestyle{empty}

\begin{center}

{\Large \bf
Thermal form-factor expansion of the dynamical two-point
functions of local operators in integrable quantum chains}

\vspace{10mm}

{\large
Frank G\"{o}hmann,\textsuperscript{1)}\footnote[2]{e-mail: goehmann@uni-wuppertal.de}
Karol K. Kozlowski\textsuperscript{2)}\footnote[1]{e-mail: karol.kozlowski@ens-lyon.fr} and
Mikhail D. Minin\textsuperscript{1)}\footnote[3]{e-mail: minin@uni-wuppertal.de}
}\\[3.5ex]
\textsuperscript{1)}Fakult\"at f\"ur Mathematik und Naturwissenschaften,\\
Bergische Universit\"at Wuppertal,
42097 Wuppertal, Germany\\[1.0ex]
\textsuperscript{2)}Univ Lyon, ENS de Lyon, Univ Claude Bernard,\\ CNRS,
Laboratoire de Physique, F-69342 Lyon, France\\[1.0ex]

\vspace{50mm}


{\large {\bf Abstract}}

\end{center}

\begin{list}{}{\addtolength{\rightmargin}{9mm}
               \addtolength{\topsep}{-5mm}}
\item
Evaluating a lattice path integral in terms of spectral data and
matrix elements pertaining to a suitably defined quantum transfer matrix,
we derive form-factor series expansions for the dynamical two-point
functions of arbitrary local operators in fundamental Yang-Baxter
integrable lattice models at finite temperature. The summands in
the series are parameterised by solutions of the Bethe Ansatz
equations associated with the eigenvalue problem of the quantum
transfer matrix. We elaborate on the example of the XXZ chain for
which the solutions of the Bethe Ansatz equations are sufficiently
well understood in certain limiting cases. We work out in detail the
case of the spin-zero operators in the antiferromagnetic massive regime
at zero temperature. In this case the thermal form-factor series
turn into series of multiple integrals with fully explicit
integrands. These integrands factorize into an operator-dependent
part, determined by the so-called Fermionic basis, and a part
which we call the universal weight as it is the same for all
spin-zero operators. The universal weight can be inferred
from our previous work. The operator-dependent part is rather
simple for the most interesting short-range operators. It is
determined by two functions $\r$ and $\om$ for which we
obtain explicit expressions in the considered case. As an
application we rederive the known explicit form-factor series
for the two-point function of the magnetization operator and
obtain analogous expressions for the magnetic current and the
energy operators.
\end{list}

\clearpage

\section{Introduction}
The main objective of this work is to develop a more general picture
of the thermal form-factor approach \cite{DGK13a} to dynamical
two-point functions of fundamental Yang-Baxter integrable lattice
models \cite{GKKKS17}. For the specific example of the XXZ spin chain
we further explore its connection with the Fermionic basis introduced
by H. Boos et al.~\cite{BJMST08a,JMS08,BJMS09a} and obtain
explicit form-factor series for the dynamical two-point functions of
some spin-zero operators in the antiferromagnetic massive regime in
the low-temperature limit.

Form factor expansions have been successfully used in order
to study integrable quantum field theories and lattice models.
Still, there is a considerable difference in attitude towards
the two types of theories, and the methods that have been employed
to study them are partially complementary. In the quantum field theory
setting form factors are solutions of a set of functional equations.
The entirety of all possible solutions fixes the matrix elements
of all operators that can be defined in the theory and determines
the theory in this sense. For this reason the functional
equations are also called form-factor axioms. After a number
of explorative studies by various authors \cite{KaWe78,Smirnov86}
they were formulated in their full glory by Kirillov and
Smirnov \cite{KiSm87} in 1987 (see also \cite{Smirnov92,BFKZ99}).
The axioms determine the form factors directly in infinite volume.
The calculation of finite-size corrections to the form factors
from the axioms requires an indirect and somewhat intricate
reasoning (see e.g.~\cite{BLSV23pp} and literature cited therein).

The integrable lattice models can be interpreted as models of
many-body quantum mechanics. Unlike the integrable quantum field
theories they have spaces of states and spaces of local operators
that can be constructed explicitly from tensor products of
finite dimensional vector spaces. Their integrability is typically
related to the existence of families of commuting transfer
matrices of underlying vertex-models that commute with their
Hamiltonians and can be diagonalized by means of the algebraic
Bethe Ansatz. Accordingly, form factors can be directly calculated
for finite-size systems in a basis of Bethe Ansatz eigenstates
\cite{Slavnov89,KMT99a}.  The result are expressions for form
factors parameterised by solutions of the Bethe Ansatz equations.
These have turned out to be useful for the numerical calculation
of correlation functions \cite{CaMa05,SST04}, but it is hard to
obtain sufficiently explicit expressions in the infinite volume limit 
\cite{Slavnov90,IKMT99,KKMST09b, KKMST11a,DGKS15a,Kozlowski17,KiKu19}.
Nevertheless, the form-factor series based on the cited works
were useful for the analysis of the long-time, large-distance
asymptotics of the two-point functions of the simplest local
operators in the XXZ chain \cite{KKMST11b,KKMST12,Kozlowski18,%
Kozlowski19}.

A method for the calculation of finite temperature dynamical
correlation functions of integrable lattice models directly in
the infinite volume limit was devised in \cite{GKKKS17}. It
utilizes the notion of a quantum transfer matrix that was
originally introduced \cite{Suzuki85} for the calculation of
the thermodynamic properties of quantum spin chains \cite{Kluemper93}
and later turned out to be useful for the calculation of their
finite temperature reduced density matrix \cite{GKS04a,GKS05}.
More recently it was observed \cite{DGKS15b,FGK23pp} that the
solution sets of the Bethe Ansatz equations pertaining to the
excited states of the quantum transfer matrix of the XXZ chain
in a finite magnetic field and in the low-temperature limit
are particularly simple in that they do not involve string
patterns in the complex plane. For the model in the
antiferromagnetic massive regime, this was the precondition for
deriving a form-factor series for the dynamical two-point functions
of the local magnetization in which the $(n + 1)$th term comprises
the contribution of $2n$ elementary excitations in a $2n$-fold
integral over a form-factor density entirely expressed in terms of
known special functions \cite{BGKS21a,BGKSS21}. This is to
be contrasted with previous works in which the the form-%
factor densities were given in terms of multiple integrals
\cite{JiMi95}, in terms of multiple residues \cite{DGKS15a},
or in terms of Fredholm determinants \cite{DGKS16b}.

Integrable quantum field theories and integrable lattice models
are not independent of one another. Beyond the scope of the
axiomatic method integrable quantum field theories appear
as scaling limits in certain subsectors of the integrable
lattice models. This point of view is very fruitful. It may,
for instance, be utilized to study the field theory in finite
volume. In conjuction with the Fermionic basis approach,
which was originally developed for the XXZ chain, \textit{viz.},
on the lattice \cite{BJMST08a, JMS08,BJMS09a,JMS21}, it
lead to novel insights into the structure of the form factors
of the Sine-Gordon model \cite{JMS10,JMS11,JMS11c,JMS11b}.
On the other hand, the form-factor axioms may be interpreted
as a special case of the $q$-Knizhnik-Zamolodchikov
equations \cite{Smirnov92b} and as such are linked to the
representation theory of the quantum affine algebra
$U_q (\widehat{\mathfrak{sl}_2})$. Hence, it is not surprising 
that the form factors of the infinite XXZ chain in its
antiferromagnetic massive regime satisfy the form-factor
axioms \cite{JkMQ94}, a fact that has not received much attention
so far. Within the Fermionic basis approach a \emph{reduced}
q-Knizhnik-Zamolodchikov equation (rqKZ) \cite{BJMST04b,JMS21}
plays a prominent role in the characterization of a
generalized reduced density matrix of the XXZ chain. This
equation was generalized to the finite-temperature (finite
inhomogeneous vertex-model) setting in \cite{AuKl12}. We shall
see below that the latter formulation can be further generalized
to characterize properly normalized form factors on the lattice.

Our work is organized as follows. In Section~\ref{sec:two-point_fun}
we define a lattice path-integral representation of the dynamical
correlation functions of two arbitrary local operators in
fundamental Yang-Baxter integrable lattice models and express
it as a form-factor series in which every summand is defined entirely
in terms of Bethe ansatz data. Section~\ref{sec:spin_zero_xxz}
considers the XXZ chain as an example. We introduce properly
normalized form factors of spin-zero operators which we then
first characterize by their properties, most notably reduction
and a generalized version of the discrete rqKZ equation of
\cite{AuKl12}. Using reduction and exchange symmetry we then
obtain the form factors of the magnetization and of the magnetic 
current in terms of ratios of eigenvalues of the quantum transfer
matrix. For the form factors of general spin-zero operators we 
obtain multiple-integral representations of the same form
as for the generalized reduced density matrix \cite{BoGo09}.
We then show that the double integral factorizes and establish
the relation of the properly normalized form factors with the
Fermionic basis, in particular with the JMS theorem \cite{JMS08}.
In Section~\ref{sec:massive_low_T_xxz} we consider the zero-temperature
limit of the form-factor expansions of the two-point functions of
spin-zero operators for the XXZ chain in the antiferromagnetic
massive regime. In this case the normalization factor turns
into a `universal weight function' that can be inferred in
explicit form from previous work \cite{BGKS21a,BGKSS21}. The
remaining form-factor densities become explicit as well in
the simplest cases of the spin and spin current operators. We
present our conclusions in Section~\ref{sec:conclusions}. A few
technical details are deferred to a number of appendices.

\section{Two-point functions of elementary blocks}
\label{sec:two-point_fun}
Extending our paper \cite{GKKKS17} we would like to derive
in this section a form-factor series representation for the most
general dynamical two-point function of two local operators in
a fundamental Yang-Baxter integrable model.

\subsection{Fundamental lattice models}
Fundamental models can be interpreted as quantum chains with
local degrees of freedom in a finite dimensional Hilbert
space ${\cal H} = {\mathbb C}^d$, equipped with the canonical
Hermitian scalar product. We fix a basis
\begin{equation}
     \{e_\a\}_{\a = 1}^d \subset {\mathbb C}^d \epp
\end{equation}
It follows that the set $\{e_\a^\be\}_{\a, \be = 1}^d \subset
\End {\mathbb C}^d$, defined by
\begin{equation}
     e_\a^\be e_\g = \de_\g^\be e_\a
\end{equation}
for $\a, \be, \g = 1, \dots, d$, is a basis of $\End {\mathbb C}^d$.
In this basis the identity $I_d \in \End {\mathbb C}^d$ is
expanded as $I_d = e_\a^\a$. Here and in the following summation
over double Greek indices is implied.

The space of states of the models under consideration is the
tensor product space ${\cal H}_L = ({\mathbb C}^d)^{\otimes L}$.
The number of factors $L$ is called the number of lattice
sites or the length of the quantum chain.
The embedding of the basis of elementary
endomorphisms $\{e_\a^\be\}_{\a, \be = 1}^d$ into $\End
({\mathbb C}^d)^{\otimes L}$ is defined by
\begin{equation}
     {e_j}_\a^\be = I_d^{\otimes (j - 1)} \otimes e_\a^\be
                    \otimes I_d^{\otimes (L - j)} \epc
\end{equation}
where $j = 1, \dots, L$. This definition allows us to introduce 
`$m$-site operators'. For every $A \in \End ({\mathbb C}^d)^{\otimes m}$,
$m \le L$, and $\{j_1, \dots, j_m\} \subset \{1, \dots, L\}$ we set
\begin{equation}
     A_{j_1, \dots, j_m} = A^{\a_1 \dots \a_m}_{\be_1 \dots \be_m}
                         {e_{j_1}}_{\a_1}^{\be_1} \dots {e_{j_m}}_{\a_m}^{\be_m} \epp
\end{equation}
We say that $A$ acts non-trivially only on sites $j_1, \dots, j_m$.

A fundamental integrable lattice model is defined in terms
of a finite dimensional solution $R: {\mathbb C}^2
\rightarrow \End {\mathbb C}^d \otimes {\mathbb C}^d$
of the Yang-Baxter equation
\begin{equation} \label{ybe}
     R_{1,2} (\la, \mu) R_{1,3} (\m, \n) R_{2,3} (\m, \n) =
	R_{2,3} (\m, \n) R_{1,3} (\m, \n) R_{1,2} (\la, \mu) \epp
\end{equation}
Solutions of the Yang-Baxter equation are called $R$-matrices. In
addition to the Yang-Baxter equation we require the $R$-matrix to
satisfy the following conditions,
\begin{subequations}
\label{fund}
\begin{align}
     & R(\la, \la) = P && \text{regularity} \epc \label{reg} \\[1ex]
     & R_{1, 2} (\la, \m) R_{2, 1} (\m, \la) = \id && \text{unitarity} \epc \\[1ex]
     & R^t (\la, \m) = R (\la, \m) && \text{symmetry} \epc \label{rsym}
\end{align}
\end{subequations}
where $P = e^\a_\be \otimes e^\be_\a$ is the transposition
matrix and $\id $ the identity.

In the following we shall often take implicit recourse to the graphical
representation
\begin{equation} \label{drawr}
     R^{\a \g}_{\be \de} (\la, \m) \: = \:
        \text{\raisebox{-48.5pt}{\includegraphics[width=.24\textwidth]{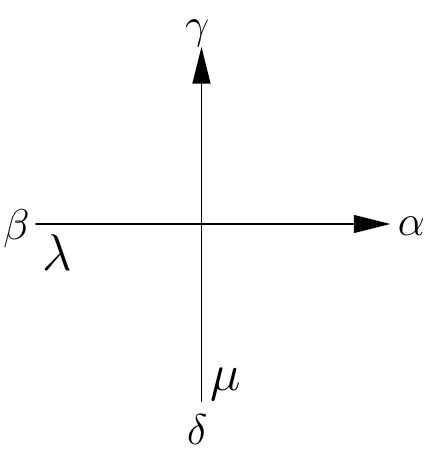}}}
\end{equation}
of the $R$-matrix, where $\a, \be, \g, \de = 1, \dots, d$
(for $d = 2$ also $\pm$ or $\auf, \ab$).

We define a monodromy matrix
\begin{equation} \label{monoperp}
     T_{\perp} (\la) = R_{0, L} (\la, 0) \dots R_{0,1} (\la, 0)
\end{equation}
and the associated row-to-row transfer matrix
\begin{equation}
     t_\perp (\la) = \tr_0 \{T_{\perp} (\la)\} \epp
\end{equation}
The corresponding nearest-neighbour quantum chain Hamiltonian is
\begin{equation} \label{fundham}
     H_0 = \frac{1}{\k} t_\perp' (0) t_\perp^{-1} (0) \epc
\end{equation}
where $\k \in {\mathbb C}$ can be chosen to our convenience, e.g.\
in such a way that $H_0$ is Hermitian. Along with $t_\perp$ we consider
$\overline{t}_\perp$ defined as
\begin{equation}
     \overline{t}_\perp (\la) = \tr_0 \{T_{\perp}^{-1} (\la)\}
        = \tr_0 \{R_{1, 0} (0, \la) \dots R_{L, 0} (0, \la)\} \epp
\end{equation}
It follows that
\begin{equation}
     \overline{t}_\perp (0) = t_\perp^{-1} (0) \epc \qd
     H_0 = - \frac{1}{\k} t_\perp (0) \overline{t}_\perp' (0) \epp
\end{equation}
For $N \in 2 {\mathbb N}$ let
\begin{equation} \label{ftnapprox}
     \r_{N, L} (\be) =
        \Bigl(t_\perp \bigl(- \tst{\frac{\be}{\k N}}\bigr)
	      \overline{t}_\perp \bigl(\tst{\frac{\be}{\k N}}\bigr)\Bigr)^\frac{N}{2}
	      \epp
\end{equation}
Then
\begin{equation} \label{rhoop}
     \lim_{N \rightarrow \infty} \r_{N, L} (\be) = \re^{- \be H_0} \epp
\end{equation}

For finite $N$ the product of transfer matrices $\r_{N, L} (\be)$
is an approximation to the operator $\re^{- \be H_0}$, which can
be interpreted as the statistical operator, if $1/\be = T$ is the
temperature, or as the time evolution operator, if $\be = \i t$,
where $t$ is the time. We shall call $N$ the Trotter number
and the limit $N \rightarrow \infty$ the Trotter limit. The
significance of (\ref{ftnapprox}), (\ref{rhoop}) is that these
formulae realize the operators that determine the time evolution
and the thermal average for a given Hamiltonian in terms of
the $R$-matrix that encodes its integrable structure.

\subsection{\boldmath $U(1)$ symmetry}
Having in mind the application to the XXZ chain we would like to include
external fields related to a $U(1)$ symmetry of the model. Let
$\hat \ph \in \End {\mathbb C}^d$ such that $\th (\a) = \re^{\a \hat \ph}$,
$\a \in {\mathbb C}$, satisfies
\begin{equation} \label{uonesym}
     R_{1, 2} (\la, \m) \th_1 (\a) \th_2 (\a) =
        \th_2 (\a) \th_1 (\a) R_{1, 2} (\la, \m) \epp
\end{equation}
Then
\begin{equation}
     [t_\perp (\la), \hat \PH] = 0 \epc
\end{equation}
where $\hat \PH = \sum_{j=1}^L \hat \ph_j$, and we may consider the
Hamiltonian
\begin{equation}
     H = H_0 - h \hat \PH
\end{equation}
with $h \in {\mathbb R}$ instead of $H_0$, as it also commutes with
$t_\perp (\la)$.

\subsection{Local operators}
Sets of consecutive integers will be denoted $\llbracket
j, k \rrbracket$, where $j, k \in {\mathbb Z}$, $j \le k$. Below
we shall consider dynamical correlation functions of two local
operators
\begin{equation} \label{defxy}
     X_{\llbracket 1, \ell\rrbracket} =
        x_1^{(1)} \cdots x_\ell^{(\ell)} \epc \qd
     Y_{\llbracket 1, r\rrbracket} = y_1^{(1)} \cdots y_r^{(r)} \epc
\end{equation}
where $x^{(j)}, y^{(k)} \in \End {\mathbb C}^d$. The numbers $\ell$ and
$r$ will be called the lengths of $X$ and $Y$. We shall assume that these
operators have fixed $U(1)$ charge (or `spin') $s \in {\mathbb C}$,
\begin{equation} \label{spinxy}
     [\hat \PH, X_{\llbracket 1, \ell\rrbracket}] =
        s(X) X_{\llbracket 1, \ell\rrbracket} \epc \qd
     [\hat \PH, Y_{\llbracket 1, r\rrbracket}] =
        s(Y) Y_{\llbracket 1, r\rrbracket} \epp
\end{equation}

For the derivation of the form-factor series we will need
the following representation of the operator $X_{\llb 1, \ell\rrb}$.
\begin{lemma} \label{lem:invinv}
Inversion formula \cite{KMT99a,GoKo00,GKKKS17}.
\begin{equation} \label{sotqip}
     X_{\llbracket 1, \ell\rrbracket} =
        t_\perp^\ell (0) \lim_{\x_j \rightarrow 0}
	\tr_0 \{x^{(\ell)}_0 T_{\perp}^{-1} (\x_\ell)\} \dots
	\tr_0 \{x^{(1)}_0 T_{\perp}^{-1} (\x_1)\} \epp
\end{equation}
\end{lemma}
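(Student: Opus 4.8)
The plan is to reduce the statement of Lemma~\ref{lem:invinv} to the single-site inversion identity and then iterate, using the regularity property (\ref{reg}) of the $R$-matrix together with the definition (\ref{monoperp}) of the monodromy matrix.

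First I would establish the one-site version: for a single local operator $x^{(1)}_1 \in \End \mathbb{C}^d$ acting on site $1$, one has $x^{(1)}_1 = t_\perp(0)\, \lim_{\x \to 0} \tr_0\{x^{(1)}_0 T_\perp^{-1}(\x)\}$. To see this, recall that by regularity $R_{0,j}(0,0) = P_{0,j}$, the transposition of spaces $0$ and $j$, so that $T_\perp(0) = P_{0,L}\cdots P_{0,1}$ and hence $\tr_0\{x^{(1)}_0 T_\perp^{-1}(0)\}$ can be evaluated by pushing the string of permutations through. Concretely, $T_\perp^{-1}(0) = P_{0,1}\cdots P_{0,L}$, and the permutations successively relabel the auxiliary space so that the partial trace over space $0$ lands the operator $x^{(1)}_0$ precisely on site $1$, giving $\tr_0\{x^{(1)}_0 T_\perp^{-1}(0)\} = t_\perp^{-1}(0)\, x^{(1)}_1$; multiplying by $t_\perp(0)$ from the left yields the claim. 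The limit $\x \to 0$ is harmless because $T_\perp$ is polynomial (hence continuous) in its spectral parameter.

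Next I would iterate. The key algebraic input is that $\tr_0\{x^{(j)}_0 T_\perp^{-1}(\x_j)\}$ commutes with any operator acting on sites disjoint from the support one gets after the limit, and more importantly that when we compose
\begin{equation*}
     \lim_{\x_j \to 0} \tr_0\{x^{(j)}_0 T_\perp^{-1}(\x_j)\}
\end{equation*}
for $j = \ell, \ell-1, \dots, 1$ from left to right, each factor contributes $t_\perp^{-1}(0)$ times the corresponding single-site operator $x^{(j)}_j$, and the factors $x^{(j)}_j$ for distinct $j$ commute since they act on distinct sites. Hence the $\ell$-fold product equals $\bigl(t_\perp^{-1}(0)\bigr)^{\ell}$ times $x_1^{(1)}\cdots x_\ell^{(\ell)}$ up to the ordering of the transfer-matrix factors interleaved with the local operators; one checks that the $t_\perp^{-1}(0)$'s can all be collected to the right (or left) because $t_\perp(0)$ acts as a global shift-type operator and, after taking the limit, the local operator produced in step $j$ sits on site $j$, which is fixed. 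Multiplying the whole expression by $t_\perp^\ell(0)$ from the left then cancels the $\ell$ copies of $t_\perp^{-1}(0)$ and leaves exactly $X_{\llbracket 1,\ell\rrbracket} = x_1^{(1)}\cdots x_\ell^{(\ell)}$.

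The main obstacle I anticipate is bookkeeping the order in which the $t_\perp^{-1}(0)$ factors and the emerging local operators appear, and justifying that they can be disentangled: one must track carefully that in the $j$-th step $t_\perp^{-1}(0) x^{(j)}_j$ is produced but that $t_\perp(0)$ conjugates operators on site $k$ to operators on site $k$ (it is built from permutations at $\la=0$), so the product telescopes correctly. A clean way to handle this, which I would adopt, is to argue graphically using the representation (\ref{drawr}): each factor $\tr_0\{x^{(j)}_0 T_\perp^{-1}(\x_j)\}$ is a closed auxiliary line carrying $x^{(j)}$, the limit $\x_j\to 0$ collapses it by regularity onto a single vertical line at site $j$ (times an inverse transfer matrix loop), and stacking these pictures for $j=\ell,\dots,1$ together with $\ell$ copies of $t_\perp(0)$ visibly reconstructs the operator $x_1^{(1)}\cdots x_\ell^{(\ell)}$. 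This is the argument already used in \cite{KMT99a,GoKo00,GKKKS17}, and I would simply adapt it to the present conventions.
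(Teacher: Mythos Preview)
Your overall strategy---prove the one-site identity using regularity and then iterate---is the standard one, and your one-site computation is correct: $\tr_0\{x_0\, T_\perp^{-1}(0)\} = t_\perp^{-1}(0)\, x_1$. However, the iteration step as you describe it is wrong in two places that happen to conspire.

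First, $\tr_0\{x^{(j)}_0\, T_\perp^{-1}(0)\}$ does \emph{not} produce $t_\perp^{-1}(0)\, x^{(j)}_j$; it produces $t_\perp^{-1}(0)\, x^{(j)}_1$, with the operator always landing on site~$1$, regardless of $j$. The index $j$ on $x^{(j)}$ is just a label for which local operator you are inserting, not a site label. Second, and this is the more serious misconception, $t_\perp(0)$ does \emph{not} conjugate $x_k$ to $x_k$. It is the one-site cyclic shift, so $t_\perp(0)\, x_k\, t_\perp^{-1}(0) = x_{k+1}$, equivalently $x_k\, t_\perp^{-1}(0) = t_\perp^{-1}(0)\, x_{k+1}$. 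That shift is precisely what distributes the operators over the segment $\llb 1,\ell\rrb$: starting from
\[
   t_\perp^{-1}(0)\, x^{(\ell)}_1 \cdot t_\perp^{-1}(0)\, x^{(\ell-1)}_1 \cdots t_\perp^{-1}(0)\, x^{(1)}_1
\]
and pushing each $t_\perp^{-1}(0)$ to the left through the accumulated local operators, one obtains $t_\perp^{-\ell}(0)\, x^{(\ell)}_\ell \cdots x^{(1)}_1$, and the prefactor $t_\perp^\ell(0)$ cancels the shifts. Your claim that ``$t_\perp(0)$ conjugates operators on site $k$ to operators on site $k$ (it is built from permutations at $\la = 0$)'' is exactly backwards; being built from permutations is what makes it a shift. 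The graphical argument you invoke at the end would rescue the proof, but the algebraic narrative preceding it needs to be rewritten along the lines above.
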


\subsection{\boldmath The $\s$-staggered monodromy matrix}
\label{sec:sigmastagg}
Fix $M \in {\mathbb N}$. For $j \in \llb 0, M\rrb$ let
$V_j = {\mathbb C}^d$. For $j \in \llb 1, M\rrb$ fix
$\s_j \in \{-1 , 1\}$, $\n_j \in {\mathbb C}$. Let
$\siv = (\s_1, \dots, \s_M)$, $\nuv = (\n_1, \dots, \n_M)$ and
\begin{equation}
     R_{0, j}^{(\s_j)} (\la, \n_j) =
        \begin{cases}
	   R_{0, j} (\la, \n_j) & \text{if $\s_j = 1$} \\[1ex]
	   R_{j, 0}^{t_1} (\n_j, \la) & \text{if $\s_j = - 1$,}
        \end{cases}
\end{equation}
where $t_1$ denotes the transposition with respect to the first
space $R$ is acting on. By definition the $\siv$-staggered monodromy
matrix $T (\la|\siv, \nuv, h) \in \End \bigl(\bigotimes_{j=0}^M V_j\bigr)$
is
\begin{multline} \label{stagmono}
     T (\la|\siv, \nuv, h) =
        \th_0 (h/T) R_{0, M}^{(\s_M)} (\la, \n_M) \dots
	   R_{0, 1}^{(\s_1)} (\la, \n_1) \\[-1ex] = \th_0 (h/T)
        \prod_{j \in \llbracket 1, M\rrbracket}^\dst{\curvearrowleft}
        R_{0, j}^{(\s_j)} (\la, \n_j) \epc
\end{multline}
The arrow above the product indicates descending order.
\begin{lemma} \label{lem:staginv}
Inversion formula in the $\siv$-staggered case.
Let $t (\la|\siv, \nuv, h) = \tr_0 \{T (\la|\siv, \nuv, h)\}$ be the transfer
matrix that is associated with $T (\la|\siv, \nuv, h)$. Then, for any
$x \in \End \bigl({\mathbb C}^d \bigr)$ and any $j \in \llbracket
1, M \rrbracket$ such that $\s_j = 1$,
\begin{equation} \label{inversestagg}
     x_j = \biggl[ \prod_{k=1}^{j-1} t^{\s_k} (\n_k |\siv, \nuv, h) \biggr]
           \tr_0 \{x_0\, T(\n_j|\siv, \nuv, h)\}
           \biggl[ \prod_{k=1}^j t^{- \s_k} (\n_k |\siv, \nuv, h) \biggr] \epc
\end{equation}
provided that $t (\n_k|\siv, \nuv, h)$ is invertible for $k = 1, \dots, j$.
\end{lemma}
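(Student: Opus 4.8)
The approach is to evaluate the $\siv$-staggered monodromy matrix at $\la = \n_j$ via regularity and then to remove the residual $R$-matrix dressing by conjugating with the transfer matrices at the inhomogeneities $\n_1, \dots, \n_{j-1}$; this is the strategy of the standard proof of the inversion formula (Lemma~\ref{lem:invinv}), the only new ingredient being the factors with $\s_k = -1$. Concretely, since $\s_j = 1$, \eqref{reg} gives $R_{0,j}^{(\s_j)}(\n_j, \n_j) = R_{0,j}(\n_j, \n_j) = P_{0,j}$; splitting the ordered product in \eqref{stagmono} at the $j$th factor, $T(\n_j|\siv, \nuv, h) = \th_0(h/T)\, \CR^{+}\, P_{0,j}\, \CR^{-}$, where $\CR^{+}$ is the product of the factors carrying index $> j$, acting only on $V_0$ and $V_{j+1}, \dots, V_M$, and $\CR^{-}$ the product of those carrying index $< j$, acting only on $V_0$ and $V_1, \dots, V_{j-1}$. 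Since neither $\CR^{\pm}$ acts on $V_j$, one may intertwine $P_{0,j}$ to the far left using $B\, P_{0,j} = P_{0,j}\, \widetilde{B}$ (with $\widetilde{B}$ obtained from $B$ by relabelling $0 \mapsto j$, valid for $B$ not acting on $V_j$); pulling the factors trivial on $V_0$ out of the trace and using $\tr_0 \{ P_{0,j}\, \CR^{-} \} = \CR^{-}|_{0 \mapsto j}$, one obtains
\begin{equation*}
     \tr_0 \{ x_0\, T(\n_j|\siv, \nuv, h) \} = \CR^{-}_{j}\, x_j\, \th_j(h/T)\, \CR^{+}_{j} \epc \qd
     t(\n_j|\siv, \nuv, h) = \CR^{-}_{j}\, \th_j(h/T)\, \CR^{+}_{j} \epc
\end{equation*}
with $\CR^{-}_{j} = R_{j,j-1}^{(\s_{j-1})}(\n_j, \n_{j-1}) \dots R_{j,1}^{(\s_1)}(\n_j, \n_1)$ and $\CR^{+}_{j} = R_{j,M}^{(\s_M)}(\n_j, \n_M) \dots R_{j,j+1}^{(\s_{j+1})}(\n_j, \n_{j+1})$. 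Consequently, whenever $t(\n_j|\siv, \nuv, h)$ is invertible,
\begin{equation*}
     \tr_0 \{ x_0\, T(\n_j|\siv, \nuv, h) \}\, t^{-1}(\n_j|\siv, \nuv, h) = \CR^{-}_{j}\, x_j\, ( \CR^{-}_{j} )^{-1} \epp
\end{equation*}

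Next one uses that the $\siv$-staggered transfer matrices mutually commute. The Yang--Baxter equation \eqref{ybe}, relation \eqref{uonesym} for the $\th_0$-factor and --- for the $\s_m = -1$ factors --- the symmetry $R^t = R$ combine with unitarity into the $RTT$-relation $R_{0, 0'}(\la, \m)\, T_0(\la|\siv, \nuv, h)\, T_{0'}(\m|\siv, \nuv, h) = T_{0'}(\m|\siv, \nuv, h)\, T_0(\la|\siv, \nuv, h)\, R_{0, 0'}(\la, \m)$, whence $[\, t(\la|\siv, \nuv, h),\, t(\m|\siv, \nuv, h)\,] = 0$. Setting $\Lambda_{j-1} := \prod_{k=1}^{j-1} t^{\s_k}(\n_k|\siv, \nuv, h)$ --- invertible under the stated hypothesis --- and using commutativity together with $\s_j = 1$, the right-hand side of \eqref{inversestagg} collapses to $\Lambda_{j-1}\, \CR^{-}_{j}\, x_j\, (\CR^{-}_{j})^{-1}\, \Lambda_{j-1}^{-1}$. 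Thus \eqref{inversestagg} is equivalent to the assertion that $\Lambda_{j-1}\, \CR^{-}_{j}$ commutes with every $x_j \in \End {\mathbb C}^d$ inserted at site $j$, i.e.\ that $\Lambda_{j-1}\, \CR^{-}_{j}$ acts as the identity on $V_j$.

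This last assertion is established by induction on $j$, in the sharper form that $\Lambda_{j-1}\, \CR^{-}_{j}$ is an explicit ordered product of the operators $\th_k(h/T)$, $k < j$, and of (possibly partially transposed) $R$-matrices attached to pairs $(V_k, V_m)$ with $1 \le k < j < m \le M$ --- none of which acts on $V_j$, so that triviality on $V_j$ is manifest. For the inductive step one writes $\Lambda_{j-1} = \Lambda_{j-2}\, t^{\s_{j-1}}(\n_{j-1}|\siv, \nuv, h)$ and evaluates the last factor at $\n_{j-1}$ by the reduction of the first paragraph; the factors of $\CR^{-}_{j}$ are then brought into adjacency by the Yang--Baxter equation \eqref{ybe} and successively cancelled against factors supplied by the transfer matrices by unitarity. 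For the sites with $\s_k = -1$ the evaluation at $\n_k$ produces partially transposed permutations at the interface, and the analogous cancellations additionally invoke $R^t = R$. I expect this bookkeeping --- especially keeping track of the transposed $R$-matrices contributed by the $\s_k = -1$ sites --- to be the main technical obstacle; everything else is a routine adaptation of the argument behind Lemma~\ref{lem:invinv}.
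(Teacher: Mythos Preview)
Your reduction in the first paragraph is correct: using regularity at $\la=\n_j$ and the permutation trick you indeed obtain
$\tr_0\{x_0 T(\n_j)\}\,t^{-1}(\n_j) = {\cal R}^-_j\,x_j\,({\cal R}^-_j)^{-1}$,
and by commutativity of the $\siv$-staggered transfer matrices the right-hand side of \eqref{inversestagg} becomes
$\La_{j-1}\,{\cal R}^-_j\,x_j\,({\cal R}^-_j)^{-1}\,\La_{j-1}^{-1}$.
So the lemma does reduce to showing that $\La_{j-1}{\cal R}^-_j$ acts trivially on $V_j$.

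The gap is in the second half: the inductive step you sketch is exactly the nontrivial part of the proof, and you do not carry it out --- you flag it yourself as ``the main technical obstacle''. The difficulty is real: the inductive hypothesis concerns $\La_{j-2}{\cal R}^-_{j-1}$, which by your stronger statement is built from $R$-matrices on pairs $(V_k,V_m)$ with $k<j-1<m$; these can still involve $V_j$, so one must verify that after inserting the explicit form of $t^{\s_{j-1}}(\n_{j-1})$ the residual $V_j$-factors coming from $\La_{j-2}{\cal R}^-_{j-1}$ are cancelled, via Yang--Baxter moves, against those of ${\cal R}^-_{j,\text{rest}}$. This works, but keeping track of the ordered products --- especially when $\s_{j-1}=-1$ and the transfer matrix is evaluated via the partially transposed permutation --- is precisely what you have not written down.

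The paper avoids this bookkeeping by introducing a cyclically shifted monodromy matrix
\[
   T_{0;j}(\la|h) =
   \Bigl[\tst{\prod_{k\in\llb 1,j-1\rrb}^{\curvearrowleft}} R^{(\s_k)}_{0,k}(\la,\n_k)\Bigr]
   \th_0(h/T)
   \Bigl[\tst{\prod_{k\in\llb j,M\rrb}^{\curvearrowleft}} R^{(\s_k)}_{0,k}(\la,\n_k)\Bigr]
\]
and proving, for each sign of $\s_\ell$ separately, the one-step shift relation
$T_{0;\ell+1}(\la|h)=t^{\s_\ell}(\n_\ell|h)\,T_{0;\ell}(\la|h)\,t^{-\s_\ell}(\n_\ell|h)$
directly from the Yang--Baxter equation \eqref{ybe} and \eqref{uonesym}. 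Iterating gives
$T_{0;j}=\La_{j-1}\,T\,\La_{j-1}^{-1}$, and since $\tr_0\{x_0 T_{0;j}(\n_j|h)\}=x_j\,t(\n_j|h)$
by regularity, the lemma follows at once. This is equivalent in content to your induction --- the shift relation is exactly what makes your inductive step go through --- but it packages the whole argument into a single identity per sign of $\s_\ell$, with no cumulative bookkeeping. If you want to complete your version, proving that shift relation first and then unwinding it into your $\La_{j-1}{\cal R}^-_j$ language is the cleanest route.
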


A proof of this lemma is provided in appendix~\ref{app:proofinvlemma}
and is a extension of the proof for single-site operators in \cite{GKKKS17}.
For the realization of the dynamical correlation functions at time
$t$ we choose $M = 2N + 2\ell$ in the definition of the $\siv$-staggered
monodromy matrix, where $N$ is the Trotter number and $\ell$ the length
of the operator $X$, and we consider a special staggering $\s^{(\ell)}$
and a special set of inhomogeneities $\nuv^{(\ell)}$,
\begin{subequations}
\label{inhoms}
\begin{align} \label{sigmas}
     \s_{2k - 1}^{(\ell)} & = - \s_{2k}^{(\ell)} = 1 \qqd
     k = 1, \dots, N/2, N/2 + \ell + 1, \dots, N + \ell \epc \notag \\
     \s_{N + k}^{(\ell)} & =
        \begin{cases}
	   + 1 & \text{$k = 1, \dots, \ell$,} \\[.5ex]
	   - 1 & \text{$k = \ell + 1, \dots, 2 \ell$,}
        \end{cases} \\[1ex] \label{nus}
     \n_{2k - 1}^{(\ell)} & = - \n_{2k}^{(\ell)} =
        \begin{cases}
	   \frac{\i t}{\k N} & \text{$k = 1, \dots, N/2$,} \\[1ex]
	   \bigl(\frac{1}{T} - \i t\bigr) 
	   \frac{1}{\k N} & \text{$k = N/2 + \ell + 1, \dots, N + \ell$,}
        \end{cases} \notag \\
     \n_{N + k}^{(\ell)} & =
        \begin{cases}
	   \x_k & \text{$k = 1, \dots, \ell$,} \\[.5ex]
	   0 & \text{$k = \ell + 1, \dots, 2 \ell$.}
        \end{cases}
\end{align}
\end{subequations}
When $\ell = 0$ we simply drop the second lines in (\ref{sigmas})
and (\ref{nus}). For an illustration see Fig.~\ref{fig:cfpathint}.

We shall write $T_{(\ell)} (\la| h) = T(\la|\siv^{(\ell)}, \nuv^{(\ell)}, h)$
and $t_{(\ell)} (\la| h) = t(\la|\siv^{(\ell)}, \nuv^{(\ell)}, h)$.
For the $n$th eigenvalue and eigenvector of the dynamical
quantum transfer matrix $t_{(\ell)} (\la| h)$ we are employing
the notation $\La_{(\ell), n} (\la|h)$, $|\Ps_{(\ell), n} (h)\>$.
We write $T (\la| h) = T_{(0)} (\la| h)$, $t (\la| h) = t_{(0)} (\la| h)$,
$\La_n (\la|h) = \La_{(0), n} (\la|h)$ and $|\Ps_n (h)\>
= |\Ps_{(0), n} (h)\>$ for short. The label $n = 0$ will be reserved
for the dominant state.

\subsection{Two-point functions of local operators}
We define the dynamical two point functions of two
local operators $X$ and $Y$ of length $\ell$ and $r$ as
\begin{multline} \label{defcfxy}
     \bigl\< X_{\llbracket 1, \ell \rrbracket} (t)
        Y_{\llbracket 1 + m, r + m \rrbracket} \bigr\>_T = \\
     \lim_{L \rightarrow \infty} Z_L^{-1} \tr_{1, \dots, L}
	\bigl\{\re^{(\i t - 1/T) H} X_{\llbracket 1, \ell \rrbracket}
	       \re^{- \i t H} Y_{\llbracket 1 + m, r + m \rrbracket} \bigr\} \epc
\end{multline}
where
\begin{equation}
     Z_L = \tr_{1, \dots, L} \bigl\{\re^{- H/T}\bigr\}
\end{equation}
is the partition function. Since the adjoint action of $H$
commutes with the adjoint action of $\hat \PH$, the correlation
function can only be non-zero if
\begin{equation}
     s(X) + s(Y) = 0 \epp
\end{equation}

We shall now use the above defined dynamical quantum transfer
matrix in an argument similar to the one employed in the
derivation of our main theorem in \cite{GKKKS17}. Using (\ref{rhoop}),
(\ref{spinxy}) and (\ref{sotqip}) in (\ref{defcfxy}) we see that
\begin{multline} \label{cfxynext}
     \bigl\< X_{\llbracket 1, \ell \rrbracket} (t)
        Y_{\llbracket 1 + m, r + m \rrbracket} \bigr\>_T = \\
     \re^{- \i t h s(X)} \lim_{L \rightarrow \infty} Z_L^{-1}
        \lim_{N \rightarrow \infty} \lim_{\x_j \rightarrow 0}
	\tr_{1, \dots, L}
	\bigl\{\re^{h \hat \PH/T} \r_{N, L} (1/T - \i t) \,
	       t_\perp^\ell (0) \\ \Bigl[
              \prod_{k \in \llbracket 1, \ell\rrbracket}^\dst{\curvearrowleft}
	      \tr_0 \{x^{(k)}_0 T_{\perp}^{-1} (\x_k)\} \Bigr]
	       \r_{N, L} (\i t) Y_{\llbracket 1 + m, r + m \rrbracket} \bigr\} \epp
\end{multline}
The term under the trace on the right hand side of this equation
is represented graphically in Fig.~\ref{fig:cfpathint}. Our conventions
(\ref{drawr}) for the graphical representation of $R$-matrices are the
same as in \cite{Goehmann20}. The column-to-column transfer matrices in the
\begin{figure}
\begin{center}
\includegraphics[width=.92\textwidth]{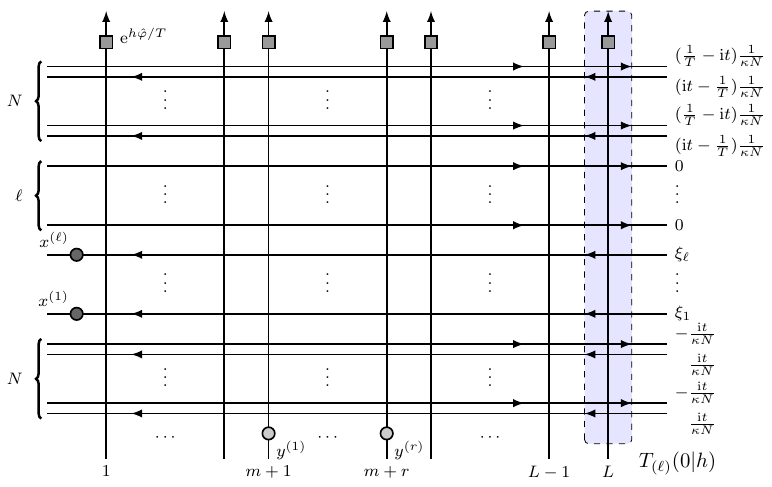}
\caption{\label{fig:cfpathint} Graphical representation
of the expression under the trace in (\ref{cfxynext}). Periodic
boundary conditions at the left and the right are understood. Crosses
represent $R$-matrices, circles and squares the indicated local
operators. Connection of lines means summation over the corresponding
matrix indices. The shaded box on the right highlights the monodromy
matrix $T_{(\ell)}(0|h)$.}
\end{center}
\end{figure}
picture are the dynamical quantum transfer matrices. The corresponding
monodromy matrices are obviously equal to the monodromy matrices
$T_{(\ell)} (\la|h)$ defined in Sec.~\ref{sec:sigmastagg}. We can
therefore read off from the picture that
\begin{multline} \label{cfxynn}
     \bigl\< X_{\llbracket 1, \ell \rrbracket} (t)
        Y_{\llbracket 1 + m, r + m \rrbracket} \bigr\>_T =
     \re^{- \i t h s(X)} \lim_{L \rightarrow \infty}
        \lim_{N \rightarrow \infty} \lim_{\x_j \rightarrow 0} \\[1ex]
	\frac{\tr_{1, \dots, 2N + 2\ell} \Bigl\{
	      X_{\llb N+1, N+\ell \rrb} \, t_{(\ell)}^m (0|h)
	      \Bigl[\prod_{k \in \llbracket 1, r\rrbracket}^\dst{\curvearrowright}
	      \tr_0 \bigl\{y^{(k)}_0 T_{(\ell)} (0|h)\bigr\} \Bigr]
	      t_{(\ell)}^{L - m - r} (0|h)\Bigr\}}
	     {\tr_{1, \dots, 2N + 2\ell} \bigl\{t_{(\ell)}^L (0|h)\bigr\}} \epp
\end{multline}
Making the usual assumptions of the quantum transfer matrix formalism
(that were proven for the XXZ chain at high enough temperature in
\cite{GGKS20}), namely that $|\La_{(\ell),0} (0|h)| > |\La_{(\ell),n} (0|h)|$
for all $n \ne 0$ and that the limits $N \rightarrow \infty$ and
$L \rightarrow \infty$ commute, we conclude that
\begin{multline} \label{cfxynnn}
     \bigl\< X_{\llbracket 1, \ell \rrbracket} (t)
        Y_{\llbracket 1 + m, r + m \rrbracket} \bigr\>_T =
     \re^{- \i t h s(X)} \lim_{N \rightarrow \infty} \lim_{\x_j \rightarrow 0} \\[1ex]
	\frac{\<\Ps_{(\ell), 0} (h)|
	      X_{\llb N+1, N+\ell \rrb} \, t_{(\ell)}^m (0|h)
	      \Bigl[\prod_{k \in \llbracket 1, r\rrbracket}^\dst{\curvearrowright}
	      \tr_0 \bigl\{y^{(k)}_0 T_{(\ell)} (0|h)\bigr\} \Bigr]|\Ps_{(\ell), 0} (h)\>}
	     {\<\Ps_{(\ell),0} (h)|\Ps_{(\ell), 0} (h)\> \La_{(\ell), 0}^{m+r} (0|h)} \epp
\end{multline}
Here we use Lemma~\ref{lem:staginv} and (\ref{inhoms}) to substitute
\begin{multline}
     X_{\llb N + 1, N + \ell\rrb} =
        \Bigl[ \prod_{k=1}^N t_{(\ell)}^{\s_k} (\n_k|h) \Bigr]
	\Bigl[ \prod_{k \in \llbracket 1, \ell \rrbracket}^\dst{\curvearrowright}
	      \tr_0 \bigl\{x^{(k)}_0 T_{(\ell)} (\n_{N+k}|h)\bigr\} \Bigr]
        \Bigl[ \prod_{k=1}^{N + \ell} t_{(\ell)}^{- \s_k} (\n_k|h) \Bigr] \\
	= \Bigl( t_{(\ell)} \bigl(\tst{\frac{\i t}{\k N}}\big|h\bigr) \Bigr)^\frac{N}{2}
	  \Bigl( t_{(\ell)} \bigl(\tst{-\frac{\i t}{\k N}}\big|h\bigr) \Bigr)^{- \frac{N}{2}}
	\Bigl[ \prod_{k \in \llbracket 1, \ell \rrbracket}^\dst{\curvearrowright}
	        \tr_0 \bigl\{x^{(k)}_0 T_{(\ell)} (\x_k|h)\bigr\} \Bigr] \\ \times
	  \Bigl[ \prod_{k=1}^\ell t_{(\ell)}^{-1} (\x_k|h) \Bigr]
	  \Bigl( t_{(\ell)} \bigl(\tst{\frac{\i t}{\k N}}\big|h\bigr) \Bigr)^{- \frac{N}{2}}
	  \Bigl( t_{(\ell)} \bigl(\tst{-\frac{\i t}{\k N}}\big|h\bigr) \Bigr)^{\frac{N}{2}}
	  \epp
\end{multline}
Finally, expanding in terms of the eigenstates of the quantum
transfer matrix, we have arrived at the following Lemma.

\begin{lemma} \label{lem:tfs}
In any fundamental integrable lattice model with $U(1)$ symmetric
$R$-matrix the dynamical two-point functions of two local operators
$X$ and $Y$ of the form (\ref{defxy}) and with $U(1)$ charges
$s(X)$, $s(Y)$ as defined in (\ref{spinxy}) have the thermal form-%
factor series representation
\begin{multline}
     \bigl\< X_{\llbracket 1, \ell \rrbracket} (t)
        Y_{\llbracket 1 + m, r + m \rrbracket} \bigr\>_T =
	\re^{- \i h t s(X)} \\[.5ex] \times
	\lim_{N \rightarrow \infty} \lim_{\x_j \rightarrow 0}
	\sum_n
	\frac{\<\Ps_{(\ell), 0} (h)|
              \prod_{k \in \llbracket 1, \ell\rrbracket}^\dst{\curvearrowright}
	      \tr_0 \{x^{(k)}_0 T_{(\ell)} (\x_k| h)\}|\Ps_{(\ell), n} (h)\>}
	     {\<\Ps_{(\ell), 0} (h)|\Ps_{(\ell), 0} (h)\>
	      \prod_{k=1}^\ell \La_{(\ell), n} (\x_k|h)} \\[1ex] \times
	\frac{\<\Ps_{(\ell), n} (h)|
              \prod_{k \in \llbracket 1, r\rrbracket}^\dst{\curvearrowright}
	      \tr_0 \{y^{(k)}_0 T_{(\ell)} (0| h)\}|\Ps_{(\ell), 0} (h)\>}
	     {\<\Ps_{(\ell), n} (h)|\Ps_{(\ell), n} (h)\> \La_{(\ell), 0}^r (0|h)}
        \biggl(\frac{\La_{(\ell), n} (0|h)}{\La_{(\ell), 0} (0|h)}\biggr)^m
	   \\[1ex] \times
        \biggl(\frac{\La_{(\ell), 0} \bigl(\frac{\i t}{\k N}\big|h\bigr)}
	            {\La_{(\ell), n} \bigl(\frac{\i t}{\k N}\big|h\bigr)}
		       \biggr)^\frac{N}{2}
        \biggl(\frac{\La_{(\ell), n} \bigl(- \frac{\i t}{\k N}\big|h\bigr)}
	            {\La_{(\ell), 0} \bigl(- \frac{\i t}{\k N}\big|h\bigr)}
		       \biggr)^\frac{N}{2} \epp
\end{multline}
\end{lemma}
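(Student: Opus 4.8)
The plan is to take equation~(\ref{cfxynnn}) as the point of departure and to feed into it the rewriting of $X_{\llb N+1,N+\ell\rrb}$ displayed immediately below it, which itself follows from Lemma~\ref{lem:staginv} and the choice of inhomogeneities~(\ref{inhoms}). After this substitution the numerator of~(\ref{cfxynnn}) becomes the $\<\Ps_{(\ell),0}(h)|\,\cdot\,|\Ps_{(\ell),0}(h)\>$ matrix element of
\begin{equation*}
     A \cdot \Bigl[\prod_{k\in\llb 1,\ell\rrb}^{\curvearrowright}\tr_0\{x^{(k)}_0 T_{(\ell)}(\x_k|h)\}\Bigr] \cdot B \cdot t_{(\ell)}^m(0|h) \cdot \Bigl[\prod_{k\in\llb 1,r\rrb}^{\curvearrowright}\tr_0\{y^{(k)}_0 T_{(\ell)}(0|h)\}\Bigr],
\end{equation*}
where $A=\bigl(t_{(\ell)}(\tfrac{\i t}{\k N}|h)\bigr)^{N/2}\bigl(t_{(\ell)}(-\tfrac{\i t}{\k N}|h)\bigr)^{-N/2}$ collects the two leftmost scalar transfer-matrix powers, $B=\bigl[\prod_{k=1}^\ell t_{(\ell)}^{-1}(\x_k|h)\bigr]\bigl(t_{(\ell)}(\tfrac{\i t}{\k N}|h)\bigr)^{-N/2}\bigl(t_{(\ell)}(-\tfrac{\i t}{\k N}|h)\bigr)^{N/2}$ the remaining ones, and I will call the two bracketed products the $x$-block and the $y$-block. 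The prefactor $\re^{-\i h t s(X)}$ and the limits $\lim_{N\to\infty}\lim_{\x_j\to0}$ of~(\ref{cfxynnn}) are simply carried along, so everything below is an algebraic identity at fixed finite $N$ and $\x_j$.

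The structural input is the commutativity $[t_{(\ell)}(\la|h),t_{(\ell)}(\m|h)]=0$ for all $\la,\m$, a standard consequence of the Yang-Baxter equation~(\ref{ybe}), combined with the working hypotheses of the quantum transfer matrix formalism invoked before~(\ref{cfxynnn}): the family $\{t_{(\ell)}(\la|h)\}_\la$ is simultaneously diagonalisable, the $|\Ps_{(\ell),n}(h)\>$ form a common eigenbasis whose dual is realised by the $\<\Ps_{(\ell),n}(h)|$, and $\<\Ps_{(\ell),n}(h)|\Ps_{(\ell),n'}(h)\>=\de_{nn'}\<\Ps_{(\ell),n}(h)|\Ps_{(\ell),n}(h)\>$. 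Hence any product of transfer matrices acts on $\<\Ps_{(\ell),n}(h)|$ and on $|\Ps_{(\ell),n}(h)\>$ by the corresponding product of eigenvalues; in particular $\<\Ps_{(\ell),0}(h)|A=\bigl(\La_{(\ell),0}(\tfrac{\i t}{\k N}|h)\bigr)^{N/2}\bigl(\La_{(\ell),0}(-\tfrac{\i t}{\k N}|h)\bigr)^{-N/2}\<\Ps_{(\ell),0}(h)|$.

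Next I would insert the resolution of the identity $\one=\sum_n|\Ps_{(\ell),n}(h)\>\<\Ps_{(\ell),n}(h)|/\<\Ps_{(\ell),n}(h)|\Ps_{(\ell),n}(h)\>$ between the $x$-block and $B$; as the staggered transfer matrix acts on a finite-dimensional space for finite Trotter number $N$, this is a finite sum and raises no convergence issue. The block $B\,t_{(\ell)}^m(0|h)$ then sits between the inserted projector and the $y$-block, and --- crucially, and automatically from the form of the substituted expression --- entirely to the \emph{left} of the $y$-block, which, being a genuine local-operator insertion, does not commute with the transfer matrices; being itself a product of mutually commuting transfer matrices, it acts on $\<\Ps_{(\ell),n}(h)|$ by the eigenvalue factor $\bigl[\prod_{k=1}^\ell\La_{(\ell),n}^{-1}(\x_k|h)\bigr]\bigl(\La_{(\ell),n}(\tfrac{\i t}{\k N}|h)\bigr)^{-N/2}\bigl(\La_{(\ell),n}(-\tfrac{\i t}{\k N}|h)\bigr)^{N/2}\La_{(\ell),n}^m(0|h)$. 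Collecting this with the factor from $\<\Ps_{(\ell),0}(h)|A$, the denominator $\<\Ps_{(\ell),0}(h)|\Ps_{(\ell),0}(h)\>\La_{(\ell),0}^{m+r}(0|h)$ of~(\ref{cfxynnn}), the norm $\<\Ps_{(\ell),n}(h)|\Ps_{(\ell),n}(h)\>$ contributed by the projector, and the prefactor $\re^{-\i h t s(X)}$, and regrouping the eigenvalue ratios into $\bigl(\La_{(\ell),n}(0|h)/\La_{(\ell),0}(0|h)\bigr)^m$, $\bigl(\La_{(\ell),0}(\tfrac{\i t}{\k N}|h)/\La_{(\ell),n}(\tfrac{\i t}{\k N}|h)\bigr)^{N/2}$ and $\bigl(\La_{(\ell),n}(-\tfrac{\i t}{\k N}|h)/\La_{(\ell),0}(-\tfrac{\i t}{\k N}|h)\bigr)^{N/2}$, one reads off exactly the series in the statement: the only surviving non-scalar factors are $\<\Ps_{(\ell),0}(h)|\prod_{k\in\llb 1,\ell\rrb}^{\curvearrowright}\tr_0\{x^{(k)}_0 T_{(\ell)}(\x_k|h)\}|\Ps_{(\ell),n}(h)\>$, over $\<\Ps_{(\ell),0}(h)|\Ps_{(\ell),0}(h)\>\prod_{k=1}^\ell\La_{(\ell),n}(\x_k|h)$, and $\<\Ps_{(\ell),n}(h)|\prod_{k\in\llb 1,r\rrb}^{\curvearrowright}\tr_0\{y^{(k)}_0 T_{(\ell)}(0|h)\}|\Ps_{(\ell),0}(h)\>$, over $\<\Ps_{(\ell),n}(h)|\Ps_{(\ell),n}(h)\>\La_{(\ell),0}^r(0|h)$. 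Restoring the two outer limits finishes the proof.

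The only step that goes beyond bookkeeping of eigenvalue factors is the spectral input used above: one needs the staggered transfer matrix to be diagonalisable with biorthogonal eigenvectors realised by the $|\Ps_{(\ell),n}(h)\>$ and $\<\Ps_{(\ell),n}(h)|$ (so that in particular the denominators $\<\Ps_{(\ell),n}(h)|\Ps_{(\ell),n}(h)\>$ are non-zero), and one needs the bilinear pairing $\<\,\cdot\,|\,\cdot\,\>$ appearing in~(\ref{cfxynnn}) to be the one for which this resolution of the identity holds; in the $U(1)$-symmetric setting the symmetry $R^t=R$ of~(\ref{fund}) makes this the natural pairing. A rigorous justification is part of the ``usual assumptions of the quantum transfer matrix formalism'' referred to before~(\ref{cfxynnn}) and established for the XXZ chain at high enough temperature in~\cite{GGKS20}; granting these, the argument invokes nothing beyond Lemma~\ref{lem:staginv}, the specialisation~(\ref{inhoms}), the dominance $|\La_{(\ell),0}(0|h)|>|\La_{(\ell),n}(0|h)|$ for $n\ne0$, and the mutual commutativity of the staggered transfer matrices.
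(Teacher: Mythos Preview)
Your proof is correct and follows precisely the approach taken in the paper: starting from~(\ref{cfxynnn}), substituting the expression for $X_{\llb N+1,N+\ell\rrb}$ obtained from Lemma~\ref{lem:staginv} and~(\ref{inhoms}), and then inserting a resolution of the identity in the eigenbasis of $t_{(\ell)}(\la|h)$. The paper compresses the last step into the single phrase ``expanding in terms of the eigenstates of the quantum transfer matrix'', whereas you spell out explicitly where the projector is inserted and how the various eigenvalue factors are collected, but the logic is identical.
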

Note that the order of the products here is ascending.
Lemma~\ref{lem:tfs} is a most natural generalization of our theorem
in \cite{GKKKS17}.

At this point the homogeneous limit can be taken, which brings about
the following simplification.
\begin{theorem} The thermal form-factor expansion.
In any fundamental integrable lattice model with $U(1)$ symmetric
$R$-matrix the dynamical two-point functions of two local operators
$X$ and $Y$ of the form (\ref{defxy}) and with $U(1)$ charges
$s(X)$, $s(Y)$ as defined in (\ref{spinxy}) have the thermal form-%
factor series representation
\begin{multline} \label{tffsxy}
     \bigl\< X_{\llbracket 1, \ell \rrbracket} (t)
        Y_{\llbracket 1 + m, r + m \rrbracket} \bigr\>_T =
	\re^{- \i h t s(X)} \\[.5ex] \times
	\lim_{N \rightarrow \infty}
	\sum_n
	\frac{\<\Ps_0 (h)|
              \prod_{k \in \llbracket 1, \ell\rrbracket}^\dst{\curvearrowright}
	      \tr_0 \{x^{(k)}_0 T (0| h)\}|\Ps_n (h)\>}
	     {\<\Ps_0 (h)|\Ps_0 (h)\> \La_n^\ell (0|h)} \\[1ex] \times
	\frac{\<\Ps_n (h)|
              \prod_{k \in \llbracket 1, r\rrbracket}^\dst{\curvearrowright}
	      \tr_0 \{y^{(k)}_0 T (0| h)\}|\Ps_0 (h)\>}
	     {\<\Ps_n (h)|\Ps_n (h)\> \La_0^r (0|h)}
        \biggl(\frac{\La_n (0|h)}{\La_0 (0|h)}\biggr)^m
	   \\[1ex] \times
        \biggl(\frac{\La_0 \bigl(\frac{\i t}{\k N}\big|h\bigr)}
	            {\La_n \bigl(\frac{\i t}{\k N}\big|h\bigr)}
		       \biggr)^\frac{N}{2}
        \biggl(\frac{\La_n \bigl(- \frac{\i t}{\k N}\big|h\bigr)}
	            {\La_0 \bigl(- \frac{\i t}{\k N}\big|h\bigr)}
		       \biggr)^\frac{N}{2} \epp
\end{multline}
\end{theorem}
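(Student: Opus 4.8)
The theorem is the homogeneous limit $\xi_1,\dots,\xi_\ell\to 0$ of Lemma~\ref{lem:tfs}, and the plan is to push this limit inside the series and to read off the limit of every $\xi$-dependent ingredient. The easy half of this is that the Trotter number $N$ is still finite at this stage, so $t_{(\ell)}(\lambda|h)$ acts on the finite-dimensional space $(\mathbb{C}^d)^{\otimes(2N+2\ell)}$, the sum over $n$ in Lemma~\ref{lem:tfs} is finite, and interchanging $\lim_{\xi_j\to 0}$ with $\sum_n$ is immediate once each summand is shown to converge and the finitely many norms $\<\Ps_{(\ell),n}(h)|\Ps_{(\ell),n}(h)\>$ are seen to stay finite and nonzero. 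The prefactor $\re^{-\i h t s(X)}$ is independent of the $\xi_j$ and is carried through unchanged.

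The structural input is that in the homogeneous limit the $\siv^{(\ell)}$-staggered quantum transfer matrix reduces to the $\siv^{(0)}$-staggered one. By (\ref{inhoms}) the $2\ell$ auxiliary sites $\llbracket N+1,N+2\ell\rrbracket$ carry staggering $+1$ on the first $\ell$ of them and $-1$ on the last $\ell$, and all of their inhomogeneities tend to the common value $0$. I would argue, just as was done for single-site operators in \cite{GKKKS17}, that the regularity, unitarity and symmetry relations (\ref{fund}) let one contract this block on the auxiliary line: $T_{(\ell)}(\lambda|h)$ reduces to $T(\lambda|h)$ acting on $(\mathbb{C}^d)^{\otimes 2N}$, with the $2\ell$ extra tensor factors frozen into a fixed reference vector. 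It follows that $\La_{(\ell),n}(\lambda|h)\to\La_n(\lambda|h)$ for every $\lambda$, that the dominant state and, for the $n$ that contribute to the series, the excited states converge to the corresponding eigenstates $|\Ps_0(h)\>$, $|\Ps_n(h)\>$ of $t(0|h)$, and that the norms converge accordingly. The same contraction sends $\tr_0\{x^{(k)}_0 T_{(\ell)}(\xi_k|h)\}$ to $\tr_0\{x^{(k)}_0 T(0|h)\}$ — here one uses in addition the regularity relation (\ref{reg}), $R(\xi_k,\xi_k)=P$, to check that the site-$(N+k)$ factor is smooth as $\xi_k\to 0$ and produces no anomalous term — and sends $\prod_{k\in\llbracket 1,r\rrbracket}^{\curvearrowright}\tr_0\{y^{(k)}_0 T_{(\ell)}(0|h)\}$ to $\prod_{k\in\llbracket 1,r\rrbracket}^{\curvearrowright}\tr_0\{y^{(k)}_0 T(0|h)\}$.

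Granting this, the assembly is routine: in the denominator of the $X$-form factor in Lemma~\ref{lem:tfs} the $\ell$ factors $\La_{(\ell),n}(\xi_k|h)$ each tend to $\La_n(0|h)$, yielding $\La_n^\ell(0|h)$; the $X$- and $Y$-matrix elements and all norms go to their $\siv^{(0)}$ counterparts; the ratio $\bigl(\La_{(\ell),n}(0|h)/\La_{(\ell),0}(0|h)\bigr)^m$ tends to $\bigl(\La_n(0|h)/\La_0(0|h)\bigr)^m$; and the two Trotter-regulator ratios at $\pm\i t/(\k N)$ converge to those displayed in (\ref{tffsxy}). Collecting everything reproduces (\ref{tffsxy}).

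The genuinely nontrivial step is the contraction of the $2\ell$ auxiliary sites, i.e.\ the claim $T_{(\ell)}(\lambda|h)\to T(\lambda|h)$ in the homogeneous limit. For a single-site operator this is the computation of \cite{GKKKS17}; the general case follows the same pattern, the main chore being to keep track of the permutation structure generated by the successive uses of the regularity relation and to check that the reference vector on the auxiliary sites is left invariant by the reduced $X$- and $Y$-insertions, so that the series indeed collapses to a sum over the eigenstates of $t(0|h)$. Apart from this, one needs only the standing assumptions of the quantum transfer matrix formalism already invoked for Lemma~\ref{lem:tfs} — non-degeneracy of the dominant eigenvalue and continuity of the eigendata in the spectral and inhomogeneity parameters — to ensure that the $n=0$ state of $t_{(\ell)}(0|h)$ limits to the $n=0$ state of $t(0|h)$ and that no normalization in Lemma~\ref{lem:tfs} degenerates as $\xi_j\to 0$.
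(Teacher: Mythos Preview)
Your approach is the same as the paper's: the paper's entire proof is the one-line remark ``At this point the homogeneous limit can be taken, which brings about the following simplification'', and you correctly identify the theorem as the $\xi_j\to 0$ specialisation of Lemma~\ref{lem:tfs} with the finite-$N$ sum making the interchange of limit and sum harmless. Your attempt to justify the reduction of the $(\ell)$-quantities to the unsubscripted ones goes beyond what the paper supplies; the one caveat is that the picture of $T_{(\ell)}(\lambda|h)$ literally collapsing to $T(\lambda|h)$ with the $2\ell$ auxiliary sites ``frozen into a fixed reference vector'' is not quite the right operator statement --- the monodromy matrix still threads through all $2N+2\ell$ sites and the extra $\ell$ pairs of oppositely-staggered sites at common inhomogeneity $0$ do modify the bare $a$- and $d$-functions --- but the normalized scalar quantities appearing in each summand (the form-factor ratios and eigenvalue ratios) do coincide, which is all that is needed and is what the paper is implicitly invoking.
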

The interpretation of this theorem is as follows. The left hand
side can be represented graphically, e.g.\ as in Fig.~\ref{fig:cfpathint}.
The graphical representation may be interpreted as a lattice
path integral with the real or imaginary time direction going
from bottom to top and the space direction going from left to
right. The right hand side of (\ref{tffsxy}) expresses the
path integral entirely in term of `integrable data', namely
in terms of the eigenvalues of the quantum transfer matrix and
certain matrix elements of products of combinations of monodromy
matrix elements.

An interesting property of this representation is that the
features of the environment (or heat bath) are encoded in the
choice of the inhomogeneities (\ref{inhoms}). The form of
the right hand side of (\ref{tffsxy}) remains the same for
different choices of inhomogeneities. This means that the
formula is valid for many different kinds of environments,
e.g.\ for generalized Gibbs ensembles \cite{SaKl03,FaEs13}
or for the ground state of quantum chains of finite length
or for correlation functions in excited states (cf.\ the
discussion in \cite{GKW21}).

Although equation (\ref{tffsxy}) looks quite appealing, in
particular, since it involves only a single sum over excited
states, its actual evaluation remains a formidable task, because
it requires the characterization of all eigenstates of the
quantum transfer matrix (in a given `pseudo-spin sector', see
below) and because efficient formula for matrix elements of
the form appearing in (\ref{tffsxy}) are generally not available,
not even for Yang-Baxter integrable models. This is the reason
why we concentrate in the remainder of this work on the XXZ
quantum spin chain for which we can rely on previous results 
on the spectrum of the quantum transfer matrix \cite{DGKS15b}
and on the evaluation of matrix elements \cite{DGK13a,DGK14a,%
DGKS16b,BGKS21a,BGKSS21}. The essential tool for such an
evaluation is Slavnov's determinant formula \cite{Slavnov89,BeSl19}
for the scalar product of an off-shell and an on-shell
Bethe vector.

\subsection{Two-point functions of elementary blocks}
A basis for the two-point correlation functions is obtained
by considering so-called elementary blocks.
\begin{corollary}
Setting
\begin{equation}
     X_{\llbracket 1, \ell\rrbracket} =
        {e_1}^{\a_1}_{\be_1} \dots {e_\ell}^{\a_\ell}_{\be_\ell} \epc \qd
     Y_{\llbracket 1, r\rrbracket} =
        {e_1}^{\g_1}_{\de_1} \dots {e_r}^{\g_r}_{\de_r}
\end{equation}
in (\ref{tffsxy}) we obtain an expression for the two-point functions
of elementary blocks,
\begin{multline} \label{tffseb}
     \bigl\< \bigl({e_1}^{\a_1}_{\be_1} \dots {e_\ell}^{\a_\ell}_{\be_\ell}\bigr) (t) \,
        {e_{1 + m}}^{\g_1}_{\de_1} \dots {e_{r + m}}^{\g_r}_{\de_r} \bigr\>_T =
	\re^{- \i h t s(X)} \\[.5ex] \times
	\lim_{N \rightarrow \infty}
	\sum_n
	\frac{\<\Ps_0 (h)| T^{\a_1}_{\be_1} (0| h)
	                   \dots T^{\a_\ell}_{\be_\ell} (0|h) |\Ps_n (h)\>}
	     {\<\Ps_0 (h)|\Ps_0 (h)\> \La_n^\ell (0|h)} \\[1ex] \times
	\frac{\<\Ps_n (h)| T^{\g_1}_{\de_1} (0| h)
	                   \dots T^{\g_r}_{\de_r} (0|h) |\Ps_0 (h)\>}
	     {\<\Ps_n (h)|\Ps_n (h)\> \La_0^r (0|h)}
        \biggl(\frac{\La_n (0|h)}{\La_0 (0|h)}\biggr)^m
	   \\[1ex] \times
        \biggl(\frac{\La_0 \bigl(\frac{\i t}{\k N}\big|h\bigr)}
	            {\La_n \bigl(\frac{\i t}{\k N}\big|h\bigr)}
		       \biggr)^\frac{N}{2}
        \biggl(\frac{\La_n \bigl(- \frac{\i t}{\k N}\big|h\bigr)}
	            {\La_0 \bigl(- \frac{\i t}{\k N}\big|h\bigr)}
		       \biggr)^\frac{N}{2} \epc
\end{multline}
where $T^\a_\be (\la|h) = \tr_0 \bigl\{{e_0}^\a_\be\, T(\la|h)\bigr\}$.
\end{corollary}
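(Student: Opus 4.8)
The corollary is an immediate specialization of Theorem (the thermal form-factor expansion) to the case where the local operators $X$ and $Y$ are themselves elementary endomorphisms, so the plan is to plug in and simplify. The plan is to take $x^{(j)} = e^{\a_j}_{\be_j}$ for $j = 1, \dots, \ell$ and $y^{(k)} = e^{\g_k}_{\de_k}$ for $k = 1, \dots, r$ in the definition (\ref{defxy}) of the operators entering (\ref{tffsxy}). Then $X_{\llbracket 1, \ell\rrbracket} = {e_1}^{\a_1}_{\be_1} \dots {e_\ell}^{\a_\ell}_{\be_\ell}$ and $Y_{\llbracket 1, r\rrbracket} = {e_1}^{\g_1}_{\de_1} \dots {e_r}^{\g_r}_{\de_r}$, exactly the operators appearing on the left-hand side of (\ref{tffseb}). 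The phase prefactor $\re^{-\i h t s(X)}$ and the ratios of eigenvalues $\La_n/\La_0$ at the various arguments carry over verbatim from (\ref{tffsxy}), since they do not depend on the specific choice of local operators beyond the value of the $U(1)$ charge $s(X)$.

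Next I would carry out the only substantive step: the identification of the matrix elements. In (\ref{tffsxy}) the numerator of the first factor is
\begin{equation*}
     \Bigl\< \Ps_0 (h) \Bigm| \prod_{k \in \llbracket 1, \ell\rrbracket}^{\curvearrowright}
       \tr_0 \bigl\{ x^{(k)}_0 \, T (0|h) \bigr\} \Bigm| \Ps_n (h) \Bigr\> \epp
\end{equation*}
With $x^{(k)} = e^{\a_k}_{\be_k}$, one has $\tr_0 \{ x^{(k)}_0 \, T(0|h) \} = \tr_0 \bigl\{ {e_0}^{\a_k}_{\be_k} T(0|h) \bigr\} = T^{\a_k}_{\be_k}(0|h)$, which is precisely the notational definition $T^\a_\be(\la|h) = \tr_0\{ {e_0}^\a_\be\, T(\la|h)\}$ declared at the end of the corollary. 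Hence the product over $k \in \llbracket 1, \ell\rrbracket$ in ascending order becomes $T^{\a_1}_{\be_1}(0|h) \dots T^{\a_\ell}_{\be_\ell}(0|h)$, reproducing the first numerator in (\ref{tffseb}). The same substitution applied to the second factor, with $y^{(k)} = e^{\g_k}_{\de_k}$, yields $T^{\g_1}_{\de_1}(0|h) \dots T^{\g_r}_{\de_r}(0|h)$ sandwiched between $\<\Ps_n(h)|$ and $|\Ps_0(h)\>$. The denominators $\<\Ps_0|\Ps_0\>\La_n^\ell(0|h)$ and $\<\Ps_n|\Ps_n\>\La_0^r(0|h)$ are unchanged, as is the factor $(\La_n(0|h)/\La_0(0|h))^m$ and the two Trotter-limit factors. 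Assembling these pieces gives exactly (\ref{tffseb}).

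There is essentially no obstacle here: the corollary is a direct restatement of the theorem under a particular choice of $X$ and $Y$, and the only thing to verify is the bookkeeping of the ordered products and the consistency of the abbreviation $T^\a_\be(\la|h)$ with $\tr_0\{x_0\, T(\la|h)\}$ — both of which are matters of definition. The one point deserving a remark is that the elementary blocks ${e_1}^{\a_1}_{\be_1}\dots{e_\ell}^{\a_\ell}_{\be_\ell}$ do span the space of $\ell$-site operators (since $\{e^\a_\be\}$ is a basis of $\End\,\mathbb{C}^d$ and tensor products of bases are bases), so by linearity (\ref{tffseb}) indeed furnishes a basis for all two-point functions, justifying the opening sentence of the subsection. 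No convergence or limit-interchange issues arise beyond those already assumed in the theorem, so the proof is complete once the substitution is spelled out.
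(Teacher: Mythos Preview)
Your proposal is correct and matches the paper's approach exactly: the corollary is stated in the paper without a separate proof, being an immediate specialization of the Theorem obtained by substituting $x^{(k)}=e^{\a_k}_{\be_k}$, $y^{(k)}=e^{\g_k}_{\de_k}$ and invoking the definition $T^\a_\be(\la|h)=\tr_0\{{e_0}^\a_\be\,T(\la|h)\}$. Your verification of the ordered products and of the carried-over eigenvalue ratios is precisely the bookkeeping the paper leaves implicit.
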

Mathematically this seems to be a very natural object. We will discuss
it in great detail below, when we consider the concrete example of the
XXZ chain.

\section{Two-point functions of spin-zero operators in the
XXZ chain}
\label{sec:spin_zero_xxz}
We shall now fill the general formulae obtained in the
previous section with life by considering the concrete
example of the XXZ spin chain. In the course of this attempt
we will restrict ourselves to spin-zero operators, then
identify properly normalized form factors in the summands
in (\ref{tffseb}) and connect the latter with constructions
that have been introduced in the context of the reduced
density matrix of the XXZ chain. These are discrete functional
equations of rqKZ type, multiple-integral representations
and the Fermionic basis. We shall see that the properly
normalized form factors factorize and can be expressed
by only two functions related to the form factors of
local operators of length one and two.
\subsection{Hamiltonian and integrable structure}
The XXZ or Heisenberg-Ising chain is described by the
Hamiltonian
\begin{equation} \label{hxxz}
     H_0 = J \sum_{j = 1}^L \Bigl\{ \s_{j-1}^x \s_j^x + \s_{j-1}^y \s_j^y
                 + \D \bigl( \s_{j-1}^z \s_j^z - 1 \bigr) \Bigr\} \epc
\end{equation}
where the operators $\s^\a \in \End {\mathbb C}^2$, $\a = x, y, z$,
are represented by the Pauli matrices. The two real parameters in
(\ref{hxxz}) are the exchange (or Heisenberg) interaction $J > 0$
and the anisotropy $\D$ (or the Ising interaction $J \D$). In the
case of the XXZ chain we may choose
\begin{equation}
     \hat \PH = S^z = \2 \sum_{j=1}^L \s_j^z
\end{equation}
or $\hat \ph = \2 \s^z$.

The integrable structure of the model is encoded in its $R$-matrix.
Having in mind applications to the massive antiferromagnetic regime
we parameterise it as
\begin{equation} \label{rmatrix}
     \begin{array}{cc}
     R(\la,\m) = \begin{pmatrix}
                  1 & 0 & 0 & 0 \\
		  0 & b(\la,\m) & c(\la,\m) & 0 \\
		  0 & c(\la,\m) & b(\la,\m) & 0 \\
		  0 & 0 & 0 & 1
		 \end{pmatrix} \epc &
     \begin{array}{c}
     b(\la, \m) = \frac{\sin(\m - \la)}{\sin(\m - \la + \i \g)} \\[2ex]
     c(\la, \m) = \frac{\sin(\i \g)}{\sin(\m - \la + \i \g)}
    \end{array}
    \end{array} \epp
\end{equation}
We set $q = \re^{- \g}$. Then $\D = (q + q^{-1})/2$ and
$\k = - \i/[J (q - q^{-1})]$ (which follows from (\ref{fundham})
and (\ref{hxxz})).

The parameterization in (\ref{rmatrix}) ensures that both, equations
(\ref{ybe}) and (\ref{fund}), are satisfied. In addition, the
$R$-matrix (\ref{rmatrix}) satisfies the crossing relation
\begin{equation} \label{crisscross}
     R_{2,1}^{t_1} (\m, \la) =
        \frac{\s_1^y R_{1, 2} (\la + \i \g, \m) \s_1^y}{b(\la + \i \g, \m)} =
        \frac{\s_2^y R_{1, 2} (\la, \m - \i \g) \s_2^y}{b(\la, \m - \i \g)}
\end{equation}
that will be needed below when we discuss the properties
of the spin-zero thermal form factors of the XXZ chain.

\subsection{Spin-zero operators and states}
As an example for the formalism of the previous section we shall
consider two-point functions of arbitrary spin-zero operators $X$
and $Y$ in the XXZ chain. This will give us new insight into the
structure of the general two-point functions and will connect our
work with the theory of factorizing correlation functions in the
static case \cite{BJMST08a,BoGo09,JMS08}.

Examples of relevant spin-zero operators are
\begin{subequations}
\begin{align}
     & \tst{\2} \s^z && \text{`magnetization'} \epc \\[1ex]
     & \i (\s^- \otimes \s^+ - \s^+ \otimes \s^-) && \text{`magnetic current'} \epc \\[1ex]
     & 2 (\s^- \otimes \s^+ + \s^+ \otimes \s^-) + \D \s^z \otimes \s^z
       && \text{`energy'} \epp \label{energy}
\end{align}
\end{subequations}
The two-point functions of local magnetic currents, in particular,
determine the so-called optical conductivity and the magnetic
Drude weight of the model, which have been the subject of much
debate in the past.

The conservation of the $z$-component of the total spin by the
XXZ Hamiltonian translates into a `pseudo-spin $z$-component'
conservation of the corresponding quantum transfer matrix. If
we define the pseudo-spin operator
\begin{equation} \label{pseudospinop}
     \Si = \2 \sum_{j=1}^N (-1)^{j-1} \s_j^z \epc
\end{equation}
then
\begin{equation} \label{psconservation}
     [t(\la|h), \Si] = 0 \epp
\end{equation}

Denoting the pseudo spin by $s_p$ we have the relation
\begin{equation} \label{sequalsps}
     s_p \bigl(T^\a_\be (\la|h)\bigr) = s(e^\a_\be) = \frac{\be - \a}{2} \epc
\end{equation}
for $\a, \be \in \{-1, 1\}$. A brief derivation is provided
in Appendix~\ref{app:pseudospin}. The dominant state has pseudo
spin $s_p = 0$. If $|\Ps_n (h')\>$ is an excited state with
pseudo spin $s_p'$, then it holds that $\<\Ps_0 (h)|\Ps_n (h')\> = 0$,
unless $s_p' = 0$. In the latter case the pairing vanishes at $h = h'$,
but is generically non-zero. Restricting ourselves to operators
of spin zero therefore means to impose the restrictions
\begin{equation}
     \sum_{j=1}^\ell (\be_j - \a_j) = 0 = \sum_{j=1}^r (\g_j - \de_j)
\end{equation}
onto the elementary blocks in (\ref{tffseb}).

\subsection{Properly normalized form factors}
In the following section we will utilize our results on the
form factors of the longitudinal correlation functions of
the XXZ chain in the antiferromagnetic massive regime \cite{BGKS21a,%
BGKSS21}. For this reason we have adopted our conventions for
the $R$-matrix, the Hamiltonian and the monodromy matrix from
\cite{BGKS21a}. In \cite{BGKS21a,BGKSS21} we calculated the
`amplitude'
\begin{equation} \label{ampl}
     A_n (h, h') = \frac{\<\Ps_0 (h)|\Ps_n (h')\>\<\Ps_n (h')|\Ps_0 (h)\>}
                        {\<\Ps_0 (h)|\Ps_0 (h)\>\<\Ps_n (h')|\Ps_n (h')\>}
\end{equation}
and its second derivative with respect to
\begin{equation} \label{defalpha}
     \a = \frac{h - h'}{2 \g T}
\end{equation}
at $\a = 0$ explicitly in the low-$T$ limit for $\D > 1$ and
magnetic fields below the lower critical field. Combing this
with the results obtained for the eigenvalue ratios
\begin{equation} \label{defrho}
     \r_n (\la|h, h') = \frac{\La_n (\la|h')}{\La_0 (\la|h)}
\end{equation}
in the same limit, we obtained an explicit series representation
for the longitudinal correlation function (the two-point function
of the local magnetization).

In the light of our work \cite{BGKS21a,BGKSS21} the amplitude
$A_n (h,h')$ seems a rather natural object, worth to be kept
in the calculation of more general correlation functions as well.
This suggests to define the matrix valued functions on $\ell$-fold,
resp.\ $r$-fold, tensor products of auxiliary spaces
\begin{subequations}
\label{propforms}
\begin{align}
     {\cal F}_{n; \ell}^{(-)} (\x_1, \dots,\x_\ell|h, h') & =
        \frac{\<\Ps_0 (h)|T(\x_1|h') \otimes \dots \otimes T(\x_\ell|h')|\Ps_n (h')\>}
             {\<\Ps_0 (h)|\Ps_n (h')\> \prod_{j=1}^\ell \La_n (\x_j|h')} \epc \\[1ex]
     {\cal F}_{n; r}^{(+)} (\z_1, \dots,\z_r|h, h') & =
        \frac{\<\Ps_n (h')|T(\z_1|h) \otimes \dots \otimes T(\z_r|h)|\Ps_0 (h)\>}
             {\<\Ps_n (h')|\Ps_0 (h)\> \prod_{j=1}^r \La_0 (\z_j|h)} \epp
\end{align}
\end{subequations}

\begin{corollary}
Using these functions the two-point functions of spin-zero
elementary blocks (\ref{tffseb}) can be written as
\begin{multline} \label{tffszero}
     \bigl\< \bigl({e_1}^{\a_1}_{\be_1} \dots {e_\ell}^{\a_\ell}_{\be_\ell}\bigr) (t) \,
        {e_{1 + m}}^{\g_1}_{\de_1} \dots {e_{r + m}}^{\g_r}_{\de_r} \bigr\>_T = \\
	\lim_{N \rightarrow \infty} \;
	\lim_{h' \rightarrow h} \;
	\lim_{\x_j, \z_k \rightarrow 0} \;
     \sum_n A_n (h, h') \r_n (0|h, h')^m
     \biggl(
     \frac{\r_n \bigl(- \frac{\i t}{\k N}\big|h, h')}
          {\r_n \bigl(\frac{\i t}{\k N}\big|h, h')}
	  \biggr)^\frac{N}{2} \\[1ex] \times
     {{\cal F}_{n; \ell}^{(-)}}^{\a_1 \dots \a_\ell}_{\be_1 \dots \be_\ell}
        (\x_1, \dots,\x_\ell|h, h') \,
     {{\cal F}_{n; r}^{(+)}}^{\g_1 \dots \g_r}_{\de_1 \dots \de_r}
        (\z_1, \dots,\z_r|h, h') \epp
\end{multline}
\end{corollary}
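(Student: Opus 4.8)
The plan is to rewrite the $n$-th summand of (\ref{tffseb}), term by term, so as to exhibit the amplitude $A_n(h,h')$ of (\ref{ampl}), the eigenvalue ratio $\r_n(\la|h,h')$ of (\ref{defrho}) and the normalized form factors (\ref{propforms}); the additional limits $\lim_{h'\to h}$ and $\lim_{\x_j,\z_k\to 0}$ on the right hand side of (\ref{tffszero}) are there precisely to make these objects individually well-defined, and they are undone once the rewriting is complete. Two preliminary observations simplify matters. First, for a spin-zero elementary block one has $s(X)=\2\sum_{j=1}^\ell(\be_j-\a_j)=0$, so the prefactor $\re^{-\i h t s(X)}$ in (\ref{tffseb}) equals $1$ and disappears. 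Second, by (\ref{sequalsps}) the operator $T^\a_\be(\la|h)$ carries pseudo-spin $(\be-\a)/2$; hence for spin-zero $X$ and $Y$ the products $T^{\a_1}_{\be_1}(0|h)\cdots T^{\a_\ell}_{\be_\ell}(0|h)$ and $T^{\g_1}_{\de_1}(0|h)\cdots T^{\g_r}_{\de_r}(0|h)$ are pseudo-spin preserving, so that, $|\Ps_0(h)\>$ having $s_p=0$, only excited states with $s_p=0$ contribute to the sum, and for these $\<\Ps_0(h)|\Ps_n(h')\>$ is generically non-zero once $h'\ne h$, which makes the overlap denominators in (\ref{ampl}) and (\ref{propforms}) harmless.

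The heart of the matter is the elementary identity, valid for $h'\ne h$ and arbitrary $\x_j$, $\z_k$,
\begin{multline} \label{planfact}
     A_n(h,h')\,
     {{\cal F}_{n;\ell}^{(-)}}^{\a_1 \dots \a_\ell}_{\be_1 \dots \be_\ell}(\x_1,\dots,\x_\ell|h,h')\,
     {{\cal F}_{n;r}^{(+)}}^{\g_1 \dots \g_r}_{\de_1 \dots \de_r}(\z_1,\dots,\z_r|h,h') \\
     = \frac{\<\Ps_0(h)|T^{\a_1}_{\be_1}(\x_1|h')\cdots T^{\a_\ell}_{\be_\ell}(\x_\ell|h')|\Ps_n(h')\>}
            {\<\Ps_0(h)|\Ps_0(h)\>\,\prod_{j=1}^\ell\La_n(\x_j|h')}\;
       \frac{\<\Ps_n(h')|T^{\g_1}_{\de_1}(\z_1|h)\cdots T^{\g_r}_{\de_r}(\z_r|h)|\Ps_0(h)\>}
            {\<\Ps_n(h')|\Ps_n(h')\>\,\prod_{j=1}^r\La_0(\z_j|h)} \epc
\end{multline}
which I would verify simply by multiplying out the definitions: the numerator $\<\Ps_0(h)|\Ps_n(h')\>\<\Ps_n(h')|\Ps_0(h)\>$ of $A_n(h,h')$ cancels the two overlap denominators of (\ref{propforms}), and reading off the $(\a_1\dots\a_\ell;\be_1\dots\be_\ell)$ component of the tensor product $T(\x_1|h')\otimes\dots\otimes T(\x_\ell|h')$ turns the matrix element into the ordered product $T^{\a_1}_{\be_1}(\x_1|h')\cdots T^{\a_\ell}_{\be_\ell}(\x_\ell|h')$ (and similarly for the $Y$-block). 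Upon sending $\x_j,\z_k\to0$ and then $h'\to h$, using the continuity in $\la$ and $h$ of the eigenvalues $\La_m(\la|h)$ and of the Bethe states together with their scalar products (all valid at finite Trotter number $N$), the right hand side of (\ref{planfact}) converges to the product of the two matrix-element ratios in the $n$-th summand of (\ref{tffseb}).

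It then remains to match the purely spectral factors. Using $\r_n(\la|h,h')=\La_n(\la|h')/\La_0(\la|h)$ one finds $\lim_{h'\to h}\r_n(0|h,h')^m=(\La_n(0|h)/\La_0(0|h))^m$ and
\begin{equation} \label{plantrot}
     \lim_{h'\to h}\biggl(\frac{\r_n\bigl(-\frac{\i t}{\k N}\big|h,h'\bigr)}
                              {\r_n\bigl(\frac{\i t}{\k N}\big|h,h'\bigr)}\biggr)^{\!\frac N2}
     = \biggl(\frac{\La_0\bigl(\frac{\i t}{\k N}\big|h\bigr)}{\La_n\bigl(\frac{\i t}{\k N}\big|h\bigr)}\biggr)^{\!\frac N2}
       \biggl(\frac{\La_n\bigl(-\frac{\i t}{\k N}\big|h\bigr)}{\La_0\bigl(-\frac{\i t}{\k N}\big|h\bigr)}\biggr)^{\!\frac N2} \epc
\end{equation}
i.e.\ exactly the factors multiplying the matrix-element ratios in (\ref{tffseb}). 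Substituting (\ref{planfact}), (\ref{plantrot}) and the relation for $\r_n^m$ into (\ref{tffseb}) term by term, discarding the trivial prefactor, and moving the termwise limits $\lim_{h'\to h}\lim_{\x_j,\z_k\to0}$ in front of the sum over $n$ — in the same formal sense in which $\lim_{N\to\infty}$ already stands there — one arrives at (\ref{tffszero}).

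The only genuinely delicate point, I expect, is the $h'\to h$ regularization: none of $A_n(h,h')$, ${\cal F}_{n;\ell}^{(-)}$ and ${\cal F}_{n;r}^{(+)}$ has a limit at $h'=h$ in isolation — the amplitude vanishes like $(h-h')^2$ (cf.~\cite{BGKS21a,BGKSS21}) whereas the two form factors blow up like $(h-h')^{-1}$ — and it is only their product, which by (\ref{planfact}) coincides for every $h'\ne h$ with the manifestly regular $n$-th summand of (\ref{tffseb}), that extends continuously to $h'=h$. Everything else is bookkeeping with the conventions (\ref{ampl}), (\ref{defrho}) and (\ref{propforms}) already set up.
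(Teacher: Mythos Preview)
Your proof is correct and follows exactly the approach implicit in the paper: the corollary is stated there without proof, as an immediate consequence of substituting the definitions (\ref{ampl}), (\ref{defrho}) and (\ref{propforms}) into (\ref{tffseb}), and your identity (\ref{planfact}) together with (\ref{plantrot}) makes this substitution explicit. Your remarks on the $h'\to h$ regularization and on the restriction to pseudo-spin zero states are precisely the points the paper alludes to in the surrounding discussion.
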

As will be discussed in the remaining part of this work, this
structure seems to be utterly useful. We will argue, in particular,
that the functions ${\cal F}_{n; m}^{(\pm)}$ defined in (\ref{propforms})
are properly normalized thermal form factors on the lattice. In the
following we refer to them as `the thermal form factors'.

For $n = 0$ the thermal form factors reduce to the generalized
reduced density matrix
\begin{equation} \label{defreddens}
     {\cal D}_m (\x_1, \dots,\x_m|h, h') =
        {\cal F}_{0; m}^{(-)} (\x_1, \dots,\x_m|h, h') =
        {\cal F}_{0; m}^{(+)} (\x_1, \dots,\x_m|h', h)
\end{equation}
studied intensively in the literature by means of the
algebraic Bethe ansatz \cite{BoGo09} and by `the Fermionic
basis approach' \cite{BJMST04b,BJMST06b,BJMST08a,BJMS09a,JMS08},
a method based on the study of functional equations and
on the representation theory of quantum groups (see also
\cite{GKW21,Goehmann20} for pedagogical partial accounts). 
The spirit behind the Fermionic basis approach is to
reconstruct the reduced density matrix from its properties.
The same is possible for the closely related thermal form
factors above. As in case of the density matrix it is
useful to stay more general and return to the $\siv$-staggered
inhomogeneous monodromy matrix (\ref{stagmono}). The
eigenvectors and eigenvalues of the corresponding transfer
matrix $t(\la|\siv, \nuv, h)$ will be denoted $|\Ps_n (\siv, \nuv, h)\>$
and $\La_n (\la|\siv, \nuv, h)$. We shall assume that $M$ is
even (see (\ref{stagmono})) and we restrict ourselves to
states with pseudo-spin zero. For every $\xiv = (\x_1, \dots,
\x_m) \in {\mathbb C}^m$ the functions 
\begin{subequations}
\label{geninhomff}
\begin{align}
     {\cal F}_{n; m}^{(-)} (\xiv|\siv, \nuv, h, h') & =
        \frac{\<\Ps_0 (\siv, \nuv, h)|T(\x_1|\siv, \nuv, h')
	      \otimes \dots \otimes T(\x_m|\siv, \nuv, h')|\Ps_n (\siv, \nuv, h')\>}
             {\<\Ps_0 (\siv, \nuv, h)|\Ps_n (\siv, \nuv, h')\>
	      \prod_{j=1}^m \La_n (\x_j|\siv, \nuv, h')} \epc \\[1ex]
     {\cal F}_{n; m}^{(+)} (\xiv|\siv, \nuv, h, h') & =
        \frac{\<\Ps_n (\siv, \nuv, h')|T(\x_1|\siv, \nuv, h)
	      \otimes \dots \otimes T(\x_m|\siv, \nuv, h)|\Ps_0 (\siv, \nuv, h)\>}
             {\<\Ps_n (\siv, \nuv, h')|\Ps_0 (\siv, \nuv, h)\>
	      \prod_{j=1}^m \La_0 (\x_j|\siv, \nuv, h)}
\end{align}
\end{subequations}
are then well-defined generalizations of the thermal form
factors (\ref{propforms}). They reduce to (\ref{propforms})
if we substitute $\siv = \siv^{(0)}, \nuv = \nuv^{(0)}$ (see
(\ref{inhoms})). We also extend the definition (\ref{defrho})
to the general $\siv$-staggered and inhomogeneous case,
\begin{equation} \label{defrhogen}
     \r_n (\la|\siv, \nuv, h, h') =
        \frac{\La_n (\la|\siv, \nuv, h')}{\La_0 (\la|\siv, \nuv, h)} \epp
\end{equation}

\subsection{Properties of the thermal form factors}
The above defined generalized thermal form factors connect
our work on dynamical correlation functions with previous
work on factorized correlation functions of the XXZ chain
\cite{BGKS07,BoGo09,BJMST08a,JMS08} which was based on
the properties of a generalized reduced density matrix
(equation (\ref{defreddens}) in the $\siv$-staggered inhomogeneous
case). The properties of both objects are very similar. 
In order to state these properties let us fix some more
notation.

We first of all recall the natural right action of the
symmetric group on row vectors,
\begin{equation}
     \nuv P = (\n_{P1}, \dots, \n_{PM})
\end{equation}
for all $P \in \mathfrak{S}^M$. The neighbour transposition,
a special permutation that interchanges $j$ and $j+1$,
will be denoted $\Pi_{j, j+1}$. We further introduce the maps
$\iota_j$ and $S_j$,
\begin{equation}
     (\siv \iota_j)_k = \begin{cases}
                         - \s_k & \text{if $k = j$} \\
			   \s_k & \text{if $k \ne j$,}
                      \end{cases} \qqd
     (\nuv S_j)_k = \begin{cases}
                     \n_k + \i \s_k \g & \text{if $k = j$} \\
		     \n_k & \text{if $k \ne j$,}
                  \end{cases}
\end{equation}
and the special staggering $\siv_-$ defined by $(\s_-)_k = - 1$
for all $k \in \llb 1, M \rrb$. In addition, we shall need
the inhomogeneous and twisted version of the row-to-row monodromy
matrix (\ref{monoperp}),
\begin{equation}
     T_{\perp} (\la|\xiv, h) =
        R_{0, m} (\la, \x_m) \dots R_{0, 1} (\la, \x_1) \th_0 (h/T) \epp
\end{equation}
\begin{lemma} \label{lem:propffs}
Properties of the spin-zero generalized thermal form factors.
\begin{enumerate}
\item
Normalization condition.
\begin{equation}
     \tr_{1, \dots, m}
        \bigl\{{\cal F}_{n; m}^{(\pm)} (\xiv|\siv, \nuv, h, h')\bigr\} = 1 \epp
\end{equation}
\item
Reduction relations.
\begin{subequations}
\label{redurel}
\begin{align}
     & \tr_m \bigl\{{\cal F}_{n; m}^{(\pm)} (\xiv|\siv, \nuv, h, h')\bigr\} =
              {\cal F}_{n; m - 1}^{(\pm)} ((\x_1, \dots, \x_{m - 1})|\siv, \nuv, h, h')
	      \epc \\[1ex]
     & \tr_1 \bigl\{q^{\pm \a \s_1^z}
           {\cal F}_{n; m}^{(\pm)} (\xiv|\siv, \nuv, h, h')\bigr\} \notag \\ 
	   & \mspace{126.mu} = \r_n^{\pm 1} (\x_1|\siv, \nuv, h, h') \, 
	   {\cal F}_{n; m - 1}^{(\pm)} ((\x_2, \dots, \x_m)|\siv, \nuv, h, h')
\end{align}
\end{subequations}
with $\a$ as in (\ref{defalpha}).
\item
Exchange relation. Let $\check R = PR$. Then
\begin{multline} \label{exchangerel}
     \check R_{j, j+1} (\x_j, \x_{j+1})
        {\cal F}_{n; m}^{(\pm)} (\xiv|\siv, \nuv, h, h') \\ =
        {\cal F}_{n; m}^{(\pm)} (\xiv \Pi_{j, j+1}|\siv, \nuv, h, h')
        \check R_{j, j+1} (\x_j, \x_{j+1})
\end{multline}
for $j \in \llb 1, m - 1\rrb$.
\item
$U(1)$ symmetry. For any $\k \in {\mathbb C}$
\begin{equation}
     \bigl[{\cal F}_{n; m}^{(\pm)} (\xiv|\siv, \nuv, h, h'),
        \bigl(\th(\k) \bigr)^{\otimes m}\bigr] = 0 \epp
\end{equation}
\item
Row reflection (`crossing').
\begin{equation}
     {\cal F}_{n; m}^{(\pm)} (\xiv|\siv, \nuv, h, h')
        = {\cal F}_{n; m}^{(\pm)} (\xiv|\siv \iota_j, \nuv S_j, h, h')
\end{equation}
for all $j \in \llb 1, M \rrb$.
\item
Commutativity of rows.
\begin{equation}
     {\cal F}_{n; m}^{(\pm)} (\xiv|\siv, \nuv, h, h')
        = {\cal F}_{n; m}^{(\pm)} (\xiv|\siv P, \nuv P, h, h')
\end{equation}
for all $P \in \mathfrak{S}^M$.
\item
Transposition property.
\begin{multline} \label{transprop}
     {{\cal F}_{n; m}^{(-)}}_{\be_1, \dots, \be_m}^{\a_1, \dots, \a_m}
        (\xiv|\siv, \nuv, h, h') =
     \Bigl[\prod_{j=1}^m \r_n^{-1} (\x_j|\siv, \nuv, h, h')\Bigr] \\ \times
     \bigl((q^{\a \s_z})^{\otimes m}
     {\cal F}_{n; m}^{(+)}\bigr)_{\a_m, \dots, \a_1}^{\be_m, \dots, \be_1}
        ((\x_m, \dots, \x_1)|\siv, \nuv, h, h') \epp
\end{multline}
\item
The functions ${\cal F}_{n; m}^{(\pm)} (\xiv|\siv, \nuv, h, h')$ are
meromorphic in all $\x_j$, $j \in \llb 1, m \rrb$.
\item
Asymptotic behaviour.
\begin{subequations}
\begin{align}
     & \lim_{\Im \x_m \rightarrow \pm \infty}
        {\cal F}_{n; m}^{(+)} (\xiv|\siv, \nuv, h, h') \notag \\[-2ex] & \mspace{135.mu}
        = {\cal F}_{n; m - 1}^{(+)} ((\x_1, \dots, \x_{m-1})|\siv, \nuv, h, h')
	  \frac{\th_m \bigl(\frac{h}{T}\bigr)}
	       {\tr \bigl\{\th \bigl(\frac{h}{T}\bigr)\bigr\}} \epc \\
     & \lim_{\Im \x_m \rightarrow \pm \infty}
        {\cal F}_{n; m}^{(-)} (\xiv|\siv, \nuv, h, h') \notag \\[-2ex] & \mspace{135.mu}
        = {\cal F}_{n; m - 1}^{(-)} ((\x_1, \dots, \x_{m-1})|\siv, \nuv, h, h')
	  \frac{\th_m \bigl(\frac{h'}{T}\bigr)}
	       {\tr \bigl\{\th \bigl(\frac{h'}{T}\bigr)\bigr\}} \epp
\end{align}
\end{subequations}
\item
Discrete form of the reduced $q$-Knizhnik-Zamolodchikov equation.
The functions ${\cal F}_{n; m}^{(\pm)}$ satisfy the `discrete
functional equations'
\begin{align} \label{rqKZ}
     & {\cal F}_{n; m}^{(\pm)}
          \bigl((\x_1, \dots, \x_{m-1}, \x_m - \i \g)
	         \big|\siv_-, \nuv, h, h'\bigr) =
        \r_n^{\mp 1} (\x_m|\siv_-, \nuv, h, h') \notag \\[1ex] & \times
	\tr_0 \bigl\{T_{\perp}^{-1} (\x_m|\xiv, h')
	             {\cal F}_{n; m}^{(\pm)} (\xiv|\siv_-, \nuv, h, h')
		     \s_0^y P_{0,m} \s_0^y T_{\perp} (\x_m|\xiv, h) \bigr\} \epc
\end{align}
if $\x_m = \n_1$.
\end{enumerate}
\end{lemma}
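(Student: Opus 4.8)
The plan is to trace through the definitions (\ref{geninhomff}) of ${\cal F}_{n;m}^{(\pm)}$ and reduce the claimed identity to a statement purely about the $\siv_-$-staggered monodromy matrix $T(\la|\siv_-,\nuv,h)$ evaluated at the distinguished point $\x_m = \n_1$. First I would specialize to $\siv = \siv_-$, so that every factor in (\ref{stagmono}) is of the transposed type $R_{j,0}^{t_1}(\n_j,\la)$, and write the numerator of ${\cal F}_{n;m}^{(\pm)}\bigl((\x_1,\dots,\x_{m-1},\x_m-\i\g)\big|\siv_-,\nuv,h,h'\bigr)$. The key input is that the $m$-th tensor factor $T(\x_m-\i\g|\siv_-,\nuv,\cdot)$, through the crossing relation (\ref{crisscross}), can be rewritten as a similarity-transformed copy of $T(\x_m|\siv_-,\nuv,\cdot)$ conjugated by $\s^y$, up to the scalar $b$-factors which will build up the ratio $\r_n^{\mp1}(\x_m|\cdots)$ once the eigenvalue equation is used. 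Concretely, I expect to use $R_{2,1}^{t_1}(\m,\la) = \s_1^y R_{1,2}(\la+\i\g,\m)\s_1^y / b(\la+\i\g,\m)$ applied in each of the $M$ factors of the product defining $T$, so that shifting the spectral parameter of the $m$-th auxiliary line by $-\i\g$ turns the $t_1$-transposed $R$'s back into ordinary $R$'s, i.e.\ converts that line into (part of) an ordinary row-to-row monodromy matrix $T_\perp(\x_m|\xiv,\cdot)$ acting from the left, and its inverse acting from the right — which is exactly the structure appearing on the right-hand side of (\ref{rqKZ}).

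The second ingredient is the reduction/inversion mechanism at $\x_m=\n_1$: at this value the regularity property (\ref{reg}) makes one $R$-matrix in the $\siv_-$-staggered monodromy collapse to a permutation, so that the action of $T(\n_1|\siv_-,\nuv,\cdot)$ on the $0$-th and $m$-th spaces degenerates and can be absorbed, via Lemma~\ref{lem:staginv} (the staggered inversion formula) together with commutativity of rows (part (6)) and row reflection (part (5)) already proved in this same lemma, into a shift of the quantum space. This is the step that produces the trace $\tr_0$ over the auxiliary line and inserts the $\s_0^y P_{0,m}\s_0^y$ factor; the permutation $P_{0,m}$ is precisely the residue of the collapsing $R$-matrix at the regular point, and the two $\s^y$'s are the ones left over from the crossing transformation in the previous paragraph. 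I would organize this by first establishing the identity at the level of the monodromy-matrix operator product (an identity in $\End(\bigotimes_{j=0}^M V_j)$), then sandwiching between $\langle\Ps_0|$ and $|\Ps_n\rangle$ (or the reverse, for the $(+)$ case) and dividing by the normalization; the eigenvalue factors $\prod\La_n(\x_j|\cdots)$ in the denominator of (\ref{geninhomff}) reassemble, after the shift $\x_m\to\x_m-\i\g$ and the use of $t(\n_k|\siv,\nuv,h)|\Ps_n\rangle = \La_n(\n_k|\cdots)|\Ps_n\rangle$, into the ratio $\r_n^{\mp1}(\x_m|\siv_-,\nuv,h,h')$ displayed in (\ref{rqKZ}).

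The main obstacle I anticipate is bookkeeping the scalar prefactors and the precise placement of the twist $\th_0(h/T)$ versus $\th_0(h'/T)$: the bra-state carries parameter $h$ and the ket-state (and the inserted $T$'s) carries $h'$, so the two row-to-row monodromy matrices $T_\perp(\x_m|\xiv,h)$ and $T_\perp^{-1}(\x_m|\xiv,h')$ on the right-hand side of (\ref{rqKZ}) must emerge with the correct fields attached, and the $b$-factors from (\ref{crisscross}) must cancel cleanly against the eigenvalue ratio rather than leaving a spurious residue. A secondary subtlety is that the $\siv_-$ restriction is essential — for a generic staggering the crossing move in the second paragraph would not uniformly convert all $M$ factors — so I would be careful to invoke it at the very start and note that the inhomogeneous, not-necessarily-$\siv_-$ properties (1)–(9) are what make the $\siv_-$ case self-consistent. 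The analogous static statement for the reduced density matrix in \cite{AuKl12} (and its lattice precursor in \cite{BJMST04b,JMS21}) provides the template; the only genuinely new point here is carrying the excited-state label $n$ and the second field $h'$ through the argument, which is harmless because $|\Ps_n(\siv_-,\nuv,h')\rangle$ is still a genuine eigenvector of $t(\la|\siv_-,\nuv,h')$ and all the manipulations above act on the monodromy matrix, not on the choice of eigenstate.
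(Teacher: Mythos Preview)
Your overall strategy for part~(x) --- crossing plus regularity at $\x_m = \n_1$ plus the eigenvalue equation, with \cite{AuKl12} as template --- is the same route the paper takes: its proof of (x) is literally ``use graphical notation, then closely follow \cite{AuKl12}''. You also correctly isolate the only genuinely new feature relative to the density-matrix case, namely the factor $\r_n^{\mp1}$ coming from the mismatch between $\<\Ps_0(h)|$ and $|\Ps_n(h')\>$, which is exactly the point the paper stresses in its remark following the proof.

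However, your account of \emph{how} $T_\perp(\x_m|\xiv,h)$ and $T_\perp^{-1}(\x_m|\xiv,h')$ arise is misrouted. You propose applying the crossing relation (\ref{crisscross}) ``in each of the $M$ factors'' of the staggered monodromy $T(\x_m - \i\g|\siv_-,\nuv,\cdot)$ so as to ``convert that line into $T_\perp(\x_m|\xiv,\cdot)$''. But $T_\perp(\x_m|\xiv,h)$ is a product of $m$ $R$-matrices $R_{0,j}(\x_m,\x_j)$ over the \emph{auxiliary} spaces $1,\dots,m$, whereas the staggered monodromy has $M$ factors over the quantum spaces $1,\dots,M$; these operators act on disjoint tensor factors, and no amount of crossing on the vertical line alone manufactures the horizontal one. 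In the graphical argument of \cite{AuKl12} the $T_\perp$ factors appear at a different stage: once the $m$-th auxiliary line has been flipped by crossing (this is the step that supplies the $\s_0^y P_{0,m}\s_0^y$ and the shift $\x_m\to\x_m-\i\g$), it is dragged past the other auxiliary lines by Yang-Baxter, and those crossings are what assemble into $T_\perp$ on one side and $T_\perp^{-1}$ on the other. The passage through the $M$ quantum rows, in combination with the eigenstate caps, is what produces the eigenvalue factors that combine into $\r_n^{\mp1}$; the condition $\x_m=\n_1$ furnishes the regular vertex that lets the loop close. Your invocation of Lemma~\ref{lem:staginv} is likewise a red herring: the staggered inversion formula reconstructs a local operator $x_j$ from transfer matrices, which is a different mechanism from the rqKZ move and is not used here.
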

\begin{proof}
Properties (i)-(iii) are direct consequences of the definition
(\ref{propforms}).

In order to prove (iv) we rewrite (\ref{pseudospingroup})
in the form
\begin{equation}
     T(\la|h) \th(\k) \re^{\k \Si} = \re^{\k \Si} \th(\k) T(\la|h)
\end{equation}
with $\Si$ according to (\ref{pseudospinsstag}). Then we use that
the states in the definition of ${\cal F}_{n; m}^{(\pm)}$ have spin
zero by hypothesis, whence
\begin{equation}
     \re^{\k \Si} |\Ps_n (h)\> = |\Ps_n (h)\> \epc \qd
     \<\Ps_n (h)| \re^{- \k \Si} = \<\Ps_n (h)| \epp
\end{equation}

(v) is a consequence of the crossing relation (\ref{crisscross}).
This can, for instance, be seen by representing ${\cal F}_{n; m}^{(\pm)}$,
including the states involved in its definition, graphically (cf.\ e.g.\
\cite{Goehmann20} for an introduction to the graphical method).

In a similar way we can understand (vi). We represent the
left hand side graphically and use (v) in order to rewrite
it in such a way that all horizontal lines are oriented to the
left. Then commutativity of the rows follows from the
Yang-Baxter algebra relations with $R$-matrix (\ref{rmatrix}),
which imply that $[T_+^+ (\la|\siv, \nuv, h), T_+^+ (\m|\siv, \nuv, h)]
= 0$. Finally, (v) is used to restore the original $\siv$-staggering.

(vii) can again most easily be shown in the language of the
graphical representation. The proof relies on the symmetry
(\ref{rsym}) of the $R$-matrix which implies that pictorial
expressions stay invariant if the direction of all lines
is reversed simultaneously. In addition, (iv) and (vi) have
to be employed.

(viii) is obvious. (ix) holds because
\begin{equation}
     \lim_{\Im \la \rightarrow \pm \infty} R_{0, j} (\la|\n_j)
        = q^{\mp\2} \bigl({e_0}_+^+ q^{\pm \2 \s_j^z} 
                          + {e_0}_-^- q^{\mp \2 \s_j^z}\bigr)
\end{equation}
and therefore
\begin{equation}
     \lim_{\Im \la \rightarrow \pm \infty} T (\la|\siv, \nuv, h) =
        \th_0 \Bigl(\frac{h}{T}\Bigr) q^{\mp \2 \sum_{j=1}^M \s_j}
	\bigl({e_0}_+^+ q^{\pm \Si} + {e_0}_-^- q^{\mp \Si}\bigr) \epp
\end{equation}

For the proof of (x) one may again use graphical notation.
Then one can closely follow \cite{AuKl12}.
\end{proof}
\begin{remark}
We would like to stress once more that the properties of the generalized
thermal form factors are very similar to the properties of the
generalized reduced density matrix. The information about the excited
states enters the equations in Lemma~\ref{lem:propffs} only through
the function $\r_n$ which appears only in eqs.\ (\ref{redurel}),
(\ref{transprop}) and (\ref{rqKZ}). In particular, the modification
of the rqKZ equation (\ref{rqKZ}) by such a factor appears new and
remarkable to us. Notice that this factor is non-trivial even for
$n = 0$, unless the regularization parameter $\a$ is sent to
zero.
\end{remark}

\subsection{Thermal form factors at work -- the simplest examples}
After the rather general mathematical considerations of the previous
subsection we now come back to the concrete case of thermal form
factors realized by the special $\siv$-staggering (\ref{inhoms}).
It is well known \cite{BGKS07,BoGo09} that the reduction relations
(\ref{redurel}) and the exchange relation (\ref{exchangerel}) are
sufficient to calculate the simplest form factors. In particular,
the form factors of `ultra local operators', $\ell = r = 1$, follow
directly from the reduction relations.

Let us perform this simple but instructive exercise. We set $\ell =
r = 1$ in the reduction relations (\ref{redurel}) and write for short
\begin{equation}
     {F_p}^\a_\be = {{\cal F}_{n; 1}^{(+)}}^\a_\be (\x|h, h') \epc \qd
     {F_m}^\a_\be = {{\cal F}_{n;1}^{(-)}}^\a_\be (\x|h, h') \epp
\end{equation}
Then the reduction relations become
\begin{subequations}
\begin{align}
     {F_p}^+_+ + {F_p}^-_- & = 1 \epc
     & q^\a {F_p}^+_+ + q^{- \a} {F_p}^-_- & = \r_n \epc \\[1ex]
     {F_m}^+_+ + {F_m}^-_- & = 1 \epc
     & q^{- \a} {F_m}^+_+ + q^\a {F_m}^-_- & = 1/\r_n \epc
\end{align}
\end{subequations}
implying that
\begin{equation} \label{fpp1}
     {F_p}^+_+ = 1 - {F_p}^-_- = \frac{\r_n - q^{- \a}}{q^\a - q^{- \a}} \epc \qd
     {F_m}^+_+ = 1 - {F_m}^-_- = \frac{q^\a - 1/\r_n}{q^\a - q^{- \a}} \epp
\end{equation}
Thus, we obtain the following expressions for the thermal form
factors of the magnetization operators,
\begin{subequations}
\label{magpm}
\begin{align}
     \tr\bigl\{\tst{\2} \s^z \, {\cal F}_{n; 1}^{(-)} (\x|h, h')\bigr\} & =
        \frac{\2(q^\a + q^{- \a}) - 1/\r_n (\x|h, h')}{q^\a - q^{- \a}}
	\epc \\[1ex]
     \tr\bigl\{\tst{\2} \s^z \, {\cal F}_{n; 1}^{(+)} (\x|h, h')\bigr\} & =
        \frac{\r_n (\x|h, h') - \2(q^\a + q^{- \a})}{q^\a - q^{- \a}}
	\epp
\end{align}
\end{subequations}

Using the definition (\ref{defalpha}) of $\a$ we conclude that,
for all $n > 0$,
\begin{multline}
     \lim_{h' \rightarrow h}
     \lim_{\x, \z \rightarrow 0} A_n (h, h')
     \tr\bigl\{\s^z \, {\cal F}_{n; 1}^{(-)} (\x|h, h')\bigr\}
     \tr\bigl\{\s^z \, {\cal F}_{n; 1}^{(+)} (\z|h, h')\bigr\} \\[.5ex] =
     2 T^2 \bigl(\6_{h'}^2 A_n (h, h')\bigr)\bigr|_{h' = h}
     \bigl(\r_n (0|h, h) - 2 + 1/\r_n(0|h, h)\bigr) \epc
\end{multline}
which is (up to a shift of the argument of $\r_n$ by $\i \g/2$ that 
occurs because of a change of our conventions) what we had in equation
(30) of our paper \cite{BGKS21a}. Thus, we have successfully reproduced
the formula for the longitudinal correlation function. If $n = 0$ in
(\ref{magpm}), then, for $\x = \z = 0$, both expressions reduce to the
magnetization density in the limit $h' \rightarrow h$, as it should be.

As we know from our previous work \cite{BGKS07,BoGo09} the reduction
relations combined with the exchange relation can be used to calculate
the expectation values of some of the neighbour two-point functions.
We shall see that this can be promoted to the level of thermal
form factors. We set $\ell = r = 2$ in the reduction relations and
use the shorthand notation
\begin{equation}
     {F_p}^{\a \g}_{\be \de} =
        {{\cal F}_{n; 2}^{(+)}}^{\a \g}_{\be \de} (\x_1, \x_2|h, h') \epc \qd
     {F_m}^{\a \g}_{\be \de} =
        {{\cal F}_{n; 2}^{(-)}}^{\a \g}_{\be \de} (\x_1, \x_2|h, h') \epp
\end{equation}
Furthermore, a bar over a function symbol will mean that $\x_1$ and $\x_2$
are interchanged. The reduction relations then imply that
\begin{subequations}
\begin{align}
     q^\a {F_p}^{+-}_{+-} - q^{- \a} {F_p}^{-+}_{-+} & =
        q^\a {F_p}^+_+ - \r_n {\overline{F}_p}^+_+ \epc \\[1ex]
     q^{- \a} {F_m}^{+-}_{+-} - q^\a {F_m}^{-+}_{-+} & =
        q^{- \a} {F_m}^+_+ - \r_n^{-1} {\overline{F}_m}^+_+ \epp
\end{align}
\end{subequations}
Inserting (\ref{fpp1}) we obtain
\begin{subequations}
\label{diagreduced}
\begin{align}
     (q^\a - q^{- \a})
     \bigl(q^\a {F_p}^{+-}_{+-} - q^{- \a} {F_p}^{-+}_{-+}\bigr) & =
        (q^\a + q^{- \a}) \r_n -  1 - \r_n \overline{\r}_n \epc \\[1ex]
     (q^\a - q^{- \a})
     \bigl(q^{- \a} {F_m}^{+-}_{+-} - q^\a {F_m}^{-+}_{-+}\bigr) & =
        1 + \r_n^{-1} \overline{\r}_n^{\, -1} - (q^\a + q^{- \a}) \r_n^{-1} \epp
\end{align}
\end{subequations}
On the other hand, we can infer from the exchange relation that
\begin{equation} \label{diagexchanged}
     2 \bigl(F^{+-}_{-+} - F^{-+}_{+-}\bigr) =
        (1/g + g) \bigl(F^{+-}_{+-} - F^{-+}_{-+}\bigr) +
        (1/g - g) \bigl(\overline F^{+-}_{+-} - \overline F^{-+}_{-+}\bigr) \epc
\end{equation}
where $F = F_p, F_m$ and
\begin{equation}
     g(\x_1, \x_2) = \frac{\sin(\i \g)}{\sin (\x_2 - \x_1)} \epp
\end{equation}

Now, asymptotically for $\a \rightarrow 0$,
\begin{equation}
     F^{\be \g}_{\de \e} \sim \frac{1}{\a} \epc
\end{equation}
where $F = F_p, F_m$, with the simple pole stemming from the pairing
$\<\Ps_0 (h)|\Ps_n (h')\>$ or $\<\Ps_n (h')|\Ps_0 (h)\>$ occurring in
the denominator in the definition of ${\cal F}^{(\pm)}_{n; 2}$. Thus,
combining (\ref{diagreduced}) and (\ref{diagexchanged}) we obtain,
asymptotically for $\a \rightarrow 0$, that
\begin{subequations}
\begin{align}
     {F_p}^{+-}_{-+} - {F_p}^{-+}_{+-} & \sim
        \frac{1}{q^\a - q^{- \a}}
	\bigl[g (\r_n - \overline \r_n)
	       - (1/g) (1 - \r_n)(1 - \overline \r_n)\bigr]_{h' = h} \epc \\[1ex]
     {F_m}^{+-}_{-+} - {F_m}^{-+}_{+-} & \sim
        \frac{1}{q^\a - q^{- \a}} \frac{1}{\r_n \overline \r_n}
	\bigl[g (\r_n - \overline \r_n)
	       + (1/g) (1 - \r_n)(1 - \overline \r_n)\bigr]_{h' = h} \epp
\end{align}
\end{subequations}
This further simplifies by taking the homogeneous limit,
\begin{subequations}
\label{spincurffs}
\begin{align}
     \lim_{\x_2 \rightarrow \x_1}
     \tr \bigl\{\i (\s_1^- \s_2^+ - \s_1^+ \s_2^-)
                {\cal F}_{n; 2}^{(+)} (\x_1, \x_2|h, h')\bigr\} & \sim
        - \frac{\sh(\g) \r_n' (\x_1|h, h)}{q^\a - q^{- \a}} \epc \\[1ex]
     \lim_{\x_2 \rightarrow \x_1}
     \tr \bigl\{\i (\s_1^- \s_2^+ - \s_1^+ \s_2^-)
                {\cal F}_{n; 2}^{(-)} (\x_1, \x_2|h, h')\bigr\} & \sim
        \frac{\sh(\g) \6_{\x_1}1/ \r_n (\x_1|h, h)}{q^\a - q^{- \a}} \epp
\end{align}
\end{subequations}
Finally, assembling all the pieces, we end up with
\begin{multline} \label{curcurfinite}
     \lim_{h' \rightarrow h} \lim_{\x_j, \z_k \rightarrow 0}
      A_n (h, h') \\ \times
     \tr \bigl\{\i (\s_1^- \s_2^+ - \s_1^+ \s_2^-)
                {\cal F}_{n; 2}^{(-)} (\x_1, \x_2|h, h')\bigr\}
     \tr \bigl\{\i (\s_1^- \s_2^+ - \s_1^+ \s_2^-)
                {\cal F}_{n; 2}^{(+)} (\z_1, \z_2|h, h')\bigr\} \\[1ex] =
     \frac{\sh^2 (\g)  T^2}{2} \bigl(\6_{h'}^2 A_n (h, h')\bigr)\bigr|_{h' = h}
     \biggl(\frac{\r_n' (0|h, h)}{\r_n(0|h, h)}\biggr)^2 \epp
\end{multline}
Here we have implicitly assumed that $n \ne 0$. For $n = 0$ the
amplitude $A_0 \rightarrow 1$ as $h' \rightarrow h$, and the
product of the form factors goes to the square of the expectation
value of the current operator, which is vanishing, since the
current is odd under parity transformations, while the Hamiltonian
is even.

\subsection{Multiple-integral representations}
The explicit description of the general thermal form factors
of spin zero will be revealing more involved structures. We can
either describe them by means of multiple integrals or with
recourse to the so-called Fermionic basis. Both descriptions 
have their merits. As we shall see the multiple-integral
representations of the form factors are relatively simple
closed-form expressions. However, due to the implicit construction
of the involved integration contours, they are not efficient
for the actual computation of the form factors. For the
latter purpose the Fermionic basis is more appropriate.

We shall start our discussion with the multiple-integral
representations. They can be obtained by closely following
Appendix~A of the paper \cite{BoGo09}. The information
about the dominant state and the excited state composing
a specific form factor can be encoded in an auxiliary function
$\fa_n$ that satisfies a non-linear integral equation.
For the derivation of such integral equations in the context
of the dynamical case see \cite{Sakai07,GKKKS17}. The integral
equations can be written in several equivalent forms. Here
we choose a form in which the excitations are distinguished
by equivalence classes of contours ${\cal C}_n$.

Two functions, the bare energy
\begin{equation}
     \e_0 (\la) = h - \frac{4J (\D^2 - 1)}{\D - \cos(2 \la)} \epc    
\end{equation}
and the kernel function
\begin{equation} \label{kerf}
     K (\la) = \ctg(\la - \i \g) - \ctg(\la + \i \g) \epc
\end{equation}
are needed in the definition of the non-linear integral equation,
\begin{equation} \label{nlie}
     \ln \fa_n (\la|h) = - \frac{\e_0 (\la - \i \g/2)}{T}
        + \int_{{\cal C}_n} \frac{\rd \m}{2 \p \i} K(\la - \m) 
	     \ln_{{\cal C}_n} (1 + \fa_n) (\m|h) \epp
\end{equation}
The simple closed contours ${\cal C}_n$ are such that
$0 \in \Int {\cal C}_n$, $\la \pm \i \g \in \Ext {\cal C}_n$
if $\la \in \Int {\cal C}_n$, and
\begin{equation}
     \int_{{\cal C}_n} \rd \la \:
        \frac{\fa_n' (\la|h)}{1 + \fa_n (\la|h)} = 0 \epp
\end{equation}
The function $\ln_{{\cal C}_n} (1 + \fa_n)$ is the logarithm along
the contour ${\cal C}_n$. For its definition we fix a point
$\k_n \in {\cal C}_n$. For every $\la \in {\cal C}_n$ define
a partial contour ${\cal C}_{\k_n}^\la$ running counterclockwise
along ${\cal C}_n$ from $\k_n$ to $\la$. Then
\begin{equation}
     \ln_{{\cal C}_n} (1 + \fa_n) (\la|h) = 
        \int_{{\cal C}_{\k_n}^\la} \rd \m \:
	   \frac{\fa_n' (\m|h)}{1 + \fa_n (\m|h)}
	   + \ln \bigl(1 + \fa_n (\k_n|h)\bigr) \epc
\end{equation}
where the symbol `$\ln$' on the right hand side denotes the
principal branch of the logarithm. In the non-linear integral
equation (\ref{nlie}) for the auxiliary function the Trotter
limit $N \rightarrow \infty$ is already taken. For the finite
Trotter number version of this equations see \cite{GKKKS17}.

Two more important sequences of functions $G_n^{(\pm)}$ occurring
in the multiple-integral representation are defined as the
solutions of the linear integral equations
\begin{multline} \label{defg}
     G_n^{(\pm)} (\la, \x) = q^{\mp \a} \ctg(\la - \x + \i \g) 
        - \r_n^{\pm 1} (\x|h, h') \ctg(\la - \x) \\
	- \int_{{\cal C}_n^{(\pm)}} \rd m_n^{(\pm)} (\m) \:
	   K_{\mp \a} (\la - \m) G_n^{(\pm)} (\m, \x) \epp
\end{multline}
Here $\x \in \Int {\cal C}_n^{(\pm)}$,
\begin{equation}
     K_\a (\la) = q^{-\a} \ctg(\la - \i \g) - q^\a \ctg(\la + \i \g)
\end{equation}
is a deformed version of the kernel function (\ref{kerf}), and
the integration is performed with respect to the `measures'%
\footnote{Note that we have swapped the definitions of $\rd m^{(+)}$
and $\rd m^{(-)}$ as compared with our previous work \cite{DGKS16b}.}
\begin{equation}
     \rd m_n^{(+)} (\la) =
        \frac{\rd \la}
	     {2 \p \i \r_n (\la|h, h') \bigl(1 + \fa_0 (\la|h)\bigr)} \epc \qd
     \rd m_n^{(-)} (\la) =
        \frac{\rd \la \: \r_n (\la|h, h')}
	     {2 \p \i \bigl(1 + \fa_n (\la|h')\bigr)} \epp
\end{equation}
The contours ${\cal C}_n^{(\pm)}$ are deformations of the
contour ${\cal C}_n$ in such a way that the zeros of
$\r_n (\cdot|h,h')$ are excluded from ${\cal C}_n$ for
${\cal C}_n^{(+)}$, while the poles of $\r_n (\cdot|h,h')$
are excluded from ${\cal C}_n$ for ${\cal C}_n^{(-)}$.

In preparation of the following theorem we finally introduce
the short-hand notations
\begin{equation}
     \rd \overline{m}_n^{(+)} (\la) = \fa_0 (\la|h) \rd m_n^{(+)} (\la) \epc \qd
     \rd \overline{m}_n^{(-)} (\la) = \fa_n (\la|h') \rd m_n^{(-)} (\la) \epp
\end{equation}

\begin{theorem}
For all $\x_j \in \Int {\cal C}_n^{(\pm)}$, $j = 1, \dots, m$,
the form factors ${\cal F}_{n, m}^{(\pm)} (\xiv|h, h')$ of spin-zero
operators have the multiple-integral representations
\begin{multline} \label{ffmultint}
     {{\cal F}_{n;m}^{(\pm)}}^{\a_1 \dots \a_m}_{\be_1 \dots \be_m} (\xiv|h, h') = \\
        \biggl[ \prod_{j=1}^p
	        \int_{{\cal C}_n^{(\pm)}} \rd m_n^{(\pm)} (\la_j) \:
	        F_{x_j}^+ (\la_j) \biggr]
        \biggl[ \prod_{j = p + 1}^m
	        \int_{{\cal C}_n^{(\pm)}} \rd \overline{m}_n^{(\pm)} (\la_j) \:
	        F_{x_j}^- (\la_j) \biggr] \\ \times
    \frac{\det_m \bigl\{ - G_n^{(\pm)} (\la_j, \x_k)\bigr\}}
         {\prod_{1 \le j < k \le m} \sin(\la_j - \la_k + \i \g) \sin(\x_k - \x_j)}
\end{multline}
where
\begin{equation}
     F_x^\pm (\la) = \biggl[\prod_{k=1}^{x-1} \sin(\la - \x_k)\biggr]
                     \biggl[\prod_{k = x + 1}^m \sin(\la - \x_k \pm \i \g)\biggr] \epp
\end{equation}
The letter $p$ denotes the number of plusses in the sequence
$(\be_j)_{j=1}^m$ and the sequence $(x_j)_{j=1}^m$ is defined as
\begin{equation}
     x_j = \begin{cases}
              \e_j^+ & j = 1, \dots, p \\
	      \e_{m - j + 1}^- & j = p + 1, \dots, m
           \end{cases}
\end{equation}
with $\e_j^+$ being the position of the $j$th plus in 
$(\be_j)_{j=1}^m$, $\e_j^-$ that of the $j$th minus in
$(\a_j)_{j=1}^m$.
\end{theorem}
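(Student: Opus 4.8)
The plan is to follow, and adapt step by step, the construction of Appendix~A of \cite{BoGo09}, where the analogue of (\ref{ffmultint}) was established for the generalized reduced density matrix ${\cal D}_m = {\cal F}_{0; m}^{(\pm)}$, keeping track of the modifications caused by replacing the dominant state in the ket (resp.\ in the bra) by an excited state $|\Ps_n(h')\>$. First I would represent both states as Bethe vectors of the quantum transfer matrix, $|\Ps_0(h)\> = \prod_j B(\la_j^{(0)})|0\>$ and $|\Ps_n(h')\> = \prod_j B(\mu_j^{(n)})|0\>$, noting that for the XXZ $R$-matrix (\ref{rmatrix}) with $\hat\ph = \2 \s^z$ the off-diagonal monodromy entries $B$ and $C$ do not depend on the twist, so that only the Bethe roots carry the dependence on $h$, resp.\ on $h'$. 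In the Trotter limit these root sets are described by the auxiliary functions $\fa_0(\cdot|h)$ and $\fa_n(\cdot|h')$ solving (\ref{nlie}), the excited state being singled out by the equivalence class of contours ${\cal C}_n$ (cf.\ \cite{GKKKS17,DGKS15b}).

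Next I would insert $X_{\llb 1, \ell\rrb}$ in the form (\ref{tffseb}), i.e.\ as a product of monodromy matrix entries $T^{\a_k}_{\be_k}$, into the definition (\ref{propforms}) and push the string of these operators through the Bethe vector. Using the exchange relation (iii) of Lemma~\ref{lem:propffs} one first orders the factors so that the $p$ operators with lower index $+$ stand to the left of the remaining ones; for ${\cal F}^{(+)}$ one uses in addition the transposition property (\ref{transprop}). Commuting the resulting string of diagonal and creation/annihilation operators through the reference state by the Yang--Baxter algebra relations yields a finite sum of off-shell Bethe vectors, one for each way of keeping a Bethe root or trading it for a spectral parameter $\x_k$. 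Pairing with the remaining on-shell covector by Slavnov's determinant formula \cite{Slavnov89,BeSl19} turns every summand into an $m\times m$ determinant times a product of rational prefactors, and, following \cite{BoGo09}, the sum over subsets is resummed into a single determinant by elementary row and column operations. The bookkeeping of which index is a plus or a minus produces exactly the sequence $(x_j)_{j=1}^m$ and, after the resummation, the prefactors $F_{x_j}^{\pm}$ together with the Cauchy-type double product in the denominator of (\ref{ffmultint}).

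The third step is the Trotter limit. Here the sums over Bethe roots become contour integrals against the densities $1/(1+\fa_0)$, resp.\ $1/(1+\fa_n)$ --- this is the origin of the measures $\rd m_n^{(\pm)}$ --- while the entries of the resummed determinant become the solutions $G_n^{(\pm)}$ of the linear integral equations (\ref{defg}). The only structural novelty relative to the case $n = 0$ is that bra and ket now carry \emph{different} twists: expanding the normalizing denominator $\prod_j \La_n(\x_j|h')$ of (\ref{propforms}) against the dominant eigenvalue produces the eigenvalue ratios $\r_n(\x_k|h, h')$ in the inhomogeneous terms of (\ref{defg}), and the same mechanism yields the $\r_n$-weights that distinguish $\rd m_n^{(+)}$ from $\rd m_n^{(-)}$ and the two inhomogeneous terms of (\ref{defg}) from each other. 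Finally the integration contour must be deformed from ${\cal C}_n$ to ${\cal C}_n^{(+)}$, resp.\ to ${\cal C}_n^{(-)}$, so as to keep the zeros, resp.\ the poles, of $\r_n(\cdot|h, h')$ outside; this is forced by the requirement that no singularity of the integrand be crossed when the discrete sums are turned into integrals.

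I expect the resummation of the Slavnov determinants in the second step, and the careful control of the positions of the poles and zeros of $\r_n$ relative to the integration contours in the Trotter limit, to be the main obstacles, the rest being a lengthy but essentially mechanical transcription of \cite{BoGo09}. To close the argument I would check that the right-hand side of (\ref{ffmultint}) obeys the normalization and reduction relations of Lemma~\ref{lem:propffs}; together with the exchange relation and the discrete rqKZ equation (\ref{rqKZ}), which it satisfies by construction, this determines all residual constants and identifies it with ${\cal F}_{n;m}^{(\pm)}$.
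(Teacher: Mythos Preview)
Your proposal is correct and follows essentially the same approach as the paper, which does not spell out a proof but simply states that the multiple-integral representations ``can be obtained by closely following Appendix~A of the paper \cite{BoGo09}'' with the excited-state information encoded through the auxiliary function $\fa_n$ and the contours ${\cal C}_n^{(\pm)}$. You have correctly identified the structural modifications relative to the density-matrix case --- the appearance of $\r_n$ in the inhomogeneities of (\ref{defg}) and in the measures, and the need to deform the contour so as to avoid the zeros or poles of $\r_n$ --- which is all the paper itself indicates.
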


\begin{remark}
Specializing the multiple-integral representations
for $m = 1$, $\a_1 = \be_1 = +$, we see that
\begin{equation} \label{fpp1int}
     \tr\bigl\{e^+_+ {\cal F}_{n; 1}^{(\pm)} (\x|h, h')\bigr\} =
        - \int_{{\cal C}_n^{(\pm)}} \rd m_n^{(\pm)} (\la) \: G_n^{(\pm)} (\la,\x) \epp
\end{equation}
Using the technique developed in \cite{BoGo10} we directly reproduce
equation (\ref{fpp1}) from here. An obvious disadvantage of
(\ref{fpp1int}) is that the first order pole in $\a$ is not obvious in
this representation.
\end{remark}
\subsection{Factorization of the double integrals}
The double integrals for the two-site form factors, cf.\ the case
$m = 2$ in (\ref{ffmultint}), factorize by the same token as the
double integrals representing the generalized reduced density matrix
for two lattice sites \cite{BoGo09}. We can literally follow
Section~5 of \cite{BoGo09} to express all two-site form factors
in terms of only two functions defined by single integrals,
\begin{subequations}
\begin{align} \label{defphi}
     & \Phi_n^{(\pm)} (\x) =
        \int_{{\cal C}_n^{(\pm)}} \rd m_n^{(\pm)} (\la) \: G_n^{(\pm)} (\la,\x) \epc \\[1ex]
     & \Psi_n^{(\pm)} (\x_1,\x_2) =
        - \i \int_{{\cal C}_n^{(\pm)}} \rd m_n^{(\pm)} (\la) \:
	   \bigl(q^{\pm \a} \ctg(\la - \x_1 + \i \g) \notag \\ & \mspace{260.mu}
	         - \r_n^{\pm 1} (\x_1|h, h') \ctg(\la - \x_1)\bigr)
	   G_n^{(\pm)} (\la,\x_2) \epp
\end{align}
\end{subequations}
After some algebra we obtain
\begin{multline} \label{twositeff}
     {\cal F}_{n; 2}^{(\pm)} (\x_1, \x_2|h, h') \\ =
        \frac{1}{2(\z - \z^{-1})} \biggl\{
	- \frac{\z \Psi_n^{(\pm)} (\x_1,\x_2) - \z^{-1} \Psi_n^{(\pm)} (\x_2,\x_1)}
	       {q^{1 \pm \a} - q^{- 1 \mp \a}}
	     \begin{pmatrix}
	        -q & & & \\ & q & \z^{-1} & \\ & \z & q^{-1} & \\ & & & - q^{-1}
	     \end{pmatrix} \\
	+ \frac{\Psi_n^{(\pm)} (\x_1,\x_2) - \Psi_n^{(\pm)} (\x_2,\x_1)}
	       {q^{\pm \a} - q^{\mp \a}}
	     \begin{pmatrix}
	        - \z - \z^{-1} & & & \\ & \z + \z^{-1} & q + q^{-1} & \\
		& q + q^{-1} & \z + \z^{-1} & \\ & & & - \z - \z^{-1}
	     \end{pmatrix} \\
	+ \frac{\z^{-1} \Psi_n^{(\pm)} (\x_1,\x_2) - \z \Psi_n^{(\pm)} (\x_2,\x_1)}
	       {q^{1 \mp \a} - q^{- 1 \pm \a}}
	     \begin{pmatrix}
	        -q^{-1} & & & \\ & q^{-1} & \z & \\ & \z^{-1} & q & \\ & & & - q
	     \end{pmatrix} \biggr\} \\
        + \frac{q^{\pm \a} - q^{\mp \a}}{2(\z - \z^{-1})} \biggl\{\mspace{422.mu} \\[-2ex]
	  \frac{\z \Phi_n^{(\pm)} (\x_1) (\Phi_n^{(\pm)} (\x_2) + 1)
	        - \z^{-1} \Phi_n^{(\pm)} (\x_2) (\Phi_n^{(\pm)} (\x_1) + 1)}
	       {q^{1 \mp \a} - q^{- 1 \pm \a}}
	     \begin{pmatrix}
	        -q^{-1} & & & \\ & q^{-1} & \z & \\ & \z^{-1} & q & \\ & & & - q
	     \end{pmatrix} \\
	+ \frac{\z^{-1} \Phi_n^{(\pm)} (\x_1) (\Phi_n^{(\pm)} (\x_2) + 1)
	        - \z \Phi_n^{(\pm)} (\x_2) (\Phi_n^{(\pm)} (\x_1) + 1)}
	       {q^{1 \pm \a} - q^{- 1 \mp \a}}
	     \begin{pmatrix}
	        -q & & & \\ & q & \z^{-1} & \\ & \z & q^{-1} & \\ & & & - q^{-1}
	     \end{pmatrix}
	     \biggr\} \\
	+ \frac{\Phi_n^{(\pm)} (\x_1) - \Phi_n^{(\pm)} (\x_2)}
	       {2(\z - \z^{-1})}
	     \begin{pmatrix}
	        \z - \z^{-1} & & & \\ & \z^{-1} - \z & q - q^{-1} & \\
		& q^{-1} - q & \z - \z^{-1} & \\ & & & \z^{-1} - \z
	     \end{pmatrix}\\ +
	     \begin{pmatrix}
	        - \Phi_n^{(\pm)} (\x_1) & & & \\ & 0 & & \\
		& & 0 & \\ & & & 1 + \Phi_n^{(\pm)} (\x_1)
	     \end{pmatrix} \epc
\end{multline}
where $\z = \re^{\i(\x_1 - \x_2)}$.
\subsection{Connection with the Fermionic basis}
Using the result of the previous subsection we can now establish a
relation with the Fermionic basis introduced in \cite{BJMST08a,%
BJMS09a}, in particular with a remarkable theorem (the JMS theorem) 
proved by Jimbo, Miwa and Smirnov in \cite{JMS08}. In its original
formulation it states that expectation values calculated with
${\cal F}_{0;m}^{(+)}$, which can be interpreted as a generalized
reduced density matrix, factorize and can all be expressed in terms
of only two transcendental functions, the function $\r_0$ entering
the reduction relations (\ref{defrhogen}) and another function $\om$
which in \cite{JMS08} was defined as the expectation value of a product
of two creation operators and was represented by a determinant formula.
The point of view taken in \cite{JMS08} is slightly different from ours
here in that the lattice used in \cite{JMS08} is homogeneous in
`horizontal direction'. In the inhomogeneous case, we can also follow
sections 5.1 and 5.3 of \cite{BJMST08a}, where ground state expectation
values were considered and $\om$ was expressed as\footnote{More
precisely this function was denoted $(\om_0 - \om)(\z_1/\z_2, \a)$
in \cite{BJMST08a}.}
\begin{equation} \label{defom}
     \om (\x_1, \x_2)
        = - \bigl\langle
	     \cv^*_{[1,2]} (\z_2,\a) \bv^*_{[1,2]}(\z_1,\a - 1) (1)
	    \bigr\rangle
\end{equation}
with $\z_j = \re^{\i \x_j}$.

As is clear from the derivation and as was emphasized in \cite{JMS21},
the validity of the JMS theorem extends to a much larger class
of expectation values. It remains true for arbitrary inhomogeneities
in the horizontal spaces and for arbitrary spin-zero states. Hence,
it also applies to the present case with $\om$ being replaced by
\begin{equation} \label{defomff}
     \om_n^{(\pm)} (\x_1, \x_2|h, h')
        = - \tr\bigl\{{\cal F}_{n; 2}^{(\pm)} (\x_1, \x_2|h, h')
	     \cv^*_{[1,2]} (\z_2,\mp \a) \bv^*_{[1,2]}(\z_1,\mp \a - 1) (1)\bigr\}
\end{equation}
and `expectation values' calculated with ${\cal F}_{n; m}^{(\pm)}$
instead of ${\cal F}_{0; m}^{(+)}$.

The construction of the operators $\bv^*_{[1,2]}$ and $\cv^*_{[1,2]}$ is
explained in \cite{BJMST08a}. For the product needed in (\ref{defomff})
we find the explicit expression
\begin{align} \label{c2b11}
     \z^{\a} & \cv^*_{[1,2]} (\z_2, - \a) \bv^*_{[1,2]} (\z_1,- \a - 1) (1) =
        \notag \\[1ex] &
        \biggl( \frac{q^{\a + 1} \z^{-1}}{q^{-1} \z - q \z^{-1}} -
                \frac{q^{- 1 - \a} \z^{-1}}{q \z - q^{-1} \z^{-1}} +
                \frac{q^\a - q^{- \a}}{2} \biggr) \s^z \otimes \s^z
        \notag \\[1ex] & +
        \frac{q^\a - q^{- \a}}{2}
        \biggl( \frac{q \z^{-1}}{q^{-1} \z - q \z^{-1}} -
                \frac{q^{-1} \z^{-1}}{q \z - q^{-1} \z^{-1}} \biggr)
                \bigl( I_2 \otimes \s^z - \s^z \otimes I_2 \bigr)
        \notag \\[1ex] & +
        2 \biggl( \frac{q^\a}{q^{-1} \z - q \z^{-1}} -
                  \frac{q^{- \a}}{q \z - q^{-1} \z^{-1}} \biggr)
                  \bigl( \s^+ \otimes \s^- + \s^- \otimes \s^+ \bigr)
        \notag \\[1ex] & +
        (q^\a - q^{- \a})
        \biggl( \frac{1}{q^{-1} \z - q \z^{-1}} +
                \frac{1}{q \z - q^{-1} \z^{-1}} \biggr)
                \bigl( \s^+ \otimes \s^- - \s^- \otimes \s^+ \bigr) \epc
\end{align}
where $\z = \z_1/\z_2$.

Inserting this into (\ref{defomff}) and calculating the average
with the factorized two-site form factors (\ref{twositeff}) of
the previous section we obtain the following
\begin{lemma}
The functions that, together with $\r_n$ defined in (\ref{defrho}),
determine all form factors ${\cal F}_{n; m}^{(\pm)} (0|h, h')$
through the JMS theorem \cite{JMS08} are
\begin{multline} \label{ompsi}
     \om_n^{(\pm)} (\x_1, \x_2|h, h') = \\
        2 \z^{\mp \a} \Ps_n^{(\pm)} (\x_1, \x_2) + \D _\z \ps(\z, \mp \a)
                       + 2 \bigl( \r_n^{\pm 1} (\x_1|h,h') -
		                  \r_n^{\pm 1} (\x_2|h,h') \bigr) \ps(\z, \mp \a) \epp
\end{multline}
Here we adopted the notation from \cite{BJMST08a},
\begin{equation} \label{defpsi}
     \ps(\z, \a) = \frac{\z^\a (\z^2 + 1)}{2(\z^2 - 1)} \epc
\end{equation}
and $\D_\z$ is the difference operator whose action on a function $f$
is defined by $\D_\z f(\z) = f(q \z) - f(q^{-1} \z)$.
\end{lemma}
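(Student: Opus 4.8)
The plan is to compute the trace on the right-hand side of the definition (\ref{defomff}) by inserting the two explicit formulas already at our disposal: the factorized two-site form factor (\ref{twositeff}) and the explicit operator product (\ref{c2b11}). I would start with the upper-sign case. There the operator to be traced against ${\cal F}_{n;2}^{(+)}(\x_1,\x_2|h,h')$ is $\z^{-\a}$ times the right-hand side of (\ref{c2b11}), and since that right-hand side is a linear combination of the four traceless operators $\s^z\otimes\s^z$, $I_2\otimes\s^z-\s^z\otimes I_2$, $\s^+\otimes\s^-+\s^-\otimes\s^+$ and $\s^+\otimes\s^--\s^-\otimes\s^+$, while (\ref{twositeff}) presents ${\cal F}_{n;2}^{(+)}$ as an explicit sum of numerical $4\times4$ matrices multiplied by scalar coefficients built from $\Ps_n^{(+)}(\x_1,\x_2)$, $\Ps_n^{(+)}(\x_2,\x_1)$, $\Phi_n^{(+)}(\x_1)$, $\Phi_n^{(+)}(\x_2)$ and the product $\Phi_n^{(+)}(\x_1)\Phi_n^{(+)}(\x_2)$, the trace reduces to a finite bilinear pairing of these data. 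I would organize the bookkeeping by collecting the output according to these building blocks.

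The structural point that forces the answer into the claimed shape is the linear relation between $\Phi_n^{(\pm)}$ and $\r_n^{\pm1}$. Comparing the $m=1$, $\a_1=\be_1=+$ specialization (\ref{fpp1int}) of the multiple-integral representation with the closed form (\ref{fpp1}) obtained from the reduction relations yields
\begin{equation*}
     \Phi_n^{(\pm)}(\x) = - \frac{\r_n^{\pm 1}(\x|h,h') - q^{\mp\a}}{q^{\pm\a} - q^{\mp\a}} \epp
\end{equation*}
Using this identity to eliminate every occurrence of $\Phi_n^{(\pm)}$ in favour of $\r_n^{\pm1}$, the quadratic $\Phi\Phi$ contributions reorganize, a large number of terms cancel pairwise, and what remains falls into exactly three groups: the genuinely two-site $\Ps$-terms assemble into $2\z^{\mp\a}\Ps_n^{(\pm)}(\x_1,\x_2)$; the $\Phi$-linear terms, after the substitution, combine into $2\bigl(\r_n^{\pm1}(\x_1|h,h')-\r_n^{\pm1}(\x_2|h,h')\bigr)\ps(\z,\mp\a)$, where one recognizes the rational prefactors in (\ref{c2b11}) as $\ps$-type expressions of argument $q^{\pm1}\z$; and the purely $\z$-dependent part of (\ref{c2b11}) that is independent of the state produces the inhomogeneous term $\D_\z\ps(\z,\mp\a)=\ps(q\z,\mp\a)-\ps(q^{-1}\z,\mp\a)$. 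This reproduces (\ref{ompsi}).

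The lower-sign case follows either by repeating the computation with $\a\to-\a$ and the lower signs in (\ref{twositeff}) and (\ref{c2b11}) throughout, or, more economically, by using the transposition property (\ref{transprop}) together with the commutativity and exchange relations of Lemma~\ref{lem:propffs} to trade ${\cal F}_{n;2}^{(-)}$ for ${\cal F}_{n;2}^{(+)}$ (with $\x_1$ and $\x_2$ interchanged). In both routes the algebra is identical. I expect the main obstacle to be purely the volume of the bookkeeping: one must carry the signs $\pm$, the powers $q^{\pm\a}$ and $q^{\pm1}$ and the monomials $\z^{\pm1}$ simultaneously through a somewhat lengthy trace evaluation and verify that the advertised cancellations really take place, with no conceptual input beyond the $\Phi$--$\r_n$ identity above. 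The entirely analogous $n=0$ computation for the reduced density matrix in Section~5 of \cite{BJMST08a} (and in \cite{BoGo09}) serves as a template that can be followed almost verbatim, the only new feature being the appearance of the excited-state ratio $\r_n^{\pm1}$ in place of $\r_0$.
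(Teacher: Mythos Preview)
Your proposal is correct and follows essentially the same route as the paper: the paper's entire proof is the sentence preceding the lemma, namely to insert the explicit operator (\ref{c2b11}) into (\ref{defomff}) and compute the trace against the factorized two-site form factor (\ref{twositeff}). You spell out in more detail how the bookkeeping organizes itself, correctly identifying the key substitution $\Phi_n^{(\pm)}(\x) = -\bigl(\r_n^{\pm1}(\x|h,h')-q^{\mp\a}\bigr)/\bigl(q^{\pm\a}-q^{\mp\a}\bigr)$ that converts the $\Phi$-terms into $\r_n$-terms, but this is exactly the calculation the paper has in mind.
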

For more details about how the JMS theorem determines
the correlation functions (resp.\ the form factors) and some
examples of short-range correlation functions see
\cite{GKW21,MiSm19}.

\subsection{Example of the energy density correlations}
The form factors of the energy density operator
\[
     {\cal E}/J = 2 \bigl(\s^- \otimes \s^+ + \s^+ \otimes \s^-\bigr)
                  + \tst{\2} (q + q^{-1}) \s^z \otimes \s^z
\]
are the simplest example of form factors that involve $\om_n^{(\pm)}$.
Using (\ref{twositeff}) and (\ref{ompsi}) we obtain them in the form
\begin{multline}
     \lim_{h' \rightarrow h} \lim_{\x_j, \z_k \rightarrow 0} A_n (h, h')
     \tr \bigl\{{\cal E}_{1,2} {\cal F}_{n; 2}^{(-)} (\x_1, \x_2|h, h')\bigr\}
     \tr \bigl\{{\cal E}_{1,2} {\cal F}_{n; 2}^{(+)} (\z_1, \z_2|h, h')\bigr\} = \\
     \frac{J^2 \sh^2 (\g)}{2} \bigl(\6_{h'}^2 A_n (h, h')\bigr)\bigr|_{h' = h}
     \res_{h' = h} \om_n^{(+)} (0,0|h,h') \res_{h' = h} \om_n^{(-)} (0,0|h,h') \epc
\end{multline}
where we have assumed that $n \ne 0$.

Again the case $n = 0$ must be considered separately, since
$\lim_{h' \rightarrow h} A_0 (h, h') = 1$ and the functions
$\om_0^{(\pm)} (0,0|h,h')$ are regular at $h = h'$. Using
(\ref{defreddens}) we conclude that
\begin{multline} \label{ffnzeroenergy}
     \lim_{h' \rightarrow h} \lim_{\x_j, \z_k \rightarrow 0} A_0 (h, h')
     \tr \bigl\{{\cal E}_{1,2} {\cal F}_{0; 2}^{(-)} (\x_1, \x_2|h, h')\bigr\}
     \tr \bigl\{{\cal E}_{1,2} {\cal F}_{0; 2}^{(+)} (\z_1, \z_2|h, h')\bigr\} = \\
     \bigl(\tr \bigl\{{\cal E}_{1,2} {\cal D}_2 (0, 0|h, h)\bigr\}\bigr)^2
     = \lim_{h' \rightarrow h} J^2 \sh^2 (\g) \om_0^{(+)} (0,0|h,h') \om_0^{(-)} (0,0|h,h') \epp
\end{multline}
\begin{remark}
The latter formula provides and opportunity for checking the
consistency of our result. On the one hand we obtain from
(\ref{ompsi}) that
\begin{multline} \label{omzeroexpl}
     - J \sh(\g) \lim_{h' \rightarrow h} \om_0^{(\pm)} (0,0|h,h') = \\
        \frac{J \sh(\g)}{\p}
	   \int_{{\cal C}_0} \frac{\rd \la}{1 + \fa_0 (\la|h)}
	   \bigl(\ctg(\la + \i \g) - \ctg(\la)\bigr)
	   G_0^{(\pm)} (\la, 0)\bigr|_{\a = 0} + J \D \epp
\end{multline}
On the other hand the thermal expectation value
$\bigl\<{\cal E}_{1,2}\bigr\>_T = \tr \bigl\{{\cal E}_{1,2}
{\cal D}_2 (0, 0|h, h)\bigr\}$ can be calculated as
\begin{equation}
     \bigl\<{\cal E}_{1,2}\bigr\>_T = J \6_{J/T} \frac{f(T, h)}{T} + J \D \epc
\end{equation}
where $f$ is the free energy per lattice site which has the
representation
\begin{multline}
     f(T,h) = - T \ln \bigl(\La_0 (0|h)\bigr) = \\
        \frac{h}{2} - T \int_{{\cal C}_0} \frac{\rd \la}{2\p \i} 
	                   \bigl(\ctg(\la) - \ctg(\la + \i \g)\bigr)
			   \ln_{{\cal C}_0} (1 + 1/\fa_0) (\la|h) \epp
\end{multline}
From the nonlinear integral equation (\ref{nlie}) and from
(\ref{defg}) we can infer that
\begin{equation} \label{gfroma}
     G_0^{(\pm)} (\la, 0)\bigr|_{\a = 0} =
        - \frac{1}{2 \i \sh(\g)} \frac{\6_{J/T} \fa_0 (\la|h)}{\fa_0 (\la|h)} \epp
\end{equation}
Then (\ref{omzeroexpl})-(\ref{gfroma}) imply that
\begin{equation}
     - J \sh(\g) \lim_{h' \rightarrow h} \om_0^{(\pm)} (0,0|h,h') =
        \bigl\<{\cal E}_{1,2}\bigr\>_T \epc
\end{equation}
and we have obtained a direct verification of (\ref{ffnzeroenergy}).
\end{remark}
\begin{remark}
Another closely related operator of some interest in applications
\cite{KlSa02,SaKl03} is the heat current operator
\begin{equation}
     {\cal J}^{\cal E} = [{\cal E} \otimes I_2, I_2 \otimes {\cal E}] \epc
\end{equation}
where $I_2$ is the $2 \times 2$ unit matrix. This operator is an
example of a relevant local spin-zero operator that acts non-trivially
on three consecutive lattice sites. Examples of local spin-zero
operators with even larger `support' are densities of the higher
conserved charges generated by the transfer matrix $t_\perp (\la)$
and the corresponding currents \cite{Pozsgay20}.
\end{remark}
\subsection{\boldmath Homogeneous limit and limit $h' \rightarrow h$
in the general case}
As it is clear from (\ref{tffszero}) and from the examples considered
above the calculation of any physical two-point function requires to
take the homogeneous limit $\x_j\rightarrow 0$ and the limit
$h' \rightarrow h$. Instead of $h'$ we may also consider $\g \a =
(h - h')/2T$ as the independent regularization parameter. This
choice is more natural for taking the limit $T \rightarrow 0_+$ 
subsequently as we shall see in the next section. Let $X \in \End
\bigl(({\mathbb C}^2)^{\otimes \ell}\bigr)$, $Y \in \End
\bigl(({\mathbb C}^2)^{\otimes r}\bigr)$ be two arbitrary spin-zero
operators. Then (\ref{tffszero}) implies that
\begin{multline} \label{tffshomouni}
     \bigl\< X_{\llbracket 1, \ell\rrbracket} (t)
             Y_{\llbracket 1 + m, r + m\rrbracket} \bigr\>_T -
     \bigl\< X_{\llbracket 1, \ell\rrbracket}\bigr\>_T
     \bigl\<Y_{\llbracket 1, r\rrbracket} \bigr\>_T = \\[1ex]
	\lim_{N \rightarrow \infty} \;
     \sum_{n > 0} \tst{\2} \6_{\g \a}^2 A_n (h, h')\bigr|_{\g \a = 0}\,
     F_{X; n}^{(-)} F_{Y; n}^{(+)}\, \r_n (0|h, h)^m
     \biggl(
     \frac{\r_n \bigl(- \frac{\i t}{\k N}\big|h, h)}
          {\r_n \bigl(\frac{\i t}{\k N}\big|h, h)}
	  \biggr)^\frac{N}{2} \epc
\end{multline}
where
\begin{subequations}
\begin{align}
     F_{X; n}^{(-)} =
        \lim_{\g \a \rightarrow 0} \g \a \lim_{\x_j \rightarrow 0}
	\tr\bigl\{X {\cal F}_{n; \ell}^{(-)} (\xiv|h, h')\bigr\} \epc \\[1ex]
     F_{Y; n}^{(+)} =
        \lim_{\g \a \rightarrow 0} \g \a \lim_{\x_j \rightarrow 0}
	\tr\bigl\{Y {\cal F}_{n; r}^{(+)} (\xiv|h, h')\bigr\} \epp
\end{align}
\end{subequations}
The latter two functions are determined by the Fermionic basis,
while the first function under the sum in (\ref{tffshomouni}),
$\2 \6_{\g \a}^2 A_n (h, h')\bigr|_{\g \a = 0}$, is the same for
all spin-zero operators. In previous work we obtained explicit
expressions for the eigenvalue ratios \cite{DGKS15b} and for
the functions $\2 \6_{\g \a}^2 A_n (h, h')\bigr|_{\g \a = 0}$
\cite{BGKS21a,BGKSS21} for the XXZ chain in the antiferromagnetic
massive regime in the limit $T \rightarrow 0_+$.

\section{The antiferromagnetic massive regime at zero temperature}
\label{sec:massive_low_T_xxz}
The sum in (\ref{tffshomouni}) is a spectral decomposition of a
dynamical two-point function over the spectrum of the dynamical
quantum transfer matrix. Any further analysis of the sum and of
its Trotter limit requires detailed knowledge of this spectrum.
It is best understood in the cases of high \cite{GGKS20} and low
\cite{DGKS15b,FGK23pp} temperature. In previous work on the
two-point function $\<\s_1^z (t) \s_{m+1}^z\>_T$ of the local
magnetization of the XXZ chain in its antiferromagnetic massive
regime in the limit $T \rightarrow 0_+$ we were able to rewrite
the sum as a sum over multiple integrals with explicit integrands
\cite{BGKS21a}. In that work we used a generating function of
the two-point function rather than the properly normalized form
factors. For this reason the analysis was restricted to
$\<\s_1^z (t) \s_{m+1}^z\>_T$. We shall now see that the previous
work can be extended to the analysis of arbitrary local operators
of spin zero as long as we stay in the antiferromagnetic massive
regime at $T \rightarrow 0_+$.
\subsection{Low-temperature spectrum of the quantum transfer matrix}
In this case the full spectrum of the quantum transfer matrix
was obtained in \cite{DGKS15b}. In the low-$T$ limit the $n$th
excitation is parameterised by two sets of particle and hole
rapidities, ${\cal Y}_n$ and ${\cal X}_n$ and by an `index'
$k \in \{0, 1\}$ that distinguishes two separate `ground-state'
sectors.

For a quantitative description of the sets ${\cal Y}_n$, ${\cal X}_n$
of particle and hole parameters and of the eigenvalue ratios in
the low-temperature limit we need to introduce the dressed momentum
$p$ and the dressed energy $\e$ as well as the dressed phase function
$\ph$ that characterizes the two-particle scattering. In the
present case these can be written in terms of special functions
from the $q$-gamma family, namely in terms of the $q$-gamma function
$\G_{q^4}$ and of the closely related Jacobi theta functions
$\dh_j (\cdot|q)$, $j = 1, \dots, 4$, and their specializations
$\dh_j = \dh_j (0|q)$ (see e.g.\ \cite{BGKS21a} for the definitions).

Dressed momentum and dressed energy as functions of the
rapidity $\la$ are defined as
\begin{align} \label{ptheta4}
      p(\la) & = \frac{\p}{2} + \la
             - \i \ln \biggl(
	       \frac{\dh_4 (\la + \i \g/2| q^2)}{\dh_4 (\la - \i \g/2| q^2)}
	       \biggr) \epc \\[1ex] \label{dressede}
     \e(\la|h) & = \frac h 2 - 2 J \sh (\g) \dh_3 \dh_4
                           \frac{\dh_3 (\la|q)}{\dh_4 (\la|q)} \epp
\end{align}
Note that the dressed energy depends parametrically on the magnetic field
$h$. The condition $\e(\frac{\p}{2}|h_\ell) = 0$ determines the lower
critical field
\begin{equation} \label{hlow}
     h_\ell = 4 J \sh(\g) \dh_4^2 \epc
\end{equation}
i.e.\ the boundary of the antiferromagnetic massive regime $|h| < h_\ell$
at $\D > 1$.

The dressed phase is the function
\begin{equation} \label{dressedphase}
     \ph(\la_1, \la_2) = \i \Bigl( \frac \p 2 + \la_{12} \Bigr)
        + \ln \Biggl\{ \frac{\G_{q^4} \bigl(1 + \frac{\i \la_{12}}{2\g}\bigr)
	                     \G_{q^4} \bigl(\2 - \frac{\i \la_{12}}{2\g}\bigr)}
		            {\G_{q^4} \bigl(1 - \frac{\i \la_{12}}{2\g}\bigr)
			     \G_{q^4} \bigl(\2 + \frac{\i \la_{12}}{2\g}\bigr)}
			     \Biggr\} \epc
\end{equation}
where $\la_{12} = \la_1 - \la_2$, $|\Im \la_2| < \g$.

For differences of sums and ratios of products we introduce
a convenient short-hand notation. Given two finite sets ${\cal X},
{\cal Y} \subset {\mathbb C}$
we associate with any function $f: {\mathbb C} \rightarrow {\mathbb C}$
the expressions 
\begin{equation} \label{defsetsumprod}
     \sum_{\la \in {\cal X} \ominus {\cal Y}} f(\la)
        = \sum_{\la \in {\cal X}} f(\la)
          - \sum_{\la \in {\cal Y}} f(\la) \epc \qd
     \prod_{\la \in {\cal X} \ominus {\cal Y}} f(\la)
        = \frac{\prod_{\la \in {\cal X}} f(\la)}{\prod_{\la \in {\cal Y}} f(\la)} \epp
\end{equation}
In particular, the shift function $F$ with parameters $\cal X$,
$\cal Y$ is defined as
\begin{equation}
    F(\la|{\cal X}, {\cal Y}) = \frac{1}{2 \p \i}
        \sum_{\m \in {\cal Y} \ominus {\cal X}} \ph(\la, \m) \epp
\end{equation}


Our analysis in \cite{BGKS21a} was based on the following
\begin{conjecture}
\label{con:nostrings}
{\bf \boldmath Low-$T$ Bethe root patterns at $0 < h < h_\ell$ \cite{DGKS15b}.}
\begin{enumerate}
\item
All excitations of the quantum transfer matrix at low $T$ and
large enough Trotter number $N$ can be parameterised by an even
number of complex parameters located inside the strip $|\Im \la| < \g/2$
and by an index $\fe \in {\mathbb Z}\slash 2 {\mathbb Z} $.
Referring to \cite{DGKS15b} we call the parameters in the upper
half plane particles, the parameters in the lower half plane holes.
We shall denote the set of particles by $\cal Y$, the set of
holes by $\cal X$.
\item
For pseudo spin zero states the cardinality of the sets of particles
and holes must be equal, $\card {\cal X} = \card {\cal Y} = \nex$.
\item
Up to corrections of the form $T^\infty + a_N (T)$, where
$\lim_{N \rightarrow \infty} a_N (T) = 0$ for every fixed
$T > 0$, the particles and holes are determined by the low-$T$
higher-level Bethe Ansatz equations
\begin{subequations}
\label{hlbaes}
\begin{align}
     \e(y|h) & = 2 \p \i T \bigl(\ell_y + \fe/2 + F(y|{\cal X}, {\cal Y})\bigr) \epc
			     && \forall\ y \in {\cal Y} \epc \\[1ex]
     \e(x|h) & = 2 \p \i T \bigl(m_x + \fe/2 + F(x|{\cal X}, {\cal Y})\bigr) \epc
			     && \forall\ x \in {\cal X} \epc
\end{align}
\end{subequations}
where $\ell_y, m_x \in {\mathbb Z}$ and where the $\ell_y$ and the
$m_x$ are mutually distinct.
\item
For a solution ${\cal X}_n$, ${\cal Y}_n$ of (\ref{hlbaes}) with index
$\fe = \fe_n$ we denote $F (\la|{\cal X}_n, {\cal Y}_n) = F_n (\la)$. The
auxiliary function corresponding to this solution is
\begin{equation} \label{anlowt}
     \fa_n (\la|h) = (-1)^{\fe_n} \re^{- \e(\la - \i \g/2|h)/T + 2\p\i F_n (\la - \i \g/2)}
                      \bigl(1 + {\cal O} \bigl(T^\infty + a_N (T)\bigr)\bigr)
\end{equation}
uniformly for $0 < |\Im \la| < \g$, away from the points $0$, $\i \g$.
\end{enumerate}
\end{conjecture}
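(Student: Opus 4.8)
The natural route to establishing Conjecture~\ref{con:nostrings} is the nonlinear integral equation (NLIE) technique of Klümper, of which \eqref{nlie} is the Trotter limit. The plan is: (1) set up the finite-$N$ Bethe Ansatz and Baxter $T$-$Q$ description of the spectrum of the dynamical quantum transfer matrix $t(\la|h)$ in the pseudo-spin-zero sector, together with the associated finite-$N$ version of the NLIE as in \cite{GKKKS17}; (2) show that in the massive regime $\D>1$, $0<h<h_\ell$, and for $T$ small and $N$ large, every solution is organised as a "frozen sea" of Bethe roots accumulating near shifted copies of the real axis, plus a finite set of roots displaced into the open strip $0<\Im\la<\g/2$ (particles $\cal Y$) and an equal number of missing roots in $-\g/2<\Im\la<0$ (holes $\cal X$); (3) extract from the low-$T$ asymptotics of the auxiliary function both the higher-level equations \eqref{hlbaes} and the closed form \eqref{anlowt}. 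Throughout, the contour ${\cal C}_n$ is fixed by its homotopy class together with the counting condition $\int_{{\cal C}_n}\rd\la\,\fa_n'/(1+\fa_n)=0$, and one must track the two types of error simultaneously: the $a_N(T)$ piece coming from the finite Trotter number and the $T^\infty$ piece coming from the low-temperature expansion.

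For steps (2)--(3) I would first dispose of the dominant state. In the massive regime the dressed energy \eqref{dressede} is real and bounded away from zero on the real axis precisely when $|h|<h_\ell$ --- this is exactly the content of \eqref{hlow} --- so $|\fa_0(\la|h)|$ is exponentially small or exponentially large in $1/T$ along ${\cal C}_0$, $1+\fa_0$ is uniformly $1+{\cal O}(T^\infty)$ there, and the integral in \eqref{nlie} collapses to an explicit expression; comparing the finite-$N$ and Trotter-limit kernels, as in \cite{GKKKS17,Kluemper93}, bounds the $a_N(T)$ corrections uniformly. For an excited state one then writes the NLIE relative to the dominant one: the displaced and missing roots contribute, after resolving the linear part of the equation, a source term equal to $\sum_{\m\in{\cal Y}_n\ominus{\cal X}_n}\ph(\la,\m)=2\p\i F_n(\la)$ built from the dressed phase \eqref{dressedphase}, which yields \eqref{anlowt}. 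Substituting \eqref{anlowt} into the quantisation conditions $\fa_n(\cdot|h)=-1$ at each particle and hole, taking logarithms, and using \eqref{dressede}, one obtains \eqref{hlbaes}, with $\ell_y,m_x\in\mathbb{Z}$ labelling the branches of the logarithm (their mutual distinctness being the Pauli principle for Bethe roots) and with the overall sign $(-1)^{\fe_n}$, i.e.\ the half-integer $\fe/2$, recording the two admissible limits of $\fa_n$ as $\Im\la\to\pm\infty$ --- the index $\fe\in\mathbb{Z}/2\mathbb{Z}$ distinguishing the two massive ground-state sectors. The equality $\card {\cal X}=\card {\cal Y}=\nex$ then follows because pseudo-spin zero fixes the total number of Bethe roots in the sector, so every root displaced upward leaves exactly one hole behind.

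The genuinely hard step --- and the reason this is stated as a conjecture rather than a theorem --- is the \emph{completeness and no-string} claim in items (i)--(ii): one must exclude the appearance of any additional complex structures, in particular strings of Bethe roots leaving the strip $|\Im\la|<\g/2$, and conversely show that each admissible solution of \eqref{hlbaes} is realised by an eigenstate. The strategy I would attempt is a self-consistency argument: feed a hypothetical string configuration into the excited-state NLIE and show that the massive gap in \eqref{dressede} forces the purported string deviations to be ${\cal O}(T^\infty)$ rather than ${\cal O}(T)$, so the centres are pinned inside $|\Im\la|<\g/2$ and no new excitation branch opens; this would have to be combined with a quantitative implicit-function argument proving that, for each admissible choice of integers $\{\ell_y\}$, $\{m_x\}$ and index $\fe$, the system \eqref{hlbaes} has a unique solution depending smoothly on $T$ down to $T=0_+$. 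Making this rigorous requires the monotonicity and convexity properties of the dressed energy and of the shift function established in \cite{DGKS15b,FGK23pp}, together with a careful classification of the contour homotopy classes; this analyticity-and-counting input is where I expect the real difficulty to lie, and in the present work it is taken over from \cite{DGKS15b} as a hypothesis rather than proved anew.
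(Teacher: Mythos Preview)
The paper does not prove this statement at all: it is explicitly labelled a \emph{Conjecture}, attributed to \cite{DGKS15b}, and is adopted as a working hypothesis on which the subsequent low-$T$ analysis of Section~\ref{sec:massive_low_T_xxz} is built. There is therefore no paper proof to compare your proposal against.

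Your sketch is a reasonable outline of the heuristic picture behind the conjecture --- the NLIE-based derivation of \eqref{anlowt} and \eqref{hlbaes} from an assumed particle-hole structure is indeed how these formulae arise in \cite{DGKS15b,GKKKS17} --- and you correctly identify that the genuine obstruction is the completeness/no-string claim in items (i)--(ii), which neither this paper nor its predecessors establishes rigorously. In that sense your final paragraph is accurate: the analytic input you would need (exclusion of strings, bijection between integer data and eigenstates, uniform control in $N$ and $T$) is precisely what is \emph{conjectured} rather than proved, and your self-consistency strategy remains a programme rather than an argument. So your proposal does not constitute a proof, but it does correctly diagnose why the statement carries the status it does in the paper.
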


We would like to emphasize that the above conjecture is formulated
for small $T$ and large enough Trotter number~$N$. The finite Trotter
number corrections to the higher-level Bethe equations and to the
auxiliary function $\fa_n$ are symbolized by ${\cal O} (T^\infty + a_N (T))$,
where, as explained in point (iii), $\lim_{N \rightarrow \infty} a_N (T) = 0$
for every fixed $T > 0$. In the following we shall keep this notation. It is
also implying that the limit $N \rightarrow \infty$ has to be taken before
the limit $T \rightarrow 0_+$. If the limits are taken in this order, the
higher-level Bethe Ansatz equations (\ref{hlbaes}) decouple,
$\i \p \ell_y T$ and $\i \p m_x T$ turn into independent continuous
variables, and the particles and holes become free parameters on the curves
\begin{equation} \label{defbpm}
     {\cal B}_\pm = \bigl\{ \la \in {\mathbb C} \big|
                            \Re \e (\la|h) = 0, - \p/2 \le \Re \la \le \p/2,
			    0 < \pm \Im \la < \g \bigr\} \epp
\end{equation}

The main result of our work \cite{DGKS15b} is an explicit formula
for all eigenvalue ratios $\r_n$ in the low-temperature regime.
According to the above conjecture they must be parameterised by
solutions ${\cal X}_n$, ${\cal Y}_n$ of the higher-level Bethe Ansatz
equations (\ref{hlbaes}). Let $k = \fe_n - \fe_0$ (effectively
meaning that $k \in \{0, 1\}$). Then, for $|\Im \la| < \g/2$, the
corresponding eigenvalue ratios at finite magnetic field
\cite{DGKS15b,DGKS16b} can be expressed as
\begin{equation} \label{evarat}
     \r_n (\la|h, h') = w(\la - \i \g/2|{\cal X}_n, {\cal Y}_n|k)
		    \bigl(1 + {\cal O} \bigl(T^\infty + a_N (T)\bigr)\bigr) \epc
\end{equation}
where
\begin{equation}
     w(\la|{\cal U}, {\cal V}|k) = (-1)^k 
        \exp \biggl\{\i \mspace{-8.mu}
	       \sum_{z \in {\cal U} \ominus {\cal V}}
	            \mspace{-7.mu} p(\la - z + \i \g/2)\biggr\}
	= (-1)^{k} \mspace{-8.mu} \prod_{z \in {\cal U} \ominus {\cal V}}
	           \frac{\dh_1(\la - z| q^2)}{\dh_4(\la - z| q^2)}
\end{equation}
and where we assume that $\card {\cal U} - \card {\cal V}$ is even.
Note that the value of the magnetic field $h'$ enters here through
the particle and hole parameters $z$.

From equation (\ref{evarat}) we obtain the eigenvalue ratios entering
the form-factor series (\ref{tffshomouni}),
\begin{equation} \label{evaratzero}
     \r_n (0|h, h') =
        (-1)^k \exp \biggl\{\i 
	       \mspace{-8.mu}
	        \sum_{z \in {\cal Y}_n \ominus {\cal X}_n}
	       \mspace{-15.mu} p(z) \biggr\}
	       \bigl(1 + {\cal O} \bigl(T^\infty + a_N (T)\bigr)\bigr)
\end{equation}
and
\begin{multline}
        \biggl(\frac{\r_n (- \i t/\ks N|h,h')}
	            {\r_n (\i t/\ks N|h,h')}\biggr)^\frac{N}{2} \\ =
        \exp \bigl\{- (\i t/\ks) \6_\la \ln w(\la|{\cal X}_n, {\cal Y}_n|k)
	            \bigr\}\Bigr|_{\la = - \i \g/2}
		    \bigl(1 + {\cal O} \bigl(T^\infty + a_N (T)\bigr)\bigr) \\[1ex]
     = \exp \biggl\{- \i t \mspace{-14.mu}
                     \sum_{z \in {\cal Y}_n \ominus {\cal X}_n}
	            \mspace{-16.mu} \e(z|h) \biggr\}
		    \bigl(1 + {\cal O} \bigl(T^\infty + a_N (T)\bigr)\bigr) \epp
\end{multline}
\subsection{Low-temperature limit of the universal amplitude}
An explicit expression for the low-temperature limit of the
`universal amplitude' $A_n (h, h')$, was obtained in
\cite{BGKS21a,BGKSS21}. It involves again only well-known special
functions of the $q$-gamma and $q$-hypergeometric families.

We fix two sets ${\cal U} \subset {\mathbb H}_-$ and ${\cal V}
\subset {\mathbb H}_+$ of equal finite cardinality, $\card {\cal U}
= \card {\cal V} = \ell$. Given these sets we define
\begin{equation}
     \Si^{(\a)} ({\cal U}, {\cal V}|k) =
        - \frac{\p k}{2} + \frac{\i \a \g}{2}
	- \2 \sum_{\la \in {\cal U} \ominus {\cal V}} \la
\end{equation}
and certain products involving the $q$-gamma functions $\G_{q^4}$ and
the $q$-Barnes functions $G_{q^4}$,
\begin{subequations}
\begin{align} \label{defphpm}
     \Ph^{(\pm)} (\la) & =
        \re^{\pm \i \Si^{(\a)} ({\cal U}, {\cal V}|k)}
        \prod_{\m \in {\cal U} \ominus {\cal V}}
        \tst{\G_{q^4} \bigl(\2 \pm \frac{\la - \m}{2 \i \g}\bigr)
             \G_{q^4} \bigl(1 \mp \frac{\la - \m}{2 \i \g}\bigr)} \epc \\[1ex]
     \Xi ({\cal U}, {\cal V}) & = \prod_{\la, \m \in {\cal U} \ominus {\cal V}}
        \frac{\G_{q^4} \bigl(\2 + \frac{\la}{2 \i \g}\bigr)
	      G_{q^4}^2 \bigl(1 + \frac{\la}{2 \i \g}\bigr)}
	     {\G_{q^4} \bigl(1 + \frac{\la}{2 \i \g}\bigr)
	      G_{q^4}^2 \bigl(\2 + \frac{\la}{2 \i \g}\bigr)} \epp
\end{align}
\end{subequations}

With $u_j \in {\cal U}$ and $v_k \in {\cal V}$ we associate `multiplicative
spectral parameters' $H_j = \re^{2 \i u_j}$, $P_k = \re^{2 \i v_k}$. In
terms of these we can define the following special basic hypergeometric series
(cf.\ \cite{GaRa04}),
\begin{subequations}
\begin{align}
     \PH_1 (P_k, \a) & = \:
        _{2 \ell} \PH_{2\ell - 1}
           \Biggl(\begin{array}{@{}r} q^{-2},
                     \{q^2 \frac{P_k}{P_m}\}_{m \ne k}^\ell,
                     \{\frac{P_k}{H_m}\}_{m=1}^\ell \\[.5ex]
                     \{\frac{P_k}{P_m}\}_{m \ne k}^\ell,
                     \{q^2 \frac{P_k}{H_m}\}_{m=1}^\ell
                  \end{array}; q^4, q^{4 + 2 \a}
           \Biggr), \\[1ex]
     \PH_2 (P_k, P_j, \a) & = \:
        _{2 \ell} \PH_{2\ell - 1}
           \Biggl(\begin{array}{@{}r} q^6, q^2 \frac{P_j}{P_k},
                     \{q^6 \frac{P_j}{P_m}\}_{m \ne k, j}^\ell,
                     \{q^4 \frac{P_j}{H_m}\}_{m=1}^\ell \\[.5ex]
                     q^8 \frac{P_j}{P_k},
                     \{q^4 \frac{P_j}{P_m}\}_{m \ne k, j}^\ell,
                     \{q^6 \frac{P_j}{H_m}\}_{m=1}^\ell
                  \end{array}; q^4, q^{4 + 2 \a}
           \Biggr) \epp
\end{align}
\end{subequations}
We further define
\begin{equation}
     \Ps_2 (P_k, P_j, \a) = q^{2 \a} r_\ell (P_k, P_j) \PH_2 (P_k, P_j, \a) \epc
\end{equation}
where
\begin{equation}
     r_\ell (P_k, P_j) = \frac{q^2 (1 - q^2)^2 \frac{P_j}{P_k}}
                              {(1 - \frac{P_j}{P_k})(1 - q^4 \frac{P_j}{P_k})}
        \Biggl[\prod_{\substack{m=1 \\ m \ne j, k}}^\ell
               \frac{1 - q^2 \frac{P_j}{P_m}}{1 - \frac{P_j}{P_m}}\Biggr]
        \Biggl[\prod_{m=1}^\ell \frac{1 - \frac{P_j}{H_m}}
	                             {1 - q^2 \frac{P_j}{H_m}}\Biggr] \epc
\end{equation}
and introduce a `conjugation' $\overline f (H_j, P_k, q^\a)
= f(1/H_j, 1/P_k, q^{- \a})$.

The above definitions allow us to introduce a matrix
${\cal M}_\a ({\cal U}, {\cal V}|k)$ with elements
\begin{multline}
     {{\cal M}_\a}_j^i ({\cal U}, {\cal V}|k) =
        \de_j^i \biggl[\overline{\PH}_1 (P_j, \a) -
                        \frac{\Ph^{(-)} (v_j)}{\Ph^{(+)} (v_j)} \PH_1 (P_j, \a)\biggr] \\
        - (1 - \de_j^i )\biggl[\overline{\Ps}_2 (P_j, P_i, \a) -
                               \frac{\Ph^{(-)} (v_i)}{\Ph^{(+)} (v_i)}
                               \Ps_2 (P_j, P_i, \a)\biggr] \epp
\end{multline}
By $\hat {\cal M}_\a ({\cal U}, {\cal V}|k)$ we denote the matrix
obtained from ${\cal M}_\a ({\cal U}, {\cal V}|k)$ upon replacing
$u_j \leftrightharpoons - v_j$ for $j = 1, \dots, \ell$.

Next we define what we call the `universal weight function'
\begin{multline} \label{weightfun}
     {\cal W}_\a^{(2 \ell)} ({\cal U}, {\cal V}|k) = -
        \biggl(\frac{\dh_1'}
	            {2 \dh_1 \bigl(\Si^{(\a)} ({\cal U}, {\cal V}|k)\big|q \bigr)}\biggr)^2
	\det_\ell \biggl(\frac{1}{\sin(u_j - v_k)}\biggr)^2
	   \\[1ex] \times \Xi ({\cal U}, {\cal V})
	\det_\ell \{{\cal M}_\a ({\cal U}, {\cal V}|k)\}
	\det_\ell \{\hat {\cal M}_\a ({\cal U}, {\cal V}|k)\}
	\epp
\end{multline}
This function determines one factor of the universal amplitude in
the low-temperature limit.

Another factor is given in terms of the auxiliary function $\fa_n$.
An off-shell version of this function was defined in \cite{DGKS15b,BGKS21a}.
If $T$ is small enough and $N$ large enough, we may set
\begin{equation} \label{offshelllowt}
     \fa^\pm (\la|{\cal U}, {\cal V}, h) =
        (-1)^\fe \re^{\mp \frac{\e(\la|h)}{T} \pm 2\p \i F(\la|{\cal U}, {\cal V})} \epp
\end{equation}
For arbitrary finite temperatures we may define non-linear integral
equations whose solutions converge to (\ref{offshelllowt}) in the
low-$T$ limit (see Sec.~3.3 of \cite{BGKS21a}). For $T$ small enough
and $N$ large enough it is clear from (\ref{anlowt}) that
\begin{equation}
     \fa_n (\la|h') = \fa^+ (\la - \i \g/2|{\cal X}_n, {\cal Y}_n, h')
                    = 1/\fa^- (\la - \i \g/2|{\cal X}_n, {\cal Y}_n, h')
\end{equation}
up to multiplicative corrections of the form $1 + {\cal O}\bigl(
T^\infty + a_N (T)\bigr)$ uniformly for $0 < \Im \la < \g$ away
from $0$, $\i \g$.

We set
\begin{equation}
     {\mathbb J} ({\cal U}, {\cal V}, h') =
	\begin{pmatrix}
	   \6_{u_k} \fa^- (u_j |{\cal U},{\cal V},h') &
	   \6_{v_k} \fa^- (u_j |{\cal U},{\cal V},h') \\
	   \6_{u_k} \fa^+ (v_j |{\cal U},{\cal V},h') &
	   \6_{v_k} \fa^+ (v_j |{\cal U},{\cal V},h')
	\end{pmatrix} \epp
\end{equation}
At this point everything is prepared to present a result for
the universal amplitude in the low-$T$ limit that follows
from the formulae in \cite{BGKS21a,BGKSS21} upon elementary
manipulations.
\begin{lemma}
Amplitudes in the low-$T$ limit \cite{BGKS21a,BGKSS21}. For the
XXZ chain in the antiferromagnetic massive regime the amplitudes
$A_n (h, h')$, $n \ne 0$, appearing in the form-factor expansion
(\ref{tffshomouni}) of the two-point functions of spin-zero operators
show the low-$T$ asymptotic behaviour
\begin{equation} \label{universala}
     A_n (h, h') = - \biggl(\frac{\dh_1 (\i \a \g|q^2)}{\dh_1' (0|q^2)}\biggr)^2
        \frac{(-1)^\ell {\cal W}^{(2 \ell)}_\a ({\cal X}_n, {\cal Y}_n|k)}
	     {\det_{2 \ell} \{{\mathbb J} ({\cal X}_n, {\cal Y}_n, h')\}}
        \bigl(1 + {\cal O} \bigl(T^\infty + a_N (T)\bigr)\bigr) \epp
\end{equation}
This formula also holds for the `lowest excited state', $n = 1$ say,
for which ${\cal X}_1 = {\cal Y}_1 = \emptyset$ and $k = 1$, if
we agree upon the convention that, for $\ell = 0$, determinants
and products are replaced by $1$. Thus,
\begin{equation}
     A_1 (h, h') = \frac{\dh_1^2 (\i \a \g/2|q)}{\dh_2^2}
        \bigl(1 + {\cal O} \bigl(T^\infty + a_N (T)\bigr)\bigr) \epp
\end{equation}
\end{lemma}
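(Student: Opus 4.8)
The plan is to recognize the amplitude $A_n(h,h')$ of (\ref{ampl}) as the properly normalized, squared matrix element of the identity operator between the dominant state and the $n$th excited state of the quantum transfer matrix, and to evaluate it in the low-temperature regime exactly as was done, in the disguise of a generating function of $\<\s_1^z(t)\s_{m+1}^z\>_T$, in \cite{BGKS21a,BGKSS21}. Concretely: each of the four scalar products entering (\ref{ampl}) is written in determinant form — the off-diagonal pairings $\<\Ps_0(h)|\Ps_n(h')\>$, $\<\Ps_n(h')|\Ps_0(h)\>$ by Slavnov's formula \cite{Slavnov89,BeSl19}, the two norms by the Gaudin formula — and the resulting ratio of (large, $N$-dependent) determinants is then expanded using the low-$T$ Bethe root structure of Conjecture~\ref{con:nostrings}. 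The proof thus amounts to carrying that expansion through and recasting the outcome in the notation ${\cal W}^{(2\ell)}_\a$, ${\mathbb J}$, $\dh_1(\i\a\g|q^2)$ of the statement.

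The central manipulation is as follows. By Conjecture~\ref{con:nostrings} the Bethe roots of $|\Ps_n(h')\>$ consist of a ``dominant sea'' that, up to corrections $T^\infty+a_N(T)$, reproduces the root distribution of $|\Ps_0(h)\>$, plus the finitely many particle and hole rapidities ${\cal Y}_n$, ${\cal X}_n$ constrained by the higher-level Bethe equations and, in the Trotter limit, lying on the curves ${\cal B}_\pm$ of (\ref{defbpm}). Inserting this into the Slavnov and Gaudin determinants, the sea contributions to each determinant reorganize into Cauchy-type factors and into a Fredholm-type functional determinant controlled by the auxiliary functions $\fa_0$, $\fa_n$ of (\ref{anlowt}), (\ref{offshelllowt}), whereas the residual finite block that carries the dependence on the excitation is precisely the $2\ell\times2\ell$ Jacobian $\det_{2\ell}\{{\mathbb J}({\cal X}_n,{\cal Y}_n,h')\}$ in the denominator of (\ref{universala}), recording the change of variables from Bethe roots to particle/hole parameters. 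Resumming the sea contributions together with the higher-level structure, the Cauchy and functional factors organize into the universal weight ${\cal W}^{(2\ell)}_\a({\cal X}_n,{\cal Y}_n|k)$ of (\ref{weightfun}), i.e.\ into the Cauchy determinant $\det_\ell(1/\sin(u_j-v_k))^2$, the $q$-gamma / $q$-Barnes product $\Xi$, and the two $q$-hypergeometric determinants $\det_\ell\{{\cal M}_\a\}$, $\det_\ell\{\hat{\cal M}_\a\}$. Finally, because for $n\ne0$ the pairing $\<\Ps_0(h)|\Ps_n(h')\>$ vanishes as $h'\to h$, the leading small-$\a$ behaviour of the two pairings contributes the prefactor $\bigl(\dh_1(\i\a\g|q^2)/\dh_1'(0|q^2)\bigr)^2$ — which exhibits the double zero at $\a=0$ required by (\ref{tffshomouni}) — together with the sign $(-1)^\ell$ and the theta factor $\dh_1\bigl(\Si^{(\a)}|q\bigr)$ already carried by ${\cal W}^{(2\ell)}_\a$ through (\ref{weightfun}).

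The degenerate case $n=1$ is a direct specialization. There ${\cal X}_1={\cal Y}_1=\emptyset$, $\ell=0$ and $k=\fe_1-\fe_0=1$, so with the stated convention all determinants and all products in (\ref{universala}) and (\ref{weightfun}) equal $1$ and only the prefactors remain. From (\ref{weightfun}) one reads off ${\cal W}^{(0)}_\a(\emptyset,\emptyset|1)=-\bigl(\dh_1'(0|q)/(2\dh_1(\Si^{(\a)}(\emptyset,\emptyset|1)|q))\bigr)^2$ with $\Si^{(\a)}(\emptyset,\emptyset|1)=-\p/2+\i\a\g/2$; the quasi-periodicity relations $\dh_1(u+\p/2|q)=\dh_2(u|q)$ and $\dh_2(u+\p|q)=-\dh_2(u|q)$ give $\dh_1(-\p/2+\i\a\g/2|q)=-\dh_2(\i\a\g/2|q)$, hence ${\cal W}^{(0)}_\a(\emptyset,\emptyset|1)=-(\dh_1'(0|q))^2/(4\dh_2^2(\i\a\g/2|q))$. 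Substituting into (\ref{universala}) and using the doubling identity $\dh_1(2u|q^2)=C\,\dh_1(u|q)\dh_2(u|q)$, with $C$ fixed by differentiation at $u=0$ via $2\dh_1'(0|q^2)=C\,\dh_1'(0|q)\dh_2$, the theta prefactors collapse and one arrives at $A_1(h,h')=\dh_1^2(\i\a\g/2|q)/\dh_2^2$ up to multiplicative $1+{\cal O}(T^\infty+a_N(T))$, as claimed.

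I expect the main obstacle to be the double limiting bookkeeping: first the reduction, at finite Trotter number, of the growing Slavnov and Gaudin determinants to the finite $2\ell$-dimensional Jacobian $\det_{2\ell}\{{\mathbb J}\}$ together with a well-defined functional determinant, and then the closed-form evaluation of that functional determinant in the low-$T$ limit in terms of the $q$-gamma and $q$-hypergeometric functions making up ${\cal W}^{(2\ell)}_\a$ — all of this while controlling the many sign, branch and normalization conventions, in particular the shift of the spectral parameter by $\i\g/2$ and the swap of the measures $\rd m^{(\pm)}$ that distinguish the conventions of \cite{BGKS21a} from those adopted here. Once these are pinned down, the lemma is essentially a transcription of the results of \cite{BGKS21a,BGKSS21}.
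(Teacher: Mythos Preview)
Your proposal is correct and matches the paper's own stance: the paper does not give a proof of this lemma but simply states that it ``follows from the formulae in \cite{BGKS21a,BGKSS21} upon elementary manipulations''; you have spelled out what those manipulations are and, in particular, carried out the $\ell=0$ specialization explicitly via the theta identities, which the paper leaves to the reader.
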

\begin{corollary}
It follows that the derivatives of the amplitudes in 
(\ref{tffshomouni}) are
\begin{equation}
     \tst{\2} \6_{\a \g}^2 A_n (h, h')\bigr|_{\a \g = 0} =
        \frac{(-1)^\ell {\cal W}^{(2 \ell)}_0 ({\cal X}_n, {\cal Y}_n|k)}
	     {\det_{2 \ell} \{{\mathbb J} ({\cal X}_n, {\cal Y}_n, h)\}}
        \bigl(1 + {\cal O} \bigl(T^\infty + a_N (T)\bigr)\bigr) \epc
\end{equation}
where ${\cal X}_n$ and ${\cal Y}_n$ are solutions to the
higher-level Bethe Ansatz equations (\ref{hlbaes}) and $k =
\fe_n - \fe_0$. In particular,
\begin{equation}
     \tst{\2} \6_{\a \g}^2 A_1 (h, h') \bigr|_{\a \g = 0}
        = - \frac{{\dh_1'}^2}{4 \dh_2^2}
        \bigl(1 + {\cal O} \bigl(T^\infty + a_N (T)\bigr)\bigr) \epp
\end{equation}
\end{corollary}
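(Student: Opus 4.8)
The plan is to derive the corollary directly from the closed-form low-$T$ expression (\ref{universala}) for $A_n(h,h')$ by expanding in $\a\g$ about $\a\g = 0$ and reading off the coefficient of $(\a\g)^2$. The key observation is that the entire $\a\g$-dependence of (\ref{universala}) that can generate a low-order zero at $\a\g = 0$ is carried by the explicit prefactor $\bigl(\dh_1(\i\a\g|q^2)/\dh_1'(0|q^2)\bigr)^2$: since $\dh_1(\,\cdot\,|q^2)$ is odd with a simple zero at the origin, $\dh_1(\i\a\g|q^2) = \i\a\g\,\dh_1'(0|q^2) + {\cal O}\bigl((\a\g)^3\bigr)$, so this prefactor equals $-(\a\g)^2 + {\cal O}\bigl((\a\g)^4\bigr)$, and together with the overall minus sign in (\ref{universala}) one obtains
\[
     A_n(h,h') = (\a\g)^2\,R_n(\a\g)\,\bigl(1 + {\cal O}\bigl(T^\infty + a_N(T)\bigr)\bigr) \epc
\]
where $R_n(\a\g)$ is $(-1)^\ell\,{\cal W}^{(2\ell)}_\a({\cal X}_n,{\cal Y}_n|k)/\det_{2\ell}\{{\mathbb J}({\cal X}_n,{\cal Y}_n,h')\}$ times the factor $\bigl(\dh_1(\i\a\g|q^2)/(\i\a\g\,\dh_1'(0|q^2))\bigr)^2$, which is holomorphic and equal to $1$ at $\a\g = 0$. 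One must check that $R_n$ is holomorphic and generically non-zero at $\a\g = 0$: this is clear since the only potentially singular factor inside ${\cal W}^{(2\ell)}_\a$ is $1/\dh_1\bigl(\Si^{(\a)}({\cal X}_n,{\cal Y}_n|k)\big|q\bigr)^2$, whose argument at $\a\g = 0$ is generically not a zero of $\dh_1$, while $\det_{2\ell}\{{\mathbb J}\}$ is generically non-zero.

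Granting this, the Taylor expansion $A_n(h,h') = (\a\g)^2 R_n(0) + {\cal O}\bigl((\a\g)^3\bigr)$ gives $\tst{\2}\6_{\a\g}^2 A_n(h,h')\big|_{\a\g = 0} = R_n(0)$. Setting $\a\g = 0$, i.e.\ $h' = h$, turns ${\cal W}^{(2\ell)}_\a$ into ${\cal W}^{(2\ell)}_0$, turns ${\mathbb J}({\cal X}_n,{\cal Y}_n,h')$ into ${\mathbb J}({\cal X}_n,{\cal Y}_n,h)$, and makes ${\cal X}_n,{\cal Y}_n$ the solutions of the higher-level Bethe Ansatz equations (\ref{hlbaes}) at field $h$, while the extra holomorphic factor is $1$ there; hence $R_n(0)$ is exactly the right-hand side claimed. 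The residual $1 + {\cal O}(T^\infty + a_N(T))$ carries over to the second derivative because, for $T$ small and $N$ large, the corrections behind (\ref{universala}) are holomorphic in $\a\g$ on a fixed complex neighbourhood of the origin, so Cauchy's estimates bound their first and second $\a\g$-derivatives by the same order $T^\infty + a_N(T)$. Verifying this uniform-in-$\a\g$ analyticity — i.e.\ that the estimates inherited from \cite{BGKS21a,BGKSS21} hold on a complex neighbourhood of $\a\g = 0$ rather than only at real $\a$ — is the one step that calls for genuine care, and I expect it to be the main (if mild) obstacle.

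For the special case $n = 1$ one repeats the same elementary expansion on $A_1(h,h') = \frac{\dh_1^2(\i\a\g/2|q)}{\dh_2^2}\,\bigl(1 + {\cal O}(T^\infty + a_N(T))\bigr)$: oddness of $\dh_1$ gives $\dh_1(\i\a\g/2|q) = \i(\a\g/2)\,\dh_1' + {\cal O}\bigl((\a\g)^3\bigr)$ with $\dh_1' = \dh_1'(0|q)$, hence $\dh_1^2(\i\a\g/2|q) = -\tst{\4}(\a\g)^2\,{\dh_1'}^2 + {\cal O}\bigl((\a\g)^4\bigr)$ and therefore $\tst{\2}\6_{\a\g}^2 A_1\big|_{\a\g = 0} = -{\dh_1'}^2/(4\dh_2^2)$ up to the stated corrections. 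As a consistency check, the same value follows from the general formula: for $n = 1$ one has $\ell = 0$, ${\cal X}_1 = {\cal Y}_1 = \emptyset$, $k = 1$, and with the convention that empty determinants and products equal $1$ the general formula reduces to ${\cal W}^{(0)}_0(\emptyset,\emptyset|1) = -\bigl(\dh_1'/(2\dh_1(-\p/2|q))\bigr)^2 = -{\dh_1'}^2/(4\dh_2^2)$, using $\dh_1(\p/2|q) = \dh_2$.
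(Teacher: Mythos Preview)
Your argument is correct and is precisely the intended derivation: the paper states this as an immediate corollary of (\ref{universala}) without spelling out a proof, and what you have written is exactly the Taylor expansion in $\a\g$ that turns (\ref{universala}) into the claimed second derivative, including the consistency check of the $n=1$ case against the general formula via $\dh_1(\pm\p/2|q)=\pm\dh_2$. Your remark that passing the $\6_{\a\g}^2$ through the ${\cal O}(T^\infty+a_N(T))$ correction requires holomorphy of the remainder on a complex neighbourhood of $\a\g=0$ is a fair caveat; the paper takes this for granted.
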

\subsection{\boldmath Properly normalized form factors of spin-zero
operators in the low-$T$ limit -- examples}
So far we have obtained explicit expressions for all functions
occurring under the sum in (\ref{tffshomouni}) except for the
form factors $F_{X; n}^{(-)}$, $F_{Y; n}^{(+)}$ that depend on
the concrete form of the operators $X$ and $Y$. Using the JMS
theorem \cite{JMS08} these form factors can be expressed in terms
of the functions $\om_n^{(\pm)}$ and $\r_n$. There is, however,
no simple closed formula for these form factors. They rather
have to be calculated case by case (see e.g.\ \cite{GKW21,MiSm19}).
Here we only proceed with the simplest examples of spin-zero
operators that were discussed above.

The most elementary case is the case $X = Y = \s^z$. In this
case we may use (\ref{evaratzero}) in (\ref{magpm}) to obtain,
up to multiplicative corrections of the form $1 + {\cal O}\bigl(
T^\infty + a_N (T)\bigr)$,
\begin{subequations}
\begin{align}
     {\cal F}_{\s^z}^{(-)} ({\cal X}_n, {\cal Y}_n|k) & = F_{\s^z; n}^{(-)} =
	(-1)^k \re^{\i \sum_{z \in {\cal X}_n \ominus {\cal Y}_n} p(z)} - 1 \epc \\[1ex]
     {\cal F}_{\s^z}^{(+)} ({\cal X}_n, {\cal Y}_n|k) & = F_{\s^z; n}^{(+)} =
	1 - (-1)^k \re^{- \i \sum_{z \in {\cal X}_n \ominus {\cal Y}_n} p(z)} \epp
\end{align}
\end{subequations}
Note that this includes the particular case
\begin{equation}
     {\cal F}_{\s^z}^{(+)} (\emptyset, \emptyset|1) =
     - {\cal F}_{\s^z}^{(-)} (\emptyset, \emptyset|1) = 2 \epp
\end{equation}

The next simplest case is the case of the magnetic current
operator. In proper units relative to the Hamiltonian it
takes the form
\begin{equation}
     {\cal J} = - 2 \i J (\s^- \otimes \s^+ - \s^+ \otimes \s^-) \epp
\end{equation}
Hence, we may use (\ref{spincurffs}) in (\ref{evarat}) to see that
\begin{subequations}
\begin{align}
     {\cal F}_{\cal J}^{(-)} ({\cal X}_n, {\cal Y}_n|k) & = F_{{\cal J}; n}^{(-)} =
	(-1)^k \re^{\i \sum_{z \in {\cal X}_n \ominus {\cal Y}_n} p(z)}
	\sum_{z \in {\cal X}_n \ominus {\cal Y}_n} \frac{\i \e(z|0)}{2} \epc \\[1ex]
     {\cal F}_{\cal J}^{(+)} ({\cal X}_n, {\cal Y}_n|k) & = F_{{\cal J}; n}^{(+)} =
	(-1)^k \re^{- \i \sum_{z \in {\cal X}_n \ominus {\cal Y}_n} p(z)}
	\sum_{z \in {\cal X}_n \ominus {\cal Y}_n} \frac{\i \e(z|0)}{2} \epc
\end{align}
\end{subequations}
up to multiplicative corrections of the form
$1 + {\cal O}\bigl( T^\infty + a_N (T)\bigr)$. This includes
again the particular case
\begin{equation}
     {\cal F}_{\cal J}^{(+)} (\emptyset, \emptyset|1) =
     {\cal F}_{\cal J}^{(-)} (\emptyset, \emptyset|1) = 0 \epp
\end{equation}

In order to proceed e.g.\ with the energy density operator we
have to consider the low-$T$ limit of the function $\om_n^{(\pm)}$,
resp.\ $\Ps_n^{(\pm)}$, in the antiferromagnetic massive regime
which is not directly available from the literature. However,
using Appendices~A and C of \cite{DGKS16b} we can write down
the linear integral equations that determine the functions
$G_n^{(\pm)}$ in the zero-temperature limit as well as the
corresponding integral representations for $\Ps_n^{(\pm)}$.

For this purpose we introduce the functions
\begin{subequations}
\begin{align}
     \fz_0 (\la) & = \frac{k}{2} + \frac{\g \a}{2 \p \i} + F_n (\la) \epc \\[1ex]
     H^{(\pm)} (\la, \x) & = q^{\mp \a} \ctg(\la - \x - \i \g) - 
        w^{\pm 1} (\x|{\cal X}_n, {\cal Y}_n|k) \ctg(\la - \x) \epp
\end{align}
\end{subequations}
We also define ${\cal G}^{(\pm)} (\la, \x|{\cal X}_n, {\cal Y}_n|k)
= G_n^{(\pm)} (\la + \i \g/2, \x + \i \g/2)$ and well as ${\cal Z}_n^+
= {\cal X}_n$, ${\cal Z}_n^- = {\cal Y}_n$. For short we sometimes
suppress the parameter dependence and write ${\cal G}^{(\pm)} (\la, \x) =
{\cal G}^{(\pm)} (\la, \x|{\cal X}_n, {\cal Y}_n|k)$, as well as
$w(\x) = w (\x|{\cal X}_n, {\cal Y}_n|k)$. Then, up to multiplicative
corrections of the form $1 + {\cal O}\bigl( T^\infty + a_N (T)\bigr)$,
the functions ${\cal G}^{(\pm)} (\la, \x)$ satisfy the following
low-temperature linear integral equations,
\begin{multline}
     {\cal G}^{(\pm)} (\la, \x) =
        q^{\pm \a} \ctg(\la - \x - \i \g) - w^{\pm 1} (\x) \ctg(\la - \x) \\[1ex]
	\pm \sum_{z \in {\cal Z}_n^{\pm}}
	\frac{K_{\mp \a} (\la - z) {\cal G}^{(\pm)} (z, \x)}
	     {(w^{\pm 1})' (z) (\re^{2 \p \i \fz_0 (z)} - 1)}
        - \int_{- \frac{\p}{2}}^\frac{\pi}{2} \frac{\rd \m}{2 \p \i}
	  w^{\mp 1} (\m) K_{\mp \a}(\la - \m) {\cal G}^{(\pm)} (\m, \x) \epc
\end{multline}
and
\begin{multline}
     \i \Ps_n^{(\pm)} (\x_1 + \i \g/2,\x_2 + \i \g/2)
        = \int_{- \frac{\p}{2}}^\frac{\pi}{2} \frac{\rd \la}{2 \p \i}
	   w^{\mp 1} (\la) H^{(\pm)} (\la, \x_1) {\cal G}^{(\pm)} (\la, \x_2) \\[1ex]
	   \mp \sum_{z \in {\cal Z}_n^{\pm}}
	     \frac{H^{(\pm)} (z, \x_1){\cal G}^{(\pm)} (z, \x_2)}
	          {(w^{\pm 1})' (z) (\re^{2 \p \i \fz_0 (z)} - 1)}
           - \bigl(w^{\pm 1} (\x_1) - w^{\pm 1} (\x_2)\bigr) \ctg(\x_1 - \x_2) \epp
\end{multline}

So far we have not been able to solve the integral equations
for ${\cal G}^{(\pm)}$ in the general case and to obtain
fully explicit expressions for $\Ps_n^{(\pm)}$. From a practical
point of view this may not even be necessary as we have seen
in previous work \cite{DGKS16b} that very similar equations can
be efficient for the numerical evaluation of form-factor series.
We would also like to point out that the closely related functions
$\PH_n^{(\pm)}$, Eq.~(\ref{defphi}), have been obtained in closed
form starting directly from the integral equations for
${\cal G}^{(\pm)}$ \cite{BoGo10} and that a somewhat related
equation has indeed been solved explicitly in \cite{JMS11b}.
This keeps us optimistic that a more explicit characterization
of the functions $\Ps_n^{(\pm)}$ is within reach.
\enlargethispage{3ex}

\subsection{Form factor expansion of two-point functions in
the low-temperature limit}
As all excitations of the quantum transfer matrix of the XXZ
chain in the antiferromagnetic massive regime and in the low-$T$
limit are parameterised by sets ${\cal X}_n$, ${\cal Y}_n$ and
an index $k \in \{0, 1\}$, the properly normalized form factors
of any two local operators $X$, $Y$ take the form
${\cal F}_{X}^{(-)} ({\cal X}_n, {\cal Y}_n|k)$,
${\cal F}_{Y}^{(+)} ({\cal X}_n, {\cal Y}_n|k)$. As only their
product appears under the sum in (\ref{tffshomouni}) we introduce
the short-hand notation
\begin{equation}
     {\cal F}_{X, Y}^{(2 \ell)} ({\cal X}_n, {\cal Y}_n|k) = 
        {\cal F}_{X}^{(-)} ({\cal X}_n, {\cal Y}_n|k)
        {\cal F}_{Y}^{(+)} ({\cal X}_n, {\cal Y}_n|k) \epc
\end{equation}
where $\ell = \card {\cal X}_n = \card {\cal Y}_n$. Then, for
instance,
\begin{subequations}
\begin{align} \label{fszsz}
     {\cal F}_{\s^z, \s^z}^{(2 \ell)} ({\cal X}_n, {\cal Y}_n|k) & =
        - 4 \sin^2 \biggl\{ \2
	           \biggl( k \p + \mspace{-16.mu}
		          \sum_{z \in {\cal X}_n \ominus {\cal Y}_n}
			  \mspace{-16.mu} p(z)\biggr) \biggr\}
        \bigl(1 + {\cal O} \bigl(T^\infty + a_N (T)\bigr)\bigr) \epc \\[1ex] \label{fjj}
     {\cal F}_{{\cal J}, {\cal J}}^{(2 \ell)} ({\cal X}_n, {\cal Y}_n|k) & =
        - \4 \biggl(\sum_{z \in {\cal X}_n \ominus {\cal Y}_n}
	            \mspace{-16.mu} \e (z|0)\biggr)^2
        \bigl(1 + {\cal O} \bigl(T^\infty + a_N (T)\bigr)\bigr) \epp
\end{align}
\end{subequations}

In the low-$T$ limit in the antiferromagnetic massive regime the
sum in (\ref{tffshomouni}) becomes a sum over all possible triples
$({\cal X}_n, {\cal Y}_n, k)$ admitted by the higher-level Bethe
Ansatz equations (\ref{hlbaes}). Due to the factors
$\det_{2 \ell} \{{\mathbb J} ({\cal X}_n, {\cal Y}_n, h)\}$ in the
denominator the summands can be interpreted as multiple residues.
Under certain assumptions, which are discussed in detail in
Appendix D of \cite{GKKKS17}, the sum over all excitations becomes
a sum of multiple integrals. For this to become true we have to keep
the Trotter number finite at first instance and perform the Trotter
limit only after turning sums into integrals. In the last step
we can then send $T \rightarrow 0_+$. We refer to \cite{GKKKS17,BGKS21a}
for the details. The result is formulated in the following
\begin{theorem}
The dynamical two-point functions of local spin-zero operators for the
XXZ chain in its antiferromagnetic regime and in the zero-temperature
limit have the form-factor series expansion
\begin{align} \label{tffs}
     & \bigl\< X_{\llbracket 1, \ell\rrbracket} (t)
             Y_{\llbracket 1 + m, r + m\rrbracket} \bigr\> -
     \bigl\< X_{\llbracket 1, \ell\rrbracket}\bigr\>
     \bigl\<Y_{\llbracket 1, r\rrbracket} \bigr\> = \frac{{\dh_1'}^2}{4 \dh_2^2}
        {\cal F}^{(0)}_{X, Y} (\emptyset, \emptyset|1) (-1)^{m+1} \\[1ex]
        & + \sum_{\substack{\ell \in {\mathbb N}^*\\k = 0, 1}}
	  \mspace{-8.mu} \frac{(-1)^{km}}{(\ell !)^2} \mspace{-8.mu}
	     \int_{{\cal C}_h^\ell} \mspace{-3.mu}
	     \frac{\rd^\ell u}{(2\p)^\ell} \mspace{-3.mu}
	     \int_{{\cal C}_p^\ell} \mspace{-3.mu}
	     \frac{\rd^\ell v}{(2\p)^\ell} \:
	     {\cal W}_{0}^{(2 \ell)} ({\cal U}, {\cal V}|k)
	     {\cal F}_{X,Y}^{(2 \ell)} ({\cal U}, {\cal V}|k)
	     \re^{- \i \sum_{\la \in {\cal U} \ominus {\cal V}}
	               (m p(\la) - t \e (\la)) }, \notag
\end{align}
where the integration contours ${\cal C}_p$ and ${\cal C}_h$
may, for instance, be chosen as
\begin{equation} \label{defchcp}
     {\cal C}_p = \tst{[- \frac{\p}{2}, \frac{\p}{2}] + \frac{\i \g}{2}} + \i 0_+ \epc \qd
     {\cal C}_h = \tst{[- \frac{\p}{2}, \frac{\p}{2}] - \frac{\i \g}{2}}  + \i 0_+ \epp
\end{equation}
\end{theorem}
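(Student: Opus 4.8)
The plan is to start from the form-factor series (\ref{tffshomouni}) for the connected two-point function and to feed into it the three pieces of low-temperature data assembled in the preceding subsections. These are: (a) the parameterisation, given by Conjecture~\ref{con:nostrings}, of the excited states $n>0$ of the dynamical quantum transfer matrix by triples $({\cal X}_n,{\cal Y}_n,k)$ satisfying the higher-level Bethe Ansatz equations (\ref{hlbaes}); (b) the explicit eigenvalue ratios (\ref{evarat}) and (\ref{evaratzero}), which turn the factor $\r_n(0|h,h)^m$ and the Trotter factor $\bigl(\r_n(-\i t/\k N|h,h)/\r_n(\i t/\k N|h,h)\bigr)^{N/2}$ into the momentum--energy exponential $\re^{-\i\sum_{\la\in{\cal U}\ominus{\cal V}}(mp(\la)-t\e(\la))}$ together with the sign $(-1)^{km}$, where $({\cal U},{\cal V})=({\cal X}_n,{\cal Y}_n)$; and (c) the low-$T$ expression $\tfrac12\6_{\a\g}^2 A_n\big|_{\a\g=0}=(-1)^\ell{\cal W}_0^{(2\ell)}({\cal X}_n,{\cal Y}_n|k)\big/\det_{2\ell}\{{\mathbb J}({\cal X}_n,{\cal Y}_n,h)\}$ for the universal amplitude, together with the factorisation ${\cal F}_{X,Y}^{(2\ell)}={\cal F}_{X}^{(-)}{\cal F}_{Y}^{(+)}$ of the operator-dependent part, as furnished by the JMS theorem. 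After this substitution the sum over $n>0$ becomes a sum over all admissible triples $({\cal X}_n,{\cal Y}_n,k)$ with $\card{\cal X}_n=\card{\cal Y}_n=\ell\in{\mathbb N}$.

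First I would split off the unique excitation with $\ell=0$, namely ${\cal X}_1={\cal Y}_1=\emptyset$ and $k=1$ (the choice $k=0$ giving $\r_n\equiv1$, i.e.\ the dominant state, which is absent from the sum). With the convention that empty determinants and products equal $1$, the last displayed formulae of the previous section give $\tfrac12\6_{\a\g}^2 A_1\big|_{\a\g=0}=-{\dh_1'}^2/(4\dh_2^2)$, while $\r_1(0|h,h)=-1$ and the energy exponential is trivial, so this term contributes exactly $\tfrac{{\dh_1'}^2}{4\dh_2^2}(-1)^{m+1}{\cal F}_{X,Y}^{(0)}(\emptyset,\emptyset|1)$, which is the first term on the right-hand side of (\ref{tffs}).

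For the remaining terms, with $\ell\ge 1$, I would convert the discrete sum over Bethe root configurations into multiple integrals, following Appendix~D of \cite{GKKKS17} and \cite{BGKS21a}. The mechanism is that $\det_{2\ell}\{{\mathbb J}({\cal X}_n,{\cal Y}_n,h)\}$ in the denominator of the amplitude is, up to an elementary factor, the Jacobian of the map sending the continuous counterparts of the quantisation numbers $\i\p\ell_y T$ and $\i\p m_x T$ of (\ref{hlbaes}) to the rapidities $(u,v)=({\cal X}_n,{\cal Y}_n)$; equivalently, each summand is a multiple residue of a kernel built from the auxiliary function $\fa_n$ of (\ref{anlowt}). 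Keeping the Trotter number $N$ finite, the sum over the mutually distinct integers $\{\ell_y\}$ and $\{m_x\}$ is then a Riemann sum of mesh ${\cal O}(T)$; as $T$ shrinks the admissible rapidities densely fill the curves ${\cal B}_\pm$ of (\ref{defbpm}), the Jacobian cancels the density of states, and the sum converges to $\tfrac{1}{(\ell!)^2}$ times the multiple integral over ${\cal B}_+$ (holes) and ${\cal B}_-$ (particles) with the flat measure $\rd^\ell u\,\rd^\ell v/(2\p)^{2\ell}$, the factor $(\ell!)^{-2}$ compensating the ordering of the integer labels and the sign $(-1)^\ell$ being absorbed into the orientation of the contours. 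Only afterwards would I take the Trotter limit $N\to\infty$, which in this order decouples the higher-level equations and makes the rapidities free parameters on ${\cal B}_\pm$, and finally $T\to 0_+$. Since ${\cal W}_0^{(2\ell)}$, ${\cal F}_{X,Y}^{(2\ell)}$, $p$ and $\e$ are built from $q$-gamma and Jacobi theta functions and are analytic in a neighbourhood of the relevant strips, the curves ${\cal B}_\pm$ may be deformed without crossing singularities onto the straight segments ${\cal C}_h$, ${\cal C}_p$ of (\ref{defchcp}), which yields (\ref{tffs}).

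The hard part is the sum-to-integral step of the previous paragraph. It requires uniform control of the ${\cal O}(T^\infty+a_N(T))$ corrections appearing in (\ref{anlowt}), (\ref{evarat}) and (\ref{universala}), the precise identification of $\det_{2\ell}\{{\mathbb J}\}$ with the density of Bethe root configurations, and the interchange of the infinite sum over $\ell$ and $k$ with the two limits; the delicate point, already stressed after Conjecture~\ref{con:nostrings}, is that the residue-to-integral conversion must be performed at finite $N$, so that the three limiting operations --- sum into integral, $N\to\infty$, $T\to 0_+$ --- are carried out strictly in that order. These technicalities are treated in detail in the cited references, on which I would rely.
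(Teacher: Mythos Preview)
Your proposal is correct and follows essentially the same route as the paper: you start from (\ref{tffshomouni}), insert the low-$T$ data for the spectrum, eigenvalue ratios and amplitude from Conjecture~\ref{con:nostrings}, (\ref{evarat})--(\ref{evaratzero}) and the Corollary after (\ref{universala}), split off the staggered $\ell=0$, $k=1$ term, and then convert the remaining sum over higher-level Bethe solutions into multiple integrals via the residue/Jacobian mechanism at finite Trotter number, deferring the technicalities to Appendix~D of \cite{GKKKS17} and \cite{BGKS21a}, exactly as the paper does. One small slip: with the conventions of Conjecture~\ref{con:nostrings} and (\ref{defbpm}), ${\cal B}_+$ hosts the particles ${\cal Y}$ and ${\cal B}_-$ the holes ${\cal X}$, not the other way around.
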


Recall that ${\cal W}_{0}^{(2 \ell)}$ has been defined in
(\ref{weightfun}) and that $p$ and $\e$ were introduced
in (\ref{ptheta4}), (\ref{dressede}). Inserting (\ref{fszsz})
or (\ref{fjj}) we obtain fully explicit series representations
for the two-point functions of the local magnetization or of
the magnetic current, respectively. The former was first
obtained from a generating function and studied numerically in
\cite{BGKS21a,BGKSS21}. The latter was published without derivation
in \cite{GKSS22}, where it was used in oder to study the optical
conductivity of the XXZ chain that is associated with the
magnetic current.

The $\ell$th term in the series comprises the contribution of
all scattering states of $\ell$ particles and $\ell$ holes to
the correlation functions. We observe a remarkable factorization
of the integrand into the universal weight ${\cal W}_0^{(2 \ell)}$
that is the same for all spin zero-operators and a part
${\cal F}_{X,Y}^{(2 \ell)}$ that does depend on the operators.
The latter consists of two factors that can be obtained from
solutions of a discrete form (\ref{rqKZ}) of a reduced qKZ type
equation and that can be expressed in terms of two functions
$\r_n$ and $\om_n^{(\pm)}$ by means of the JMS theorem \cite{JMS08}.

\section{Conclusions}
\label{sec:conclusions}
Generalizing our work \cite{GKKKS17} we have presented a form-factor
series (\ref{tffsxy}) for the dynamical two-point function at finite
temperature of arbitrary local operators in fundamental Yang-Baxter
integrable models. The summands in the series are composed of eigenvalue
ratios and form factors of a suitably defined dynamical quantum transfer
matrix. We may interpret the series as an evaluation of a lattice
path integral by Bethe Ansatz.

We have then used the series to explore the two-point functions of
spin-zero operators in the XXZ chain. For those we have suggested to
factor out from the summands a universal amplitude (\ref{ampl}), that
does not depend on the operators, and to consider properly normalized
form factors (\ref{geninhomff}). The latter have properties very
similar to generalized reduced density matrices. In particular, they
satisfy reduction relations (\ref{redurel}) and a generalized form
(\ref{rqKZ}) of the discrete rqKZ equation introduced in \cite{AuKl12}.
The reduction relations and the exchange relation (\ref{exchangerel})
are sufficient to calculate the simplest form factors, those of
the magnetization operator and of the magnetic current operator,
explicitly in terms of ratios $\r_n$ of eigenvalues of the quantum
transfer matrix. For the properly normalized form factors of
arbitrary local operators we have presented multiple-integral
representations (\ref{ffmultint}), and we have shown how they are
connected with the Fermionic basis introduced by H. Boos et al.

We have then applied our results to the two-point functions of local
operators of spin zero in the XXZ chain in its antiferromagnetic massive
regime for $T \rightarrow 0_+$. In this case we could take recourse
to our previous work \cite{BGKS21a,BGKSS21} and extract an
explicit formula (\ref{universala}) for the universal amplitude.
In combination with the explicit result (\ref{evarat}) for the
eigenvalue ratios we have obtained the formula (\ref{tffs})
for the general two-point functions in this limit in which
the only input consists of the properly normalized form factors
${\cal F}_X^{(-)}$, ${\cal F}_Y^{(+)}$. For the cases $X, Y
\in \{\s^z, {\cal J}\}$ we have provided explicit expressions as
well.

Our work leaves many directions open for future research. In first
place, we would like to derive explicit expressions for the
functions $\om_n^{(\pm)}$ in the low-$T$ limit. We are confident
that these functions are expressible in terms of known special
function for parameter values inside the antiferromagnetic massive
regime. Another important question will be the generalization of
our results to operators of arbitrary spin. Will we be able to
identify a universal weight function for non-zero spin, and, if
yes, what characterizes the family of weight functions? What
are the corresponding properly generalized form factors and how
are they related to the Fermionic basis? Last but not least, it
will be interesting to extend our work to the massless regime.
In the low-$T$ limit we may resort to our recent analysis
\cite{FGK23pp} which showed that, at least for $0 < \D < 1$,
the low-$T$ Bethe root patterns are of pure particle-hole type
and do not involve any strings. It is encouraging that a similar
fact was crucial for the derivation of the form-factor series
(\ref{tffs}) in the antiferromagnetic massive regime.

\subsection*{Acknowledgement}
The authors would like to thank H. Boos, A. Kl\"umper, F. Smirnov
and A. Wei{\ss}e for helpful discussions. FG and MM acknowledge
financial support by the German Research Council (DFG) in the
framework of the research unit FOR 2316 and through the project
Kl 645/21-1. KKK is supported by the ERC Project LDRAM: ERC-2019-ADG
Project 884584.

\clearpage

\renewcommand{\thesection}{\Alph{section}}
\renewcommand{\theequation}{\thesection.\arabic{equation}}

\begin{appendices}

\section{A proof of Lemma~\ref{lem:staginv}}
\setcounter{equation}{0}
\label{app:proofinvlemma}
\begin{proof}
(i) we note that the regularity of the $R$-matrix (\ref{reg})
implies
\begin{multline} \label{transevalinhom}
     t(\n_\ell|\siv, \nuv, h) = \\
        \begin{cases}
	   \dst{
           \Bigl[\prod_{k \in \llbracket 1, \ell - 1\rrbracket}^\dst{\curvearrowleft}
	   R_{\ell, k}^{(\s_k)} (\n_\ell, \n_k) \Bigr]
	   \th_\ell (h/T)
           \Bigl[\prod_{k \in \llbracket \ell + 1, M\rrbracket}^\dst{\curvearrowleft}
	   R_{\ell, k}^{(\s_k)} (\n_\ell, \n_k) \Bigr]}
	   & \text{if $\s_\ell = 1$}\\[3ex]
	   \dst{
           \Bigl[\prod_{k \in \llbracket \ell + 1, M\rrbracket}^\dst{\curvearrowright}
	   R_{k, \ell}^{(- \s_k)} (\n_k, \n_\ell) \Bigr]
	   \th_\ell^t (h/T)
           \Bigl[\prod_{k \in \llbracket 1, \ell - 1\rrbracket}^\dst{\curvearrowright}
	   R_{k, \ell}^{(- \s_k)} (\n_k, \n_\ell) \Bigr]}
	   & \text{if $\s_\ell = - 1$.}
	\end{cases}
\end{multline}
In the second line we have used that the trace of a matrix equals
the trace of its transpose and that $R_{0,k}^{(\s_k)} (\la, \n_k) 
\mapsto R_{k,0}^{(- \s_k)} (\n_k, \la)$ under transposition in
space $0$.

(ii) From now on we suppress the dependence of the transfer matrices
and the mono\-dromy matrices on $\nuv$ and $\siv$. Let
\begin{equation}
     T_{0; j} (\la|h) =
        \Bigl[\prod_{k \in \llbracket 1, j - 1\rrbracket}^\dst{\curvearrowleft}
	   R_{0, k}^{(\s_k)} (\la, \n_k) \Bigr]
	   \th_0 (h/T)
           \Bigl[\prod_{k \in \llbracket j, M\rrbracket}^\dst{\curvearrowleft}
	   R_{0, k}^{(\s_k)} (\la, \n_k) \Bigr] \epp
\end{equation}
It follows that
\begin{equation} \label{conventionappa}
     T_{0; 1} (\la|h) = T (\la|h)
\end{equation}
and, since $\s_j = 1$ by hypothesis,
\begin{equation} \label{inverseonesite}
     \tr_0 \bigl\{x_0 T_{0; j} (\n_j|h)\bigr\} = x_j t(\n_j|h) \epp
\end{equation}
Here we have used the regularity of the $R$-matrix and
equation~(\ref{transevalinhom}).

(iii) Let $\s_\ell = 1$. Then
\begin{equation} \label{shiftplus}
     t(\n_\ell|h) T_{0; \ell} (\la|h) = T_{0; \ell + 1} (\la|h) t(\n_\ell|h) \epp
\end{equation}
This follows from (\ref{transevalinhom}), from a multiple
use of the Yang-Baxter equation (\ref{ybe}) and from
(\ref{uonesym}). We have utilized, in particular, that
\begin{equation}
     R_{\ell, k}^{(\s_{k})} (\n_\ell, \n_{k})
        R_{0, k}^{(\s_{k})} (\la, \n_{k})
        R_{0, \ell}^{(\s_\ell)} (\la, \n_\ell) =
     R_{0, \ell}^{(\s_\ell)} (\la, \n_\ell)
        R_{0, k}^{(\s_{k})} (\la, \n_{k})
        R_{\ell, k}^{(\s_{k})} (\n_\ell, \n_{k})
\end{equation}
for $k \ne \ell$, $\s_\ell = 1$.

(iv) Let $\s_\ell = - 1$. Then a similar calculation shows that
\begin{equation} \label{shiftminus}
     T_{0; \ell} (\la|h) t(\n_\ell|h) = t(\n_\ell|h) T_{0; \ell + 1} (\la|h) \epp
\end{equation}
This time we have used the second equation (\ref{transevalinhom})
and the Yang-Baxter equation in the form
\begin{multline}
     R_{0, k}^{(\s_{k})} (\la, \n_{k})
        R_{0, \ell}^{(\s_\ell)} (\la, \n_\ell)
	R_{k, \ell}^{(- \s_{k})} (\n_k, \n_\ell) \\ =
     R_{k, \ell}^{(- \s_{k})} (\n_k, \n_\ell)
        R_{0, \ell}^{(\s_\ell)} (\la, \n_\ell)
	R_{0, k}^{(\s_{k})} (\la, \n_{k}) \epc
\end{multline}
holding for $k \ne \ell$, $\s_\ell = - 1$.

(v) If now $t(\n_\ell|h)$ is invertible we conclude with
(\ref{shiftplus}), (\ref{shiftminus}) that
\begin{equation}
     T_{0; \ell + 1} (\la|h) =
        t^{\s_\ell} (\n_\ell|h) T_{0; \ell} (\la|h) t^{- \s_\ell} (\n_\ell|h) \epp
\end{equation}
Hence,
\begin{equation} \label{shiftedmono}
     T_{0; j} (\la|h) =
        \Bigl[ \prod_{k=1}^{j-1} t^{\s_k} (\n_k|h) \Bigr] T_{0; 1} (\la|h)
        \Bigl[ \prod_{k=1}^{j-1} t^{- \s_k} (\n_k|h) \Bigr] \epc
\end{equation}
and equation (\ref{inversestagg}) follows with (\ref{conventionappa}),
(\ref{inverseonesite}) and (\ref{shiftedmono}).
\end{proof}

\section{Spin and pseudo spin}
\setcounter{equation}{0}
\label{app:pseudospin}
The $U(1)$ symmetry of the $R$-matrix (\ref{uonesym}) induces
the conservation of a `pseudo spin' by the $\siv$-staggered
transfer matrix, Section~\ref{sec:sigmastagg}.

Let
\begin{equation}
     \th^{(\s_j)} (\a) = \begin{cases}
			    \th (\a) & \text{if $\s_j = 1$} \\
			    \th^t (- \a) & \text{if $\s_j = - 1$}
                         \end{cases}
\end{equation}
and define the pseudo-spin operator
\begin{equation}
     \Si = \sum_{j=1}^M {\th_j^{(\s_j)}}' (0) \epp
\end{equation}
Then (\ref{uonesym}) implies that
\begin{equation}
     [R_{0, j}^{(\s_j)} (\la, \n_j), \th_0 (\a) \th_j^{(\s_j)} (\a)] = 0 \epp
\end{equation}
Hence,
\begin{equation} \label{pseudospingroup}
     [T (\la|\s, \n, h), \th_0 (\a) \prod_{j=1}^M \th_j^{(\s_j)} (\a)] = 0
\end{equation}
implying that
\begin{equation} \label{pseudospinsym}
     [T (\la|\s, \n, h), \th_0' (0) + \Si] = 0 \epp
\end{equation}

In case of the XXZ chain we have $\hat \ph = \2 \s^z$, and
\begin{equation} \label{pseudospinsstag}
     \Si = \2 \sum_{j=1}^M \s_j \s_j^z \epp
\end{equation}
On the one hand, if we consider (\ref{inhoms}) with $\ell = 0$, we obtain
(\ref{pseudospinop}) as a special case. In this case
(\ref{pseudospinsym}) can be spelled out as
\begin{equation} \label{adpst}
     \ad_\Si T^\a_\be (\la|h) = [\Si, T^\a_\be (\la|h)] =
        \frac{\be - \a}{2} T^\a_\be (\la|h)
\end{equation}
which immediately implies (\ref{psconservation}).
On the other hand, we have the elementary formula
\begin{equation} \label{adse}
     \ad_{\2 \s^z} e^\a_\be = \frac{\be - \a}{2} e^\a_\be \epp
\end{equation}
Comparing (\ref{adpst}) and (\ref{adse}) we obtain
Equation~(\ref{sequalsps}) of the main text. Finally,
taking into account that $\ad_\Si$ is a derivation, we
conclude that
\begin{equation}
     s_p \bigl(T^{\a_1}_{\be_1} \dots T^{\a_m}_{\be_m}\bigr)
        = \2 \sum_{j=1}^m (\be_j - \a_j) \epp
\end{equation}

\end{appendices}


\providecommand{\bysame}{\leavevmode\hbox to3em{\hrulefill}\thinspace}
\providecommand{\MR}{\relax\ifhmode\unskip\space\fi MR }
\providecommand{\MRhref}[2]{%
  \href{http://www.ams.org/mathscinet-getitem?mr=#1}{#2}
}
\providecommand{\href}[2]{#2}

\end{document}